\documentclass[onecolumn,notitlepage,longbibliography]{revtex4-1}
\usepackage{amsthm,amsmath,amssymb}
\usepackage{latexsym,amssymb,verbatim,enumerate,graphicx}
\usepackage{hyperref}
\usepackage{color}


\newcommand{\calA}{\mathcal{A}}
\newcommand{\calB}{\mathcal{B}}
\newcommand{\calC}{\mathcal{C}}
\newcommand{\calD}{\mathcal{D}}
\newcommand{\calE}{\mathcal{E}}
\newcommand{\calF}{\mathcal{F}}

\newcommand{\calH}{\mathcal{H}}
\newcommand{\calI}{\mathcal{I}}
\newcommand{\calJ}{\mathcal{J}}
\newcommand{\calK}{\mathcal{K}}
\newcommand{\calL}{\mathcal{L}}
\newcommand{\calM}{\mathcal{M}}
\newcommand{\calN}{\mathcal{N}}
\newcommand{\calO}{\mathcal{O}}

\newcommand{\calR}{\mathcal{R}}
\newcommand{\calS}{\mathcal{S}}

\newcommand{\calZ}{\mathcal{Z}}
\theoremstyle{plain} 

\newtheorem{thm}{Theorem}[section]
\newtheorem{cor}[thm]{Corollary}
\newtheorem{lem}[thm]{Lemma}
\newtheorem{prop}[thm]{Proposition}
\theoremstyle{definition}
\newtheorem{defn}[thm]{Definition}
\newtheorem{example}[thm]{Example}
\newtheorem{conj}[thm]{Conjecture}
\newtheorem{remark}[thm]{Remark}

\newcommand{\bthm}{\begin{thm}}
\newcommand{\ethm}{\end{thm}}
\newcommand{\blm}{\begin{lem}}
\newcommand{\elm}{\end{lem}}
\newcommand{\bcor}{\begin{cor}}
\newcommand{\ecor}{\end{cor}}
\newcommand{\bex}{\begin{example}}
\newcommand{\eex}{\end{example}}
\newcommand{\bdf}{\begin{defn}}
\newcommand{\edf}{\end{defn}}
\newcommand{\Tr}{\operatorname{Tr}}
\newcommand{\Ker}{\operatorname{Ker}}
\newcommand{\ot}{\otimes}
\def\Span{{\rm span}}
\def\supp{{\rm supp}}

\def\>{\rangle}
\def\<{\langle}

\newcommand{\veps}{\varepsilon}
\begin{document}
\title{An entropic invariant for 2D gapped quantum phases}
\author{Kohtaro Kato}
\affiliation{Institute for Quantum Information and Matter, California Institute of Technology, Pasadena, CA, USA}
\author{Pieter Naaijkens}
\affiliation{JARA Institute for Quantum Information, RWTH Aachen University, Germany}
\affiliation{Facultad de Ciencias Matem{\'a}ticas, Universidad Complutense de Madrid, Spain}
\begin{abstract}
We introduce an entropic quantity for two-dimensional (2D) quantum spin systems to characterize gapped quantum phases modeled by local commuting projector code Hamiltonians. 
The definition is based on a recently introduced specific operator algebra defined on an annular region, which encodes the superselection sectors of the model. 
The quantity is calculable from local properties, and it is invariant under any constant-depth local quantum circuit, and thus an indicator of gapped quantum spin-liquids. 
We explicitly calculate the quantity for Kitaev's quantum double models, and show that the value is exactly same as the topological entanglement entropy (TEE) of the models.
Our method circumvents some of the problems around extracting the TEE, allowing us to prove invariance under constant-depth quantum circuits. 
\end{abstract}
\maketitle
\section{Introduction}
A gapped quantum phase is an equivalence class of the ground states of gapped local Hamiltonians which are connected by an adiabatic path~\cite{ChenGW10}.
Topologically ordered phases~\cite{Wen89} are gapped quantum phases which exhibit topology-dependent ground state degeneracy and anyonic excitations obeying fractional or non-abelian statistics. 
Ground states in topologically ordered phases do not break any symmetry of the system, and therefore these phases cannot be characterized by the conventional methods of symmetry-breaking and local order parameters. 
Moreover, the characteristic topological properties are robust against any local perturbations. It is proposed to utilize these properties to build a fault-tolerant quantum memory/computer~\cite{Kitaev03,Freedman03}. 
For these reasons, characterizing and classifying topologically ordered phases has attracted a great interest in quantum many-body physics and quantum information science.

A characteristic feature of states in topologically ordered phases is the existence of large-scale multipartite correlations. This is in contrast to two-point correlations which decay exponentially with distance for all gapped systems with sufficiently local interactions~\cite{HastingsK06,NachtergaeleS06}.  
The large-scale correlations are characterized by (dressed) closed-string operators which have constant expectation values for arbitrary loops~\cite{LevinW06,PhysRevB.72.045141}. 
However, it is a demanding task in general to find these non-local operators for given gapped models. 
Levin and Wen proposed a way to avoid this problem: quantifying a contribution of these non-local operators by looking the conditional mutual information, a linear combination of the information-theoretical entropy of the reduced states of certain regions~\cite{LevinW06}. 
The conditional mutual information is purely determined by local reduced states of the ground state wave function, and it is indeed possible to calculate analytically or numerically for various systems~\cite{LevinW06,Zhang11,Isakov11,Jiang12}.  
At the same time the entropic contribution is shown to be equivalent to the so-called topological entanglement entropy proposed by Kitaev and Preskill~\cite{KitaevP06} and Levin and Wen~\cite{LevinW06} (see also~\cite{PhysRevA.71.022315}), which is defined as a non-trivial sub-leading term of the area law of the entanglement entropy. 
The topological entanglement entropy is also shown to be equal to the logarithm of the total quantum dimension in specific models~\cite{LevinW06,KitaevP06}, which is solely determined by the corresponding anyon model of the phase. 
According to these results, the conditional mutual information (or more generally, the tripartite information~\cite{KitaevP06}) thus provides an extraction method for the topological entanglement entropy, and a non-zero value has been regarded as a good signature of topological order.  

However, the equivalence between the topological entanglement entropy (in the sense of the constant term or the conditional mutual information) and (the logarithm of) the total quantum dimension breaks down in some gapped systems.
It has been shown that there exist a ground state in the topologically trivial phase that has non-zero constant term in the area law (sometimes called ``spurious'' topological entanglement entropy)~\cite{PhysRevB.94.075151,Williamson18}. 
These counterexamples have some exotic boundary state at the boundaries of particular subregions, which have a non-trivial symmetry-protected topological order (SPT) characterized by e.g., string order parameters~\cite{kennedy1992}. 
Therefore, the conditional mutual information is not always a good indicator of topological orders, and we need additional conditions to guarantee the relation to the total quantum dimension. 
One possible approach to attack this problem is understanding when this phenomenon happens. 
It may be true that the spurious topological entanglement entropy only arises when the boundary has non-trivial SPT, and the value is not stable under deformations of the regions or some local perturbations.  
However, it has been not yet completely understood under what conditions topologically trivial states can have non-trivial entropic contribution to the conditional mutual information. 

In this paper, we take a different approach, by finding another quantity to quantify the entropic contribution of the characteristic non-local correlations only arising in non-trivial topologically ordered phases. 
We require that the quantity is an invariant of gapped phases, and that it vanishes if the system is in the topologically trivial phase. We also require that the quantity is locally calculable, in the sense that it only depends on local properties of the ground state (although it may be intractable or computationally expensive to calculate). 
Moreover, it would be desirable that the quantity represents the genuinely topological part of the conditional mutual information, in the sense that it coincides with the logarithm of the total quantum dimension for known models. 
To find such a quantity, we take an algebraic approach which is motivated by the work of Haah~\cite{Haah16}. 
Haah introduced an algebra of observables supported on an annulus and showed that it has a non-trivial structure (superselection rule) only in topologically ordered phases. 
He constructed an invariant of gapped quantum phases based on the non-trivial algebraic structure which is an analog of the so-called (modular) $S$-matrix (see e.g.~\cite{Verlinde88, Wang} for the definition) characterizing the anyon models behind the topological order. The invariant is defined as an expectation value of a certain product of operators. Here, we consider an entropic function of the reduced state of the ground state to build a connection to the topological entanglement entropy and the conditional mutual information.

To define the entropic quantity, we first identify the algebra of observables $\calE$ that do not create any additional excitations in an annular region.
This algebra includes the algebra introduced in Ref.~\cite{Haah16} as a subalgebra, and the subalgebra decomposes into different components, related to the superselection sectors (or anyon types) of the theory.
To obtain a canonical representation of this algebra on a Hilbert space containing only relevant states, we apply the GNS construction from the theory of $C^*$-algebras to a ground state restricted to this algebra. 
The corresponding GNS Hilbert space can naturally be decomposed into subspaces corresponding to the superselection sectors of the algebra.

Second, we choose the quantum relative entropy $S(\cdot\|\cdot)$  as a particular distance measure and choose a reference state respecting the superselection rule. 
More precisely, for a ground state $|\Omega\rangle$ we can use $\calE$ to define a Hilbert space isomorphic to $\calE |\Omega\>$, and take the completely mixed state $\tau$ on this space as a reference state.
For a given annular region $A$ (used to define $\calE$), we can then trace out the complement to get $\tau_A$, which is simply obtained from the ground state projector of interactions around $A$. 
Our invariant is then given by
\[
\calI(A)_\Omega:=S\left(\rho_A\left\|\tau_A\right.\right)\,,
\]
where $\rho_A$ is the reduced state of the ground state $|\Omega\>\<\Omega|$ on $A$. 
As we will see later, under suitable conditions, this quantity measures the relative dimension of the trivial sector compared to the dimension of the Hilbert space of all sectors. 

We then proceed to show that this is indeed an invariant of gapped phases.
More precisely, we consider constant-depth geometrically local circuits.
These circuits are obtained by applying a constant (in the system size) number of layers, where each layer is given by a tensor product of local unitary operators. 
The unitary evolution corresponding to any gapped path of Hamiltonians can be (approximately) represented by such a circuit~\cite{PhysRevB.72.045141}.
It follows that we can use them to relate the different ground states in the same gapped phase.
Finally, we calculate the invariant for the quantum double models and show the equivalence to the logarithm of the  total quantum dimension. 

Our framework is an extension of that of Haah, and for this reason we have to make the same (or slightly stronger) assumptions as he does.
That is, we assume that the Hamiltonian is of locally commuting projector code (LCPC) type, that the ground states obey the local topological quantum order (LTQO) condition, and that certain ``logical algebras'' are stable under changes of the shape of region that preserve the topology. While the assumptions look strong for general gapped systems, our method is applicable for all models which are in the same phase as at least one fixed-point model satisfying all assumptions.

The structure of this paper is as follows. In Sec.~\ref{sec:setting}, we introduce all assumptions and the operator algebras which we need to define the entropic invariant.
In Sec.~\ref{sec:entropic}, we define the entropic quantity based on these operator algebras and show the invariance under any constant-depth local circuit. 
We also calculate the quantity for the toric code~\cite{Kitaev03}, the simplest quantum double model.
We finally discuss a relation between our quantity and the original topological entanglement entropy in Sec.~\ref{sec:TEE}. 
In the Appendix, we explicitly calculate the quantity for general quantum double models and also discuss the Fibbonacci model, which cannot be described by the quantum double models.
We also recall some background material on the GNS construction and make a comparison to the sector analysis in the thermodynamic limit.

\section{Formal setting and assumptions}\label{sec:setting}
Throughout this paper, we will consider quantum spin systems arranged on a two-dimensional lattice. 
We first introduce some notation.
For simplicity, we will in particular consider a square lattice $\Lambda_L$ of linear size $L$ (hence the total number of sites $N=\calO(L^2)$) composed of $d$-dimensional quantum systems occupying every site, with $d<\infty$. 
We denote the corresponding Hilbert space on $\Lambda_L$ by $\calH$. The Hilbert space associated to all spins in a subregion $A\subset\Lambda_L$ is denoted by $\calH_A$. We call a subregion including all spins within a circle with radius $r$ a disc (or a ball) of size $r$, and denote it by $b(r)$.  An important type of subregions is an \emph{annulus}, which is defined as $b(R)\backslash b(r)$ for $r<R$, where $b(R)$ and $b(r)$ share the same center. We will denote $b(R)^c$, the complement of $b(R)$, by $D_{out}$ and the inner disc $b(r)$ by $D_{in}$ (see Fig.~\ref{fig:Annulus}). 

We say a bounded operator $O\in\calB(\calH)$ has support $A\subset \Lambda_L$, or $O$ is supported on $A$, if $O=O_A\ot I_{A^c}$, where $O_A\in\calB(\calH_A)$ and $I_{A^c}$ is the identity operator on $\calB(\calH_{\Lambda\backslash A})$. We also denote the support of $O$ by $\supp(O)$. We consider a geometrically local Hamiltonian $H$ on $\calH$,
\begin{equation}
H=-\sum_{j}h_j\,,
\end{equation}
such that each $h_j$ is supported on  $b(w)$ with $w>0$ containing the spin $j$ at the center, with $w$ independent of $L$. 
For a region $X\subset\Lambda_L$, we denote $X_+:=\bigcup_{\supp(h_j)\cap X\neq\emptyset}\supp(h_j)$ (Fig.~\ref{fig:Annulus}).

\begin{figure}[htbp]  
\begin{center}
\includegraphics[width=7.5cm]{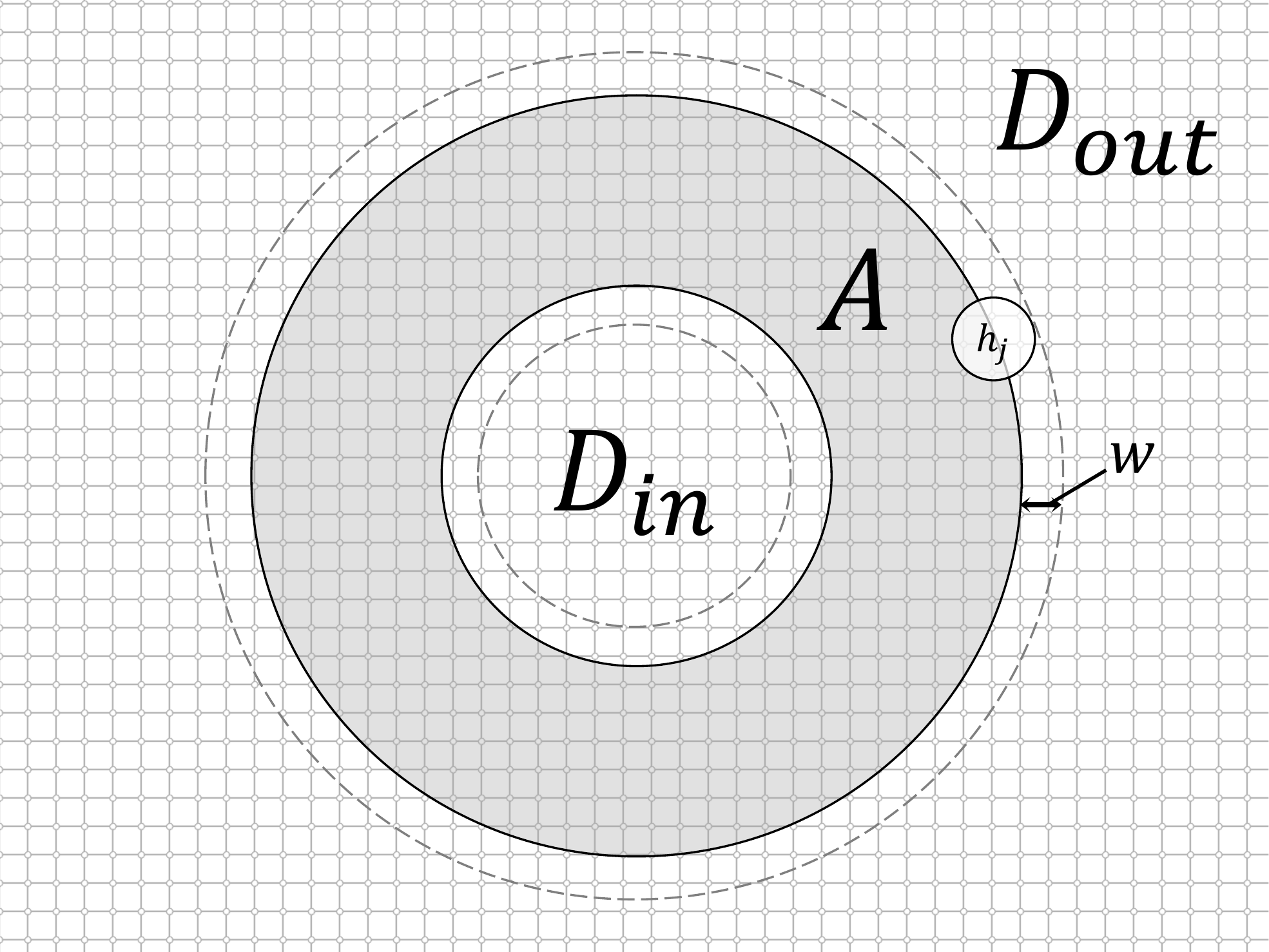}
\end{center}
\vspace{-4mm}
\caption{An annulus $A$ defined by the solid line boundaries on a square lattice (gray region). $D_{out}$ is the outer side of $A$ and $D_{in}$ is the inner side of $A$.  
The dotted line represents the boundary of larger annulus $A_+$, which includes all supports of $h_j$s overlapping with $A$. }
\label{fig:Annulus}
\end{figure}

\subsection{Assumptions on the Hamiltonian}
We assume that the Hamiltonian is a local commuting projector code (LCPC), that is, every $h_j$ is a projector $h_j^2=h_j = h_j^\dagger$ satisfying $[h_j,h_{j'}]=0$ for any $j'$. 
We further assume that it is frustration-free in the sense that every $h_j$ satisfies
\begin{equation}\label{assum:ff}
h_j|\psi\>=|\psi\>\,
\end{equation}
for any ground state $|\psi\>$ of $H$.
Hence the ground states minimize the energy of each term in the Hamiltonian individually.
Kitaev's quantum double models~\cite{Kitaev03} (including the famous toric code model), and Levin-Wen models~\cite{LevinW05} (which describe a wide variety of non-chiral 2D topologically ordered phases) are examples satisfying these conditions. 

We also require an additional condition on the Hamiltonian, called the \emph{local topological order condition} (LTQO)~\cite{MichalakisZ13}: 
we assume there exists an integer $L^*\leq L$ which scales with $L$ such that the following condition holds. 
\begin{itemize}
\item (LTQO): For any disc $X$ of size $r \leq L^*$, let $O_X$ be any operator acting on $X$ and let $\Pi_{X_+}$ be the projector onto the ground subspace of 
\begin{equation}
H_X=\sum_{\supp(h_j)\subset X_+}h_j\,
\end{equation} 
which is defined on $\calH$ (i.e., it is an operator on the whole lattice). 
Then 
\begin{align}\label{eq:newLTQO}
\Pi_{X_+}O_X\Pi_{X_+}=c(O_X)\Pi_{X_+}\,,
\end{align}
where 
\begin{equation}
c(O_X)=\frac{\Tr(\Pi_{X_+}O_X)}{\Tr\Pi_{X_+}}\,.
\end{equation}
\end{itemize}
Note that because we consider LCPC Hamiltonians we can set $\Delta_0(\ell) = 0$ in the notation of~\cite{MichalakisZ13}. 
LTQO is known to be a sufficient condition for the stability of the spectral gap of general frustration-free local Hamiltonians under local perturbations~\cite{MichalakisZ13}.  LTQO implies the following two additional properties~\cite[Cor.2]{MichalakisZ13}:
\begin{itemize}
\item (TQO-1): 
For any disc $X$ of size $r \leq L^*$, let $O_X$ be any operator acting on $X$. Then
\begin{equation}\label{def:LTQO1}
\Pi O_X\Pi=c(O_X)\Pi
\end{equation}
for $c(O_X)$ defined in the above, where $\Pi$ is the projector onto the ground subspace of $H$. 
\item (TQO-2): 
For any disc $X$ of size $r \leq L^*$, let $O_X$ be any operator acting on $X$ such that $O_X\Pi = 0$. Then
\begin{equation}\label{def:LTQO2}
O_X\Pi_{X_+}=0\,.
\end{equation}
\end{itemize} 
These conditions (which are also called local topological order conditions) are used to show the stability of the spectral gap for LCPC Hamiltonians~\cite{BravyiHM10,BravyiH11}. TQO-1 says that local observables cannot map distinct ground states to each other. 
TQO-2 guarantees that the ground subspace of local region is consistent with that of the whole system. Note that TQO-1 and TQO-2 implies~\cite[Cor.1]{BravyiH11}
\begin{equation}
\Pi_{(X_+)_+}O_X\Pi_{(X_+)_+}=c(O_X)\Pi_{(X_+)_+}\,,
\end{equation}
which is slightly weaker than LTQO condition~\eqref{eq:newLTQO}.

To understand the meaning of LTQO, the following equivalent condition will be useful~\cite[Cor.3]{MichalakisZ13}:
\begin{itemize}
\item (LTQO'): Suppose $|\Omega\>$ is a ground state of $H$ and $X$ is a disc of size $r\leq L^*$. For any $|\phi\>$ such that 
$h_j|\phi\>=|\phi\>$ for all $h_j$ such that $\supp(h_j)\subset X_+$, 
\begin{equation}\label{eq:ltqo22}
\Tr_{X^c}|\phi\>\<\phi|=\Tr_{X^c}|\Omega\>\<\Omega|\,.
\end{equation}
\end{itemize}

Perhaps the best known example of a model that satisfies these assumptions is Kitaev's toric (surface) code~\cite{Kitaev03}.
We will use the example of the toric code throughout this paper to illustrate the new definitions. 
\bex The toric code  is defined on a square lattice on a torus, where a site with local dimension $d=2$ is located on each edge. 
The Hamiltonian is given by
\begin{align}
H&=-\sum_{v} \frac{1}{2}\left(I+A_v\right)-\sum_{p}\frac{1}{2}\left(I+B_p\right)\\
&\equiv-\sum_jh_j\,,
\end{align}
where $A_v=X_{v_1}X_{v_2}X_{v_3}X_{v_4}$ around vertex $v$ (times identities on all other sites) and $B_p=Z_{p_1}Z_{p_2}Z_{p_3}Z_{p_4}$ around plaquette $p$ (see Fig.~\ref{fig:toric1}). 
Here $X$ and $Z$ are the usual Pauli matrices. 
It is easy to check that all terms in the Hamiltonian are projectors and  mutually commute. 

\begin{figure}[htbp] 
\begin{center}
\includegraphics[width=6.0cm]{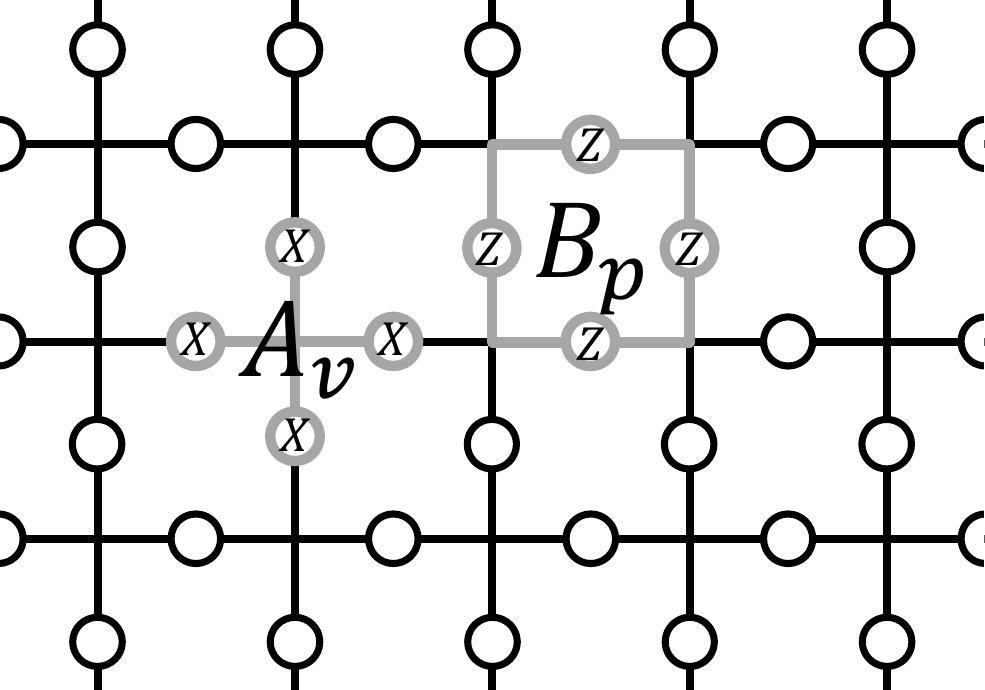}
\end{center}
\vspace{-4mm}
\caption{The interaction terms of the toric code Hamiltonian defined on a square lattice. Each $A_v$ acts on four sites around vertex $v$ and $B_p$ acts on four sites around plaquette $p$.  }
\label{fig:toric1}
\end{figure}

One characteristic feature of ground states of the toric code model is invariance under the actions of closed-string (loop) operators. 
For any path $C$ on the lattice, we define a $Z$-string operator $W_Z(C)$ as a tensor product of Pauli $Z$ operators acting on all spins along $C$. 
In the same way, we can define an $X$-string operator $W_X({\tilde C})$ along a string ${\tilde C}$ on the dual lattice (dual string).  
One can freely deform $W_Z(C)$ ($W_X({\tilde C})$) by applying $A_v$ $(B_p)$ operators neighboring the string, since a product of two identical Pauli operators is the identity. 
When $C$ is a contractible closed string on the manifold on which the lattice is defined, $W_Z(C)$ can be written as a product of all $B_p$ operators supported within the region enclosed by the loop. 
Therefore, any ground state of the toric code model is invariant under the actions of these $Z$-string operators (a similar relation holds for $X$-string operators on dual loops). 

Excitations are created by operators $W_Z(C)$ (or $W_X(\widetilde{C})$) for \emph{open} paths $C$ ($\widetilde{C})$.
Indeed, it is easy to see that $\{A_v, W_Z(C)\} = 0$ if the vertex $v$ is based at one of the endpoints of $C$, and the operators commute otherwise.
Hence if $h_i |\psi\> = |\psi\>$, where $h_i$ is the term containing $A_v$ for the endpoint of $C$, then $h_i W_Z(C) |\psi\> = 0$, and hence it is an excited state.
This state can be understood to have a pair of anyons located at the endpoints of $C$.
If there are no other excitations, the state does not depend on the path $C$, only on its endpoints.
The argument is the same as for the closed loop case.
The case of dual paths is completely analogous, only there the endpoints are located on the plaquettes.
These localized anyons on vertexes/plaquettes are labeled by elements of a finite set $\calL$ (charges, or superselection sectors), which always includes the vacuum (no excitation) denoted  by $1$. An excitation on a vertex is labeled by $e$, and an excitation on a plaquette is labeled by $m$. A pair of $e$ and $m$ on neighboring vertex and plaquette can be treated as another charge labeled by $\veps$. $\calL=\{1,e,m,\veps\}$ contains all possible types of excitations in the toric code model. 
\eex

\subsection{Logical algebras and sectors}\label{sec:logicalalgebra}
The central objects in this paper are operator algebras defined for an annular region $A\subset \Lambda_L$. We restrict the size of the annulus $R$ to $R\leq L^*$, where $L^*$ is defined as in the LTQO condition.
In quantum error correction theory, an operator is called a logical operator if it acts non-trivially on the ground subspace (the code subspace) while commuting with all interaction terms of the Hamiltonian. 
In a similar way, we consider a set of logical operators on $\Lambda_L$ relative to $A$ which we will denote by $\calE$: 
\begin{equation}\label{def:algE}
\calE:=\left\{O\in \calB(\calH_L)\left|[O,h_j]=0  \quad {\rm if }\;\supp(h_j)\subset A_+ \right.\right\}\,.
\end{equation}
Note that $\calE$ is the set of operators that do not create any excitations in the annulus or at the boundary (but may do so outside of $A$). 
Some distinct operators in $\calE$ act identically on the ground states of $H_{A_+}$. 
To get rid of these degeneracies, we factor out $\calE$ by 
\begin{equation}\label{def:algN}
\calN:=\left\{O\in \calE\left| O\Pi_{A_+}=0\right.\right\},
\end{equation}
where $\Pi_{A_+}$ is the projector as in LTQO.
Note that $\calN$ is an additive subgroup of $\calE$ and closed under the product in $\calE$. 
Furthermore, for any $a\in \calE$ and $b\in \calN$, $ab\Pi_{A_+}=a\cdot0=0$ and $ba\Pi_{A_+}=b\Pi_{A_+}a=0\cdot a=0$, since $[a, \Pi_{A_+}]=0$ by definition. 
Hence $\calN$ is a two-sided ideal of $\calE$ and $\calE/\calN$ is a $C^*$-algebra. 
We note that since we are in finite dimensions, a $C^*$-algebra is just a direct sum of matrix algebras, or alternatively, an algebra of block-diagonal matrices.

The effect of dividing out $\calN$ is that we are left with an algebra acting faithfully on the set of states that look like the ground state on $A_+$.
More precisely, suppose that $A |\psi\> = B |\psi\>$ for some $A,B \in \calE$ and all states $|\psi\>$ that reduce to a ground state of $H_{A+}$ on $A_+$.
Then it follows that $(A-B) \Pi_{A_+} = 0$, and hence $[A] = [B]$ in $\calE/\calN$.

In Ref.~\cite{Haah16}, Haah introduced charges (types of particles) within the hole ($D_{in}$) by considering logical operators supported on the annulus $A$, which generate a subalgebra of $\calE/\calN$ in our notation. 
Let us denote a set of logical operators on $A$ by
\begin{equation}\label{def:algA}
\calA:=\left\{O\in \calE\left|\, \supp(O)\subset A\right.\right\}\,
\end{equation}
and factor it out by $\calN_A:=\calN\cap \calA$. The quotient $\calA/\calN_A$ is then a $C^*$-algebra in the same way as $\calE/\calN$. 
Intuitively, $\calA/\calN_A$ is the algebra of ribbon-like loop operators (Wilson loop operators). These quotient algebras faithfully represent the actions onto the ground subspace of $H_A$.  Actually, we have $\calA/\calN_A\cong \Pi_{A_+}\calA\Pi_{A_+}$ via an isomorphism $[O]\mapsto \Pi_{A_+}O\Pi_{A_+}$.

We will now show that the algebra of logical operators $\calE/\calN$ is isomorphic to a full matrix algebra on a finite-dimensional Hilbert space.
Remember that any finite-dimensional $C^*$-algebra can be decomposed into direct sum of matrix algebras~\cite[Thm. I.11.2]{TakesakiI}. 
First note that $\calZ(\calE)$, the center of $\calE$, is generated by $\{h_j|\supp(h_j)\cap A\neq \emptyset\}$, together with the identity.
Because the operators in $\calE$ that are supported outside of $A_+$ generate a full matrix algebra (which has trivial center), it is enough to consider only algebras supported on $A_+$.
Let $M_k(\mathbb{C})$ be the algebra of all such operators, and choose a projector $h_1$ from the Hamiltonian which is supported in $A_+$.
Then the commutant of $h_1$ in $M_k(\mathbb{C})$ is isomorphic to $h_1 M_k(\mathbb{C}) h_1 \oplus (1-h_1) M_k(\mathbb{C}) (1-h_1)$.
Continuing inductively with the other projections $h_2, h_3, \ldots$, using that they mutually commute, we can find $\calE \cap M_k(\mathbb{C})$ and find that its center is indeed generated by the $h_i$ and the identity. 
Now note that 
\begin{equation}
(1-h_j)\Pi_{A_+}=0\;
\end{equation}
for any $h_j$ such that $\supp(h_j)\subset A_+$. It follows that all elements in $\calZ(\calE)$ are in the equivalence class $[1]\in \calE/\calN$. This implies $\calE/\calN$ has trivial center, and therefore $\calE/\calN$ is isomorphic to a full matrix algebra.  

However, $\calA/\calN_A$ may have non-trivial center and we can decompose it into a direct sum of ``superselection sectors''
\begin{equation}\label{eq:SSRdecoA}
\calA/\calN_A=\bigoplus_{a\in\calL}P_a(\calA/\calN_A)P_a\,,
\end{equation}
where $\calL$ is a finite label set and $\{P_a\}$ are the orthogonal projections satisfying $\bigoplus_aP_a=1_{\calA/\calN_A}$. Note that $\calA/\calN_A$ is naturally embedded in $\calE/\calN$ as a subalgebra, since $\calN_A\subset\calN$.  Haah identified the possible charges in $D_{in}$ as labels $\{a\}$ of these sectors. The projectors $P_a$ are then (the equivalence class of) projective measurement operators which measure the total charge that $D_{in}$ has. The label set is finite, and there always is a distinctive label denoted by ``$1$'' such that $P_1|\Omega\>=|\Omega\>$  for any ground state $|\Omega\>$ of $H$. See Ref.~\cite{Haah16} for more details. 

\bex \label{ex:toric E/N}
(Toric code) The algebra $\calA/\calN_A$ has been explicitly calculated for the toric code in Ref.~\cite{Haah16}. In our notation,
\begin{align}
\calA/\calN_A&=\Span\left\{[I], [W_Z(C)], \left[W_X({\tilde C})\right], \left[W_Z(C)W_X({\tilde C})\right]\right\}\\
&=\bigoplus_{a\in\calL}c_aP_a\,, \quad c_a\in{\mathbb C},
\end{align}
where $C$ $({\tilde C})$ is a (dual) loop operator wrapping the annulus once, and $a=1,e,m,\veps$. Since the path operators square to the identity, it is easy to see that the span indeed defines a $C^*$-algebra. 
The orthogonal projectors $P_a$ are $\frac{1}{4}\left([I]\pm \left[W_Z\left(C\right)\right]\right)([I]\pm[W_X({\tilde C})])$ where the signs are determined by the charges.

To specify the set $\calE$, first recall that any $O\in B(\calH)$ can be expressed in a product Pauli basis as
\begin{equation}
O=\sum_{i_1,...,i_N}c_{i_1...i_N}(X^{i_1^1}\ot\ldots\ot X^{i_N^1})(Z^{i_1^2}\ot\ldots\ot Z^{i_N^2})\,
\end{equation}
with $i_k=(i_k^1,i_k^2)\in\{0,1\}\times\{0,1\}$ and $\sigma_{i}=X^{i^1}Z^{i^2}$. It is clear that $[O, A_v]=0$ if and only if $[Z^{i_1^2}\ot\ldots\ot Z^{i_N^2}, A_v]=0$ for all $(i_1,...,i_N)$ such that $c_{i_1...i_N}\neq0$, since otherwise all nonzero terms are linearly independent and do not vanish. 
We call an operator like $Z^{i_1^2}\ot\ldots\ot Z^{i_N^2}$ a \emph{pattern} of $Z$. The same argument holds for $B_p$ and patterns of $X$. 
Therefore it holds that
\begin{equation}
[O,h_j]=0 \Leftrightarrow O\in \Span\left\{(X^{i_1^1}\ot\ldots\ot X^{i_N^1})(Z^{i_1^2}\ot\ldots\ot Z^{i_N^2}) \left| \left[X^{i_1^1}\ot\ldots\ot X^{i_N^1},h_j\right]=\left[Z^{i_1^2}\ot\ldots\ot Z^{i_N^2},h_j\right]=0\right.\right\}\,,
\end{equation}
i.e., $O\in \calE$ if and only if $O$ is in the span of patterns (and their products) of $X$ and $Z$ which commute with all $h_j$ with support overlapping with $A$. 
We can always represent these patterns by $X$- and $Z$-strings (or loops) with no endpoints in and around $A$. 
These string operators or loop operators generating $\calE$ can be classified as $(i)$ loops (no endpoints), $(ii)$ strings with both endpoints in $D_{in}$, or $D_{out}$ and $(iii)$ strings connecting $D_{in}$ and $D_{out}$. 

By dividing $\calE$ by $\calN$, any two elements which can be transformed from one to the other by applying $A_v$ or $B_p$ with support overlapping $A$ are the same. 
Loop operators supported on $A$ are swiped out from the annulus by applying these vertex or plaquette operators, and general loop operators are products of these. 
The representatives of the generators of $\calE/\calN$ are then classified as $(i')$ strings and loops supported either in $D_{in}$ or $D_{out}$ and $(ii')$ strings connecting  $D_{in}$ and $D_{out}$.
\eex

The decomposition in Eq.~\eqref{eq:SSRdecoA} (and also Eq.~\eqref{def:algE}) depends on the choice of $A$ in general. However, we expect that our definition of charges captures a universal property of the model, in the sense that the set of labels (or, equivalently, the number of summands in the decomposition) is preserved by deformations of the region, at least if $A$  is large enough and keeps the topology. 
Moreover, it is natural to assume that the ribbon operators in topologically ordered phases generate an algebra which only depends on the topology of the support region, not the shape or the size of it. 
For a similar reason, a condition called \emph{stable logical algebra condition} has been introduced in Ref.~\cite{Haah16}.
We will require a slightly more general condition: 

\begin{itemize}
\item (Uniform stable logical algebra condition): Let $\calA_{t,r}/\calN_{A_{t,r}}$ denote the logical algebra associated to annulus $A_{t,r}$, which is given by $b(r+t/2)\backslash b(r-t/2)$ for some $t$ and $r$. Then, for any $10w\leq t\leq r/10$ and $10w\leq t'\leq {r'}/10$ with $r,r'\leq L^*$, 
\begin{equation}\label{def:stability}
\calA_{t,r}/\calN_{A_{t,r}}\cong \calA_{t',r'}/\calN_{A_{t',r'}}\,.
\end{equation}
\end{itemize}
Compared to Haah's definition, in addition to being able to change the width of the annulus, we also allow changing the radius.
Because of the topological nature of the models we are interested in, we do not expect our \emph{a priori} slightly stronger assumption to limit the class of models our result applies to. 

Note that in the following we will assume all annuli satisfy the restrictions in this condition. 
We can show that our uniform stable logical algebra condition implies Haah's stable logical algebra condition, which in particular requires that the natural inclusion map induces an isomorphism (which is not assumed in our definition).
\begin{prop}\label{prop}
Let $A_1 \subset A_2$ be two annuli and assume the uniform stable logical algebra condition~\eqref{def:stability}.
Then the identity map provides a natural embedding $\iota : \calA_1 \to \calA_2$, which induces an isomorphism $\calA_1/\calN_{A_1} \to \calA_2/\calN_{A_2}$ of the quotient algebras, where we used the notation of Section~\ref{sec:logicalalgebra}.
\end{prop}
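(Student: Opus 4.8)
The plan is to show that the identity inclusion descends to a well-defined unital $*$-homomorphism $\bar\iota : \calA_1/\calN_{A_1} \to \calA_2/\calN_{A_2}$, prove it is injective, and then upgrade it to an isomorphism using the uniform stable logical algebra condition to match dimensions. First I would check that the identity really maps $\calA_1$ into $\calA_2$. If $O\in\calA_1$ then $\supp(O)\subset A_1\subset A_2$, and $O$ commutes with every $h_j$ with $\supp(h_j)\subset(A_1)_+$. Any $h_j$ with $\supp(h_j)\cap A_1\neq\emptyset$ already satisfies $\supp(h_j)\subset(A_1)_+$ by definition of $(A_1)_+$, while any $h_j$ with $\supp(h_j)\cap A_1=\emptyset$ commutes with $O$ for trivial support reasons; hence $O$ commutes with \emph{all} $h_j$, in particular with those supported in $(A_2)_+$, so $O\in\calA_2$. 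Since $(A_1)_+\subset(A_2)_+$ and the $h_j$ are commuting projectors, the ground projectors obey $\Pi_{(A_2)_+}=\Pi_{(A_1)_+}\Pi_{(A_2)_+}\leq\Pi_{(A_1)_+}$, so $O\Pi_{(A_1)_+}=0$ forces $O\Pi_{(A_2)_+}=0$; thus $\iota(\calN_{A_1})\subset\calN_{A_2}$ and $\bar\iota$ is a well-defined unital $*$-homomorphism of finite-dimensional $C^*$-algebras.

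Next I would establish injectivity. Using the isomorphism $\calA_i/\calN_{A_i}\cong\Pi_{(A_i)_+}\calA_i\Pi_{(A_i)_+}$ together with the fact that elements of $\calA_i$ commute with $\Pi_{(A_i)_+}$, the class $[O]_i$ vanishes iff $O\Pi_{(A_i)_+}=0$, equivalently $\Tr(O^\dagger O\,\rho^{(i)})=0$ for the unnormalized reduced operator $\rho^{(i)}:=\Tr_{Y^c}\Pi_{(A_i)_+}$ on $Y:=\supp(O)\subset A_1$. Because $O^\dagger O$ is supported on $Y$, injectivity of $\bar\iota$ reduces to the claim that $\rho^{(1)}$ and $\rho^{(2)}$ have the same support on $\calH_Y$. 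The inclusion $\supp(\rho^{(2)})\subseteq\supp(\rho^{(1)})$ is automatic from $\Pi_{(A_2)_+}\leq\Pi_{(A_1)_+}$; the reverse inclusion $\supp(\rho^{(1)})\subseteq\supp(\rho^{(2)})$ is a local indistinguishability statement. Here I would use that the projectors occurring in $\Pi_{(A_2)_+}$ but not in $\Pi_{(A_1)_+}$ have support disjoint from $Y$ (any $h_j$ meeting $A_1$ already lies in $(A_1)_+$), so they sit outside a neighborhood of $Y$ and, by LTQO, cannot shrink the support of the reduced ground-space operator on $Y$.

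Finally, the uniform stable logical algebra condition supplies an abstract isomorphism $\calA_1/\calN_{A_1}\cong\calA_2/\calN_{A_2}$ and in particular equality of their finite dimensions. An injective unital $*$-homomorphism between finite-dimensional $C^*$-algebras of equal dimension has image a $C^*$-subalgebra of full dimension, hence is surjective, so $\bar\iota$ is the claimed isomorphism. I expect the main obstacle to be the support-equality claim of the second paragraph: LTQO as stated applies to discs, whereas $Y=\supp(O)$ wraps the annulus and is topologically nontrivial, so one must rule out that enlarging the protected region from $(A_1)_+$ to $(A_2)_+$ removes some annular configuration realized on $Y$ (note that the topological content lives in the spectrum of $\rho^{(i)}$, not its support). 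I would handle this by covering $Y$ with discs whose $+$-neighborhoods remain inside $(A_1)_+$, applying the disc version LTQO$'$ patchwise, and, if needed, invoking the dimension match above; equivalently, one may instead prove surjectivity directly by deforming each representative of a class in $\calA_2/\calN_{A_2}$ inward into $A_1$ modulo $\calN$.
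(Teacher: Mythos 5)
Your overall architecture matches the paper's: the identity map sends $\calA_1$ into $\calA_2$ because every $h_j$ either meets $A_1$ (hence lies in $(A_1)_+$) or has disjoint support; $\calN_{A_1}\subset\calN_{A_2}$ follows from $\Pi_{(A_2)_+}\leq\Pi_{(A_1)_+}$; and once injectivity of the induced map is in hand, surjectivity follows from the abstract isomorphism supplied by the uniform stability assumption together with finite-dimensionality. All of that is correct and is exactly how the paper proceeds. The gap is the injectivity step, and it is a genuine one: you need that $O\Pi_{(A_1)_+}\neq 0$ implies $O\Pi_{(A_2)_+}=O\Pi_{(A_1)_+}Q\neq 0$, where $Q$ is the product of the additional commuting projectors $h_j$ (all supported off $A_1$). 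Your reduction of this to the statement $\supp\rho^{(1)}\subseteq\supp\rho^{(2)}$ on $\calH_Y$ is fine, but the proposed proof of that statement by covering $Y$ with discs and applying LTQO$'$ patchwise cannot work, for the reason you yourself flag: when $Y$ wraps the annulus, the support of $\Tr_{Y^c}\Pi_{(A_i)_+}$ encodes which superselection sectors are admitted inside the hole, and this is invisible to any collection of disc marginals. (Concretely, in the toric code $\Tr_{Y^c}\Pi$ and $\Tr_{Y^c}\Pi_{(A_1)_+}$ agree on every disc inside $Y$ yet have strictly different supports on an annular $Y$, distinguished by the Wilson loop $W_Z(C)$.) Nor can the ``dimension match'' rescue injectivity: a unital $*$-homomorphism between finite-dimensional $C^*$-algebras of equal dimension need not be injective, so equality of dimensions only converts an already-established injection (or surjection) into an isomorphism.

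The paper closes this step with a different tool. Writing $O\Pi_{(A_2)_+}$ as the product of the two commuting operators $O\Pi_{(A_1)_+}$ and $Q=\Pi_{(A_2\setminus A_{1,+})_+}$, it invokes the finite-dimensional splitting theorem for commuting operator algebras (Theorem~1 of~\cite{ScholzWerner08}) to place these on different tensor factors of the Hilbert space, whence the product of two nonzero operators is nonzero. No appeal to LTQO is made at this point; the argument is purely about commuting finite-dimensional algebras. If you prefer your alternative route of proving surjectivity directly by deforming representatives of $\calA_2/\calN_{A_2}$ into $A_1$ modulo $\calN$ (which, combined with equal dimensions, would also yield injectivity), that would be a legitimately different proof, but as written it is only a one-line sketch and would itself require a deformation lemma of roughly the same strength as the statement being proved. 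As it stands, the injectivity step needs either the commuting-algebra factorization argument or a worked-out replacement.
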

The proof can be found in Appendix~\ref{app:opalg}.
There is a useful consequence of this result.
It says that the inclusion map always induces an automorphism.
Hence to verify that a certain model satisfies the uniform stable logical algebra condition, it is enough to check that for suitable inclusions of annuli, the natural inclusion map induces an isomorphism of the quotient algebras.
That is, if this happens to be false, one does not have to search for other potential isomorphisms. 

Using Proposition~\ref{prop} the following corollary follows easily:
\begin{cor} \label{cor:Abelian} Under the assumption of the uniform stable logical algebra condition, $\calA/\calN_A$ is Abelian.
\end{cor}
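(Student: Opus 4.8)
The plan is to use Proposition~\ref{prop} to show that every class in $\calA/\calN_A$ can be represented by an operator supported on a \emph{thin} concentric sub-annulus, and then to place two such sub-annuli in \emph{disjoint} regions of $A$. Two operators with disjoint supports commute as elements of $\calB(\calH)$, so the classes they represent commute in the quotient; since the two sub-annuli can be made to represent all of $\calA/\calN_A$, this forces the algebra to be Abelian. The uniform stability assumption guarantees the isomorphism types do not depend on the chosen annulus, so it suffices to run the argument for one conveniently chosen $A$.

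Concretely, I would fix an annulus $A=A_{t,r}$ with generous parameters---say $t$ near $r/10$ and $r$ near $L^*$---so that its interior contains two disjoint concentric annuli $A_1=A_{t_1,r_1}$ and $A_2=A_{t_2,r_2}$, each still obeying the admissibility window $10w\leq t_i\leq r_i/10$ and $r_i\leq L^*$, with $A_1\cap A_2=\emptyset$ and $A_1,A_2\subset A$ (take $r_1<r_2$ and choose the radial widths so the two shells do not overlap). Applying Proposition~\ref{prop} to the inclusions $A_1\subset A$ and $A_2\subset A$, the natural embeddings $\iota_i:\calA_i\to\calA$ induce isomorphisms $\Phi_i:\calA_i/\calN_{A_i}\to\calA/\calN_A$. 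Since each $\Phi_i$ is induced by inclusion, its surjectivity says precisely: for every $[O]\in\calA/\calN_A$ there is an operator $O^{(1)}$ with $\supp(O^{(1)})\subset A_1$ and $[O]=[O^{(1)}]$ in $\calA/\calN_A$, and likewise every $[O']\in\calA/\calN_A$ equals $[O^{(2)}]$ for some $O^{(2)}$ with $\supp(O^{(2)})\subset A_2$.

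Now take arbitrary $[O],[O']\in\calA/\calN_A$ and write $[O]=[O^{(1)}]$ with $\supp(O^{(1)})\subset A_1$ and $[O']=[O^{(2)}]$ with $\supp(O^{(2)})\subset A_2$. Because $A_1\cap A_2=\emptyset$, the operators $O^{(1)}$ and $O^{(2)}$ act on disjoint sets of sites, so $O^{(1)}O^{(2)}=O^{(2)}O^{(1)}$ in $\calB(\calH)$. Passing to the quotient gives $[O][O']=[O^{(1)}O^{(2)}]=[O^{(2)}O^{(1)}]=[O'][O]$. As $[O]$ and $[O']$ were arbitrary, $\calA/\calN_A$ is Abelian, and by the isomorphisms supplied by the uniform stable logical algebra condition the same conclusion holds for every admissible annulus.

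The step I expect to be the main obstacle is purely geometric bookkeeping: verifying that one can genuinely fit two disjoint concentric shells $A_1,A_2$ inside $A$ while keeping all three annuli within the allowed window $10w\leq t\leq r/10$, $r\leq L^*$. This requires $L^*$ (hence $L$) large enough that a width-$t$ shell can be split into two sub-shells of width at least $10w$ with a separating gap, and it requires confirming that both smaller annuli are themselves admissible so that the hypotheses of Proposition~\ref{prop} apply to the inclusions $A_i\subset A$. Once the parameter choices are arranged, the remainder of the argument is the formal commutation computation above, and no further analytic input is needed.
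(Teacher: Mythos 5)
Your argument is correct and is essentially identical to the paper's own proof: the paper likewise takes two disjoint concentric annuli $A_1\sqcup A_2\subset A_3$, uses Proposition~\ref{prop} to represent any two classes by operators supported on $A_1$ and $A_2$ respectively, and concludes commutativity from disjointness of supports. Your additional care about the parameter window for fitting two admissible sub-annuli is a reasonable elaboration of a point the paper leaves implicit, but it does not change the argument.
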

\begin{proof}
Let us consider three annuli $A_1$, $A_2$ and $A_3$ such that $A_1\sqcup A_2\subset A_3$ (Fig.~\ref{fig:A=AA}).
Without loss of generality, we assume $\calA/\calN_A$ is defined on $A_3$.
From the discussion above, for any $[O], \in\calA/\calN_A$, there exist $O_1$ and $O_2$ with $[O_1]=[O_2] = [O]$ which are supported on $A_1$ and $A_2$, respectively.
Then, for any $[O], [Q]\in \calA/\calN_A$, we can choose a pair of representatives $O_1$ and $Q_2$ with disjoint supports.
Therefore,  
\begin{align}
[O][Q]=(O_1+\calN_A)(Q_2+\calN_A)=(Q_2+\calN_A)(O_1+\calN_A)=[Q][O] 
\end{align}
and $\calA/\calN_A$ is Abelian.
\begin{figure}[htbp]
\begin{center}
\includegraphics[width=8.0cm]{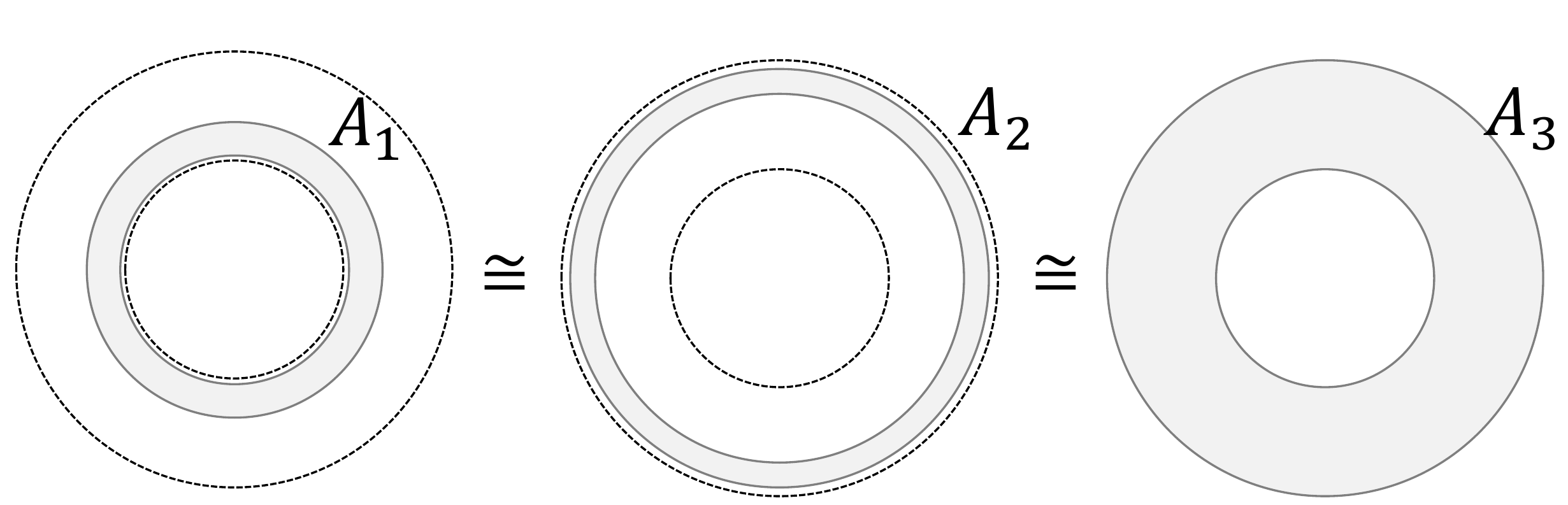}
\end{center}
\vspace{-4mm}
\caption{Annuli $A_1$, $A_2$ and $A_3$ in the proof of Corollary~\ref{cor:Abelian}. Any logical operator on $A_3$ have two disjoint supports $A_1$ and $A_2$ simultaneously, which implies. The isomorphisms from the algebras on smaller regions to the algebra on $A_3$ are established by the natural embedding.}
\label{fig:A=AA}
\end{figure}
\end{proof}

The total charge in $D_{in}$ is not a conserved quantity under the action of  $\calE/\calN$. In other words, there exist operators in $\calE/\calN$ that do not commute with the charge projectors $P_a$, since $\calE/\calN$ has trivial center. 
For example, $\calE/\calN$ contains operators that create a pair of conjugate excitations, one located in $D_{in}$ and one in $D_{out}$. 
These operators change the corresponding charge in $D_{in}$ without making any additional excitation in the annulus. 
The set of logical operators preserving the total charge in $D_{in}$ is given as a subalgebra of $\calE/\calN$:
\begin{align}\label{eq:logicalC}
\calC:=\left(\calZ(\calA/\calN_A)\right)'\cap \calE/\calN=\bigoplus_aP_a(\calE/\calN)P_a\;\equiv\bigoplus_a\calC_a\,.
\end{align}
The equality in the middle follows because $P_a$ are mutually orthogonal projections which generate $\calZ(\calA/\calN_A)$.
Note that we again get a decomposition in terms of the superselection sectors.

\bex {\it (Toric code)} Recall that $\calE/\calN$ is spanned by (the equivalence classes of) $(i')$ string/loop operators supported on either $D_{in}$ or $D_{out}$ whose endpoints are not in and around $A$ and $(ii')$ string operators connecting $D_{in}$ and $D_{out}$ (see Example~\ref{ex:toric E/N}). All non-trivial operators in $(ii')$ do not commute with $\calZ(\calA/\calN_A)$, since they create non-trivial excitations which are detected by some projector $P_a \in \calZ(\calA/\calN_A)$. 
The algebra $\calC$ is thus spanned by operators in $(i')$, and $\calC_a=\calL\vee P_a$ for some finite algebra $\calL$ supported on $A^c$ such that all $P_a$ commute with $\calL$. 
Therefore, different $\calC_a$ are isomorphic each other. 
\eex

From the algebra $\calE$ and the state $|\Omega\>$, we can construct a cyclic representation of $\calE$ on a certain Hilbert state $\calH_\Omega$ in terms of what is called the GNS {\it representation} (see Appendix~\ref{app:opalg} for more details). 
A representation is called \emph{cyclic} if there is a vector such that by acting on this vector via the representation we can span the whole Hilbert space. 
More intuitively, the GNS Hilbert space $\calH_\Omega$ is simply equivalent to the ground subspace of $H_{A_+}$:
\begin{equation}\label{eq:homega}
	\mathcal{H}_\Omega \cong \calE|\Omega\>=\left\{|\psi\>\in\calH\,\left|\;  h_j|\psi\>=|\psi\> \quad {\rm if }\;\supp(h_j)\subset A_+ \right.\right\}\,.
\end{equation}
We refer to Appendix~\ref{app:opalg}, where the equality in equation~\eqref{eq:homega} is proven. 
For clarity we will write this GNS representation as $\pi^\Omega$, which is, in particular, 
an irreducible representation (see Appendix~\ref{app:opalg}). From the GNS construction it is also clear that the space does not depend on the specific choice of ground state $|\Omega\rangle$. 
In the rest of this paper, we will equate $\calH_\Omega$ with $\calE|\Omega\>$, a subspace of $\calH$ defined on $\Lambda_L$.
\\

We can obtain representations of related algebras from $\pi^\Omega$ in natural ways.
For example, $\calN$ is in the kernel of $\pi^\Omega$ (Lemma~\ref{lem:nkernel}), hence $\pi^\Omega$ induces a representation of $\calE/\calN$.
One can also obtain a representation of subalgebras of $\calE/\calN$ by restricting the GNS representation $\pi^\Omega$. 
In particular, we can restrict $\pi^\Omega$ to $\calC \subset \calE/\calN$, the algebra of all logical operators preserving the total charge of $D_{in}$.
This representation is \emph{reducible}, while the representation of $\calE$ is irreducible as we have seen above. 
The GNS Hilbert space $\calH_\Omega$ can be decomposed using $\bigoplus_aP_a= [I] \in\calE/\calN$, where $P_a$ are as in Eq.~\eqref{eq:logicalC}:
\begin{equation}
\calH_\Omega=\bigoplus_a\pi^\Omega(P_a)\calH_\Omega\equiv\bigoplus_a\calH_\Omega^a\,.
\end{equation}
Each sector $\calH_\Omega^a$ is invariant under the action of $\pi^\Omega(\calC)$, and therefore $\pi^\Omega$ of $\calC$ has non-trivial subrepresentations for each $\calH_\Omega^a$. 
For instance, Eq.~\eqref{eq:logicalC} implies that
\begin{align}
\pi^\Omega(\calC)|\Omega\>&\cong\bigoplus_aP_a(\calE/\calN)P_a|\Omega\>\\
&= P_1(\calE/\calN)|\Omega\>\cong\calH_\Omega^1\,,\label{eq:HOmega1}
\end{align}
since  $P_a|\Omega\>=\delta_{a1}|\Omega\>$. Note that here we identify the action of $[A]\in\calE/\calN$ by $[A]|\Omega\>=A|\Omega\>$, which is well-defined.  
We will call the subrepresentation of $\calC$ on $\calH_\Omega^1$ the vacuum representation. 

\bex 
(Toric code) Recall that $\calE$ for the toric code is spanned by $X$ and $Z$-string or loop operators with no endpoints around $A$. Loop operators act trivially on a ground state $|\Omega\>$, 
and commute with any string operators up to a phase ($\{X,Z\}=0$ on the same site). Therefore a basis of $\calE|\Omega\>$ can be constructed by applying only open string operators to $|\Omega\>$.  
Each element of this basis is specified by the pattern of excitations outside of $A$, since two products of string operators sharing the same endpoints differ only by a phase. 
When the numbers of $e$ and $m$ anyons in a basis element are both even in $D_{in}$ (equivalently in $D_{out}$), then the basis element is in the vacuum sector $\calH_\Omega^1$.  
A basis of $\calH_\Omega^a$ for $a\neq1$ is then constructed by applying one string operator creating a pair of anyons with the charge $a$ in $D_{in}$ and $D_{out}$ to the vacuum sector. 
The string operators are unitary and induce an isomorphism such that $\calH_\Omega^a\cong\calH_\Omega^1$. Therefore, $\calH_\Omega=\bigoplus_a\calH_\Omega^a$ is a direct sum of four isomorphic orthogonal sectors. 
\eex

\section{An entropic invariant of 2D gapped phases}\label{sec:entropic}
When two ground states of gapped Hamiltonians are connected via an adiabatic evolution without closing the gap for all system sizes, they are said to be in the same gapped quantum phase (see e.g. Ref.~\cite{ChenGW10} for more precise definition).   
By using the technique of quasi-adiabatic continuation~\cite{PhysRevB.72.045141}, one can show that this definition of phase is equivalent to considering a particular unitary evolution mapping one to the other.  
Importantly, this unitary evolution is generated by quasi-local Hamiltonians and can be simulated by a constant-depth local quantum circuit with a constant error. A constant-depth local 
(quantum) circuit is defined as a unitary which can be represented as a product of unitaries 
\begin{equation}
W=W^{(1)}W^{(2)}\ldots W^{(M)}\,,
\end{equation}
where $M$ is a constant independent of the system size and each $W^{(i)}=\bigotimes_lW_l^{(i)}$ is a tensor product of unitaries acting on constant-size disjoint sets of neighboring sites.  
Any constant-depth local quantum circuit maps a (geometrically) local operator to a local operator with slightly larger support. We say a circuit has range $r$ if the support of a local operator spreads at most distance $r$ from the initial support after the transformation. 
In this paper, we will only consider invariance under constant-depth local circuits as in Ref.~\cite{Haah16}.
Although these transformations only approximate quasi-adiabatic evolutions, we believe that our results can be extended with some additional errors vanishing in appropriate thermodynamic limit (cf.~\cite{BachmannMNS12}). 

Using the assumptions that we have made so far, $\calA/\calN_A$ has been shown to be invariant under any constant-depth local circuits~\cite{Haah16}. 
Therefore, a quantity which only depends on the  algebraic structure of the logical algebras must be an invariant in the same way. More precisely, Haah proves that the algebras are \emph{isomorphic}, so one has to show that the invariant is stable under isomorphisms.

In this section we introduce a new entropic quantity that is invariant under constant-depth local circuits.
We first define the entropic quantity that essentially measures the relative sizes of the superselection sectors compared to the trivial (ground state) sector. 
We then provide a formula to calculate the quantity in terms of the dimensions of the sectors by considering the information convex introduced in Ref.~\cite{Shi18} in our context. 
	We prove that this quantity is stable under constant-depth local circuits, under the assumptions stated earlier.

\subsection{Definition of the Entropic Invariant}
Now we are ready to define the entropic quantity.  
Our strategy is to choose a good reference point in the information convex and quantify the difference to probe the nontrivial structure of $\Sigma(A)$. We choose the reduced state of the completely mixed state on $\calH_\Omega$ as the reference state. 
As a measure of difference of two quantum states, we use the relative entropy:
\begin{equation}
S(\rho\|\sigma):=\Tr\rho\log(\rho-\sigma)\,
\end{equation} 
where the logarithm is in base 2.
The relative entropy is zero if and only if the two states are the same, and positive otherwise.
It is however not a proper ``distance'' since it does not satisfy the triangle inequality. 
This ``quasi-distance'' is frequently used in information theory because of its various useful properties.

We propose the following quantity as a new entropic invariant of gapped phases.
\bdf 
Consider a ground state $|\Omega\>$ of $H$. For $\rho_A=\Tr_{A^c}|\Omega\>\<\Omega|$ on a given annulus $A$, we define 
\begin{equation}\label{def:invariant}
\calI(A)_\Omega:=S\left(\rho_A\left\|\tau_A\right.\right)\,,
\end{equation}
where $\tau_A=\Tr_{A^c}\tau$ is the reduced state on $A$ of the completely mixed state $\tau$ on the Hilbert space $\calH_\Omega$ (recall that we equate $\calH_\Omega$ with $\calE|\Omega\>\subset\calH$).
\edf

Since $\calH_\Omega$ is (isomorphic to) the ground subspace of $H_{A_+}$, the completely mixed state is just given by $\tau\propto\Pi_{A_+}={\tilde \Pi}_{A_+}\ot1_{(A_+)^c}$, where ${\tilde \Pi}_{A_+}$ is a projector of the ground subspace of $H_{A_+}$ restricted to $\calH_{A_+}$.   
Therefore, it is determined from $H_{A_+}$ and locally calculable. However, it might be true that $\calI(A_t)_\Omega$ \emph{does} depend on the choice of the annulus. We distinguish the desirable case in which it is independent from the choice of the annulus.
\bdf\label{def:uniform}
We say $\calI(A_t)_\Omega$ is uniform if it is independent of $t$ for $10w<t<r_{ann}-10w$. 
\edf

The uniform property of $\calI(A)_\Omega$ is related to the stability of $\calE/\calN$ in the sense of the stability of $\calA/\calN_A$ in Eq.~\eqref{def:stability}.   
It might be true that $\calI(A)_\Omega$ is always uniform by the stable logical algebra condition, but unfortunately we do not have a rigorous proof yet. 
A sufficient condition for the uniform property is that the following two properties holds: $i)$ for two annuli $A\subset B$ with different thicknesses, $\tau_A=\Tr_{A^c}{\tilde \tau}_B$, where $\tau$ and ${\tilde \tau}$ are the completely mixed states on the corresponding GNS Hilbert spaces. 
$ii)$ there is a ``recovery'' CPTP-map $\calR$ such that $\calR(\rho_A)=\rho_B$ and $\calR(\tau_A)=\tau_B$. Then, from the joint monotonicity of the relative entropy, we have
\begin{equation}
\calI(B)_\Omega=S(\rho_B\|\tau_B)\geq S(\rho_A\|\tau_A)=\calI(A)_\Omega\geq S(\rho_B\|\tau_B)=\calI(B)_\Omega\,,
\end{equation}
which implies the uniform property of $\calI(A)_\Omega$.  
As we will see in later, the quantum double model satisfies the uniform property.

For models without commuting Hamiltonian terms, it would be reasonable to relax the statement to be only approximately true, up to a controllable error, for sufficiently large annuli. 
As evidence for this conjecture, we remark that $\calI(A)_\Omega$ is uniform for Kitaev's quantum double model (see Appendix~\ref{appendix:QD}). 
However, $\calI(A)_\Omega$ could depend on the annulus for more general models, e.g. in the Levin-Wen models. 
In the next section it will be shown that if $\calI(A)_\Omega$ is uniform, then $\calI(A)_{W\Omega W^\dagger}$ is also uniform for any constant-depth local circuit $W$ (Theorem~\ref{thm:invariance}). 
Therefore, it is enough to show the uniform property for certain ``fixed-point'' wave functions of gapped phases.

By definition $\calI(A)_\Omega$ is nontrivial if and only if $\rho_A\neq\tau_A$.
Note that by Lemma~\ref{lemma:vacRDM}, $\rho_A$ is the reduced state of the completely mixed state restricted to the vacuum sector $\calH_\Omega^1$.
Hence $\calI(A)_\Omega$ is nontrivial if and only if $\calH_\Omega$ has a nontrivial superselection structure.
Moreover, its value is determined by the dimensions of the sectors of the GNS Hilbert spaces as we will see in the next section. 

\subsection{A Formula for the Invariant via Structure of the Information Convex}
The non-trivial structure of $\calA/\calN_A$ implies that states in $\calH_\Omega$ can have different reduced states on $A$, which is not possible if $A$ is a disc (all ground states of $H_{A_+}$ have the same reduced state on the disc by LTQO'~\eqref{eq:ltqo22}).
In more detail, the set of all possible reduced states on $A$ of states in $\calH_\Omega$ constitute a non-trivial convex set: 
\begin{equation}\label{def:infoconv}
\Sigma(A):=\left\{\sigma_A \in \calS(\calH_A) \left|\, \sigma_A =\Tr_{A^c}\sigma, \, \sigma\in\calS(\calH_\Omega)\right.\right\}\,.
\end{equation}
It is equivalent to the one independently introduced in Ref.~\cite{Shi18} called the \emph{information convex}.
States in the information convex cannot be distinguished locally. In other words, they have the same marginals on every disc-like subregion, say $X$, of $A$. 
This is because any $|\phi\>\in\calH_\Omega$ is in the ground subspace of $H_{X_+}$, and therefore LTQO' guarantees that the reduced states on $X$ is the same as that of $|\Omega\>$.   
States in the information convex obey the superselection structure so that there is no coherence between different sectors.
\bthm\label{thm:convex} Any state $\sigma_A$ in $\Sigma(A)$ can be decomposed as a convex combination: 
\begin{equation}
\sigma_A=\bigoplus_{a}p_a\sigma_A^a\,,
\end{equation}
where $p_a=\Tr(P_a\sigma)$ and $\sigma_A^a=P_a\sigma_AP_a/p_a$, and the $P_a$ are as in equation~\eqref{eq:logicalC}. 
\ethm
\begin{proof}
Consider the equivalence class $P_a={\hat P}_a+\calN_A\in \calZ(\calA/\calN_A)$ with a representative projector ${\hat P}_a\in\calA$. ${\hat P}_a$ belongs to ${\hat P}_a+\calN$ in $\calE/\calN$.  
Now, we are going to show that there exists orthogonal projectors ${\hat Q}_a$ such that $\supp({\hat Q}_a)\subset A^c\cap A_+$, ${\hat Q}_a|\psi\>={\hat P}_a|\psi\>$ for any $|\psi\>\in\calH_{\Omega}$ and $\bigoplus_aQ_a=1_{\calH_\Omega}$.  If this is true, we can show that
\begin{align}
\sigma_A&=\Tr_{A^c}\left(\sigma\bigoplus_a{\hat Q}_a\right)\\
&=\bigoplus_a\Tr_{A^c}({\hat Q}_a\sigma {\hat Q}_a)\\
&=\bigoplus_a\Tr_{A^c}({\hat P}_a\sigma {\hat P}_a)\\
&=\bigoplus_a\Tr({\hat P}_a\sigma)\frac{{\hat P}_a\sigma_A {\hat P}_a}{\Tr({\hat P}_a\sigma)}, 
\end{align} 
which completes the proof. 
Indeed, the existence of such ${\hat Q}_a$ can be shown as a consequence of a result in  quantum error correction theory, which can be stated as follows: 
\begin{prop}\label{prop:OAQEC}~\cite{Almheiri2015} Consider a subspace $\calH_C\subset\calH_R\ot\calH_{R^c}$ and an operator $O\in\calB(\calH_C)$. Then, there exists $O_{R^c}$ supported on $R^c$ such that 
\begin{equation}
O_{R^c}|\psi\>=O|\psi\>
\end{equation}
for any $|\psi\rangle\in\calH_C$ if and only if $P_C[O,X_R]P_C=0$ $\forall X_R\in\calB(\calH_R)$, where $P_C$ is the projection onto $\calH_C$. 
\end{prop}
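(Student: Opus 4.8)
The plan is to recognize this as the standard operator-algebra error-correction (reconstruction) statement and to reduce it to a purely algebraic claim: the operators reconstructible on $R^c$ are exactly the commutant, inside $\calB(\calH_C)$, of the operators that can be measured on $R$. First I would reformulate the right-hand condition. For code vectors $|\phi\>,|\psi\>\in\calH_C$ one has $\<\phi|O X_R|\psi\> = \<O^\dagger\phi|X_R|\psi\>$ and $\<\phi|X_R O|\psi\> = \<\phi|X_R|O\psi\>$, and since $O\in\calB(\calH_C)$ maps the code into itself, both $O^\dagger|\phi\>$ and $O|\psi\>$ lie in $\calH_C$. A short matrix-element computation then gives, as operators on $\calH_C$,
\[
P_C[O,X_R]P_C = [\,O,\,P_C X_R P_C\,].
\]
Hence $P_C[O,X_R]P_C=0$ for all $X_R\in\calB(\calH_R)$ says precisely that $O$ commutes with every compression $P_C X_R P_C$. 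Writing $\calM$ for the algebra generated by the set $\{P_C X_R P_C : X_R\in\calB(\calH_R)\}$ (closed under adjoints), and using that the commutant of such a set equals the commutant of the algebra it generates, the condition is equivalent to $O\in\calM'$. The proposition thus becomes: the operators reconstructible on $R^c$ are exactly $\calM'$.

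For the ``only if'' direction I would argue that a reconstruction $O_{R^c}$ supported on $R^c$ commutes with every $X_R$, because they act on complementary tensor factors. Using $O_{R^c}P_C=OP_C$ one finds, for code vectors, $\<\phi|[O,X_R]|\psi\> = \<(O^\dagger - O_{R^c}^\dagger)\phi|X_R|\psi\>$, which vanishes once $O_{R^c}^\dagger$ reconstructs $O^\dagger$ on the code. This last point is automatic in the case of interest, where the relevant $O=\hat P_a$ are self-adjoint projectors, and is part of the standard notion of reconstruction in general; hence $O\in\calM'$.

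The ``if'' direction is the substantive one, and here I would invoke the structure theorem for a subspace embedded in a bipartite tensor product (the Koashi--Imoto / complementary-recovery decomposition underlying operator-algebra quantum error correction). It provides an isometric identification $\calH_C\cong\bigoplus_j \calH_j^R\ot\calH_j^{R^c}$, compatible with the ambient factorization $\calH_R\ot\calH_{R^c}$, under which $\calM\cong\bigoplus_j\calB(\calH_j^R)\ot I$ and therefore $\calM'\cong\bigoplus_j I\ot\calB(\calH_j^{R^c})$, with the factors $\calH_j^{R^c}$ realized inside $\calH_{R^c}$. Given $O\in\calM'$, I would write $O=\bigoplus_j I\ot O_j$ and define $O_{R^c}$ on $\calH_{R^c}$ to act as $O_j$ on the $j$-th factor (and arbitrarily on its orthogonal complement); by construction $O_{R^c}|\psi\>=O|\psi\>$ for every $|\psi\>\in\calH_C$, which is the desired reconstruction. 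This is the direction actually used in the paper to produce the operators $\hat Q_a$ supported on $A^c$.

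The main obstacle is exactly this last step: showing that the abstract commutant $\calM'$ is \emph{geometrically supported} on $R^c$, i.e.\ that every element of it is implemented by an honest operator on $\calH_{R^c}$. This is precisely the content of the decomposition theorem and is where the bipartite tensor structure is genuinely used; the reformulation and the necessity direction are formal by comparison. An alternative and more explicit route would fix a maximally entangled reference on a copy of $\calH_C$ and construct $O_{R^c}$ through the associated transpose (Petz) channel, verifying the reconstruction identity directly from $O\in\calM'$; this avoids naming the decomposition but replaces it by an equivalent explicit computation.
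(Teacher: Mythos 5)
The paper does not actually prove Proposition~\ref{prop:OAQEC}: it is imported verbatim from Almheiri--Dong--Harlow and used as a black box, and only in the ``if'' direction (commutator condition implies existence of the reconstruction), to produce the operators $\hat Q_a$ in the proof of Theorem~\ref{thm:convex}. So there is no in-paper argument to compare against; what you have written is a reconstruction of the standard proof from the cited reference. Your reformulation $P_C[O,X_R]P_C=[O,P_CX_RP_C]$ (with $O$ extended to the ambient space as $P_COP_C$) is correct, and reducing the condition to ``$O$ lies in the commutant $\calM'$ of the algebra generated by the compressions $P_CX_RP_C$'' is the right way to organize the argument. Deferring the ``if'' direction to the bipartite decomposition theorem (equivalently, to an explicit Petz-map construction) is legitimate --- that theorem is precisely the content of the proposition, and you are honest about this --- and it leaves your write-up no less rigorous than the paper's bare citation.

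One point deserves emphasis, and your instinct to hedge was correct: the ``only if'' direction of the proposition \emph{as literally stated} is false, because the hypothesis does not require $O_{R^c}^\dagger$ to reconstruct $O^\dagger$. Concretely, take $\calH_R=\calH_{R^c}=\mathbb{C}^2$, $\calH_C=\mathrm{span}\{|00\>,|01\>,|10\>\}$, $O_{R^c}=I\otimes|0\>\<1|$, and let $O=|00\>\<01|\in\calB(\calH_C)$ be its restriction to the code; then $O_{R^c}|\psi\>=O|\psi\>$ for every $|\psi\>\in\calH_C$, yet for $X_R=|1\>\<0|\otimes I$ one computes $P_C[O,X_R]P_C=-|10\>\<01|\neq0$. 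The correct statement (as in the cited reference) defines reconstruction to include $O_{R^c}^\dagger|\psi\>=O^\dagger|\psi\>$, which is exactly what your matrix-element computation uses; so you should state the adjoint requirement as part of the hypothesis rather than as something ``automatic in general''. None of this affects the paper, which invokes only the ``if'' direction and only for the self-adjoint projectors $\hat P_a$, but it does mean the equivalence you are asked to prove needs this small repair before the necessity direction goes through.
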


In our case, $R=A\cup (A_+)^c$, $\calH_C=\calH_\Omega$ and $P_C=\Pi_{A_+}$  and $O\in\Pi_{A_+}\calA\Pi_{A_+}$ (see Lemma~\ref{lemma:GNS=spanE}). One can easily check that
\begin{align}
\Pi_{A_+}[{\hat P}_a,X_A\ot Y_{(A_+)^c}]\Pi_{A_+}&=\left[{\hat P}_a\Pi_{A_+}, \Pi_{A_+}X_A\Pi_{A_+}\right]\ot Y_{(A_+)^c}\,
\end{align} 
for any operators $X_A$ on $\calH_A$ and $Y_{(A_+)^c}$ on $\calH_{(A_+)^c}$. $\Pi_{A_+}X_A\Pi_{A_+}$ is supported on $A_+$ and an element of ${\tilde \calA}/\calN_{A_+}$, the logical algebra associated to the slightly larger annulus $A_+$. From the stability assumption~\eqref{def:stability}, ${\hat P}_a+\calN_{A_+}\in\calZ({\tilde \calA}/\calN_{A_+})$ and therefore $\left[{\hat P}_a\Pi_{A_+}, \Pi_{A_+}X_A\Pi_{A_+}\right]=0$. By linearity, this implies $\Pi_{A_+}[{\hat P}_a,Z_{A\cup(A_+)^c}]\Pi_{A_+}=0$ for any $Z_{A\cup(A_+)^c}\in\calB(\calH_{A\cup(A_+)^c})$. 
Therefore there exists ${\hat Q}_a$ supported on $(A\cup (A_+)^c)^c=A^c\cap A_+$ such that ${\hat Q}_a|\psi\>={\hat P}_a|\psi\>$ for any $|\psi\>\in \calH_\Omega$. 
\end{proof}
Theorem~\ref{thm:convex} says any reduced state of $\calH_\Omega$ is decomposed into a probabilistic mixture of state supported on disjoint sectors. Moreover, the reduced state on a particular sector is essentially unique (the same result has been proven for quantum double models~\cite{Shi18}): 
\blm \label{lemma:vacRDM} Under the uniform stable algebra condition, any state $|\psi^a\>\in\calH_\Omega^a$ has the same reduced state
\begin{equation}
\psi^a_A=\rho_A^a\,,
\end{equation}
where $\rho_A^a=\Tr_{A^c}({\hat P}_a\Pi_{A_+})/\Tr({\hat P}_a\Pi_{A_+})$ for ${\hat P}_a\in P_a$. 
\elm
\begin{proof} Choose an element ${\hat P}_a\in P_a\in \calA/\calN_A$. For any operator $O_A$ supported on $A$, it holds that 
\begin{align}
\Tr(O_A|\psi^a\>\<\psi^a|)=\Tr(\Pi_{A_+}{\hat P}_aO_A{\hat P}_a\Pi_{A_+}|\psi^a\>\<\psi^a|)\,.
\end{align}
The operator $\Pi_{A_+}{\hat P}_aO_A{\hat P}_a\Pi_{A_+}$ is supported on $A_+$ and commutes with all $h_i$, and therefore it is an element of $P_a({\tilde \calA}/\calN_{A_+})P_a$, the logical algebra associated to the slightly larger annulus $A_+$. By Corollary~\ref{cor:Abelian}, $P_a({\tilde \calA}/\calN_{A_+})P_a$ is one-dimensional and therefore 
\begin{equation}\label{eq:one-dOA}
\Pi_{A_+}{\hat P}_aO_A{\hat P}_a\Pi_{A_+}+\calN_{A_+}=c(O_A){\hat P}_a\Pi_{A_+}+\calN_{A_+}\,.
\end{equation}
From the stable logical algebra condition and $\calN_{A_+}\subset\calN_A$, Eq.~\ref{eq:one-dOA} implies
\begin{equation}
\Pi_{A_+}{\hat P}_aO_A{\hat P}_a\Pi_{A_+}+\calN_{A}=c(O_A){\hat P}_a\Pi_{A_+}+\calN_{A}\,.
\end{equation}
Therefore we have 
\begin{align}
\Tr(O_A|\psi^a\>\<\psi^a|)=c(O_A)\Tr(|\psi^a\>\<\psi^a|)=c(O_A)\,,
\end{align}
which completes the proof by the definition of the reduced state. 
\end{proof}

By using Theorem~\ref{thm:convex} and Lemma~\ref{lemma:vacRDM} restricting the structure of the states on $\calH_\Omega$, we obtain the following formula for $\calI(A)_\Omega$. 
\bthm\label{thm:InvGNSdim} For any ground state $|\Omega\>$ and annulus $A$, under the uniform stable logical algebra condition it holds that 
\begin{equation} \label{eq:InvGNSdim}
\calI(A)_\Omega=-\log \frac{d^1_\Omega}{d_\Omega}\,
\end{equation}
where $d_\Omega=\dim\calH_\Omega$ and $d^1_\Omega=\dim\calH_\Omega^1$. 
\ethm
\begin{proof} 
Let us denote the projector onto $\calH_\Omega^a$ by $\Pi_a$ and $\dim\calH_\Omega^a$ by $d_\Omega^a$.  
By definition
\begin{align}\label{eq:decotau}
\tau_A=\bigoplus_a\left(\frac{d_\Omega^a}{d_\Omega}\right)\Tr_{A^c}\left(\frac{1}{d_\Omega^a}\Pi_a\right)\equiv\bigoplus_ap_a\rho_A^a\,,
\end{align}
where $p_a=d_\Omega^a/d_\Omega$ and $\rho_A^a=\frac{1}{d_\Omega^a}\Tr_{A^c}\Pi_{a}$. 
Since $\rho^1_A=\rho_A$ by Lemma~\ref{lemma:vacRDM}, we have 
\begin{align}
\calI(A)_\Omega&=S(\rho_A\|\tau_A)\\
&=\Tr\rho_A\log\rho_A-\Tr\rho_A\log(p_1\rho_A)\\
&=-\log p_1\,.
\end{align}
The second line follows since $\rho_A\log(\bigoplus_ap_a\rho_A^a)=\rho_A\log(p_1\rho_A^1)$.
\end{proof}
\begin{remark} The decomposition~\eqref{eq:decotau} implies that the relative entropy in Eq.~\eqref{def:invariant} is equal to the max-relative entropy~\cite{4957651}
\begin{equation}
S_{\max}(\rho_A\|\tau_A):=\inf_{\lambda}\left\{\log\lambda\,|\,\rho_A\leq 2^\lambda\tau_A\right\}\,.
\end{equation} 
\end{remark}
\begin{remark} This theorem suggests a more algebraic definition. Consider the projections $P_i$ projecting on the different sectors, and choose a faithful tracial state $\tau$. Then one can look at the ratios $\tau(P_i)/\tau(P_1)$ comparing the sector $i$ to the vacuum sector. This is somewhat reminiscent of the definition of the Jones index for Type II$_1$ factors in operator algebra~\cite{JonesSunder}. See also Appendix~\ref{app:fibonacci}.
\end{remark}

Theorem~\ref{thm:InvGNSdim} helps to obtain $\calI(A)_\Omega$ without having to explicitly calculate the reduced states of a ground state and the reference state. Actually, the calculation is very simple for the toric code model. 
\bex\label{ex:toric1/4} For the toric code, $\calH_\Omega$ is the direct sum of four isomorphic Hilbert spaces $\calH^a_\Omega$ $(a=1,e,m,\varepsilon)$. Therefore, $d_\Omega=4d_\Omega^1$ and we have
\begin{equation}
\calI(A)_\Omega=-\log\frac{1}{4}=2\,
\end{equation}
for any (sufficiently large) $A$, $\Omega$ and $L$.
\eex

By definition $\calI(A)_\Omega$ reflects the structure of the ground subspace of $H_{A_+}$. 
To claim that $\calI(A)_\Omega$ quantifies some sort of correlations in the annulus, it is desirable that the function only depends on the states, not the Hamiltonian. 
Indeed, $\calI(A)_\Omega$ is independent of the Hamiltonian in the same way as in the case of $S$-matrix defined in Ref.~\cite{Haah16}. In more detail, it is shown that one can construct $\calA/\calN_A$ solely from the ground state by using its connection to the so-called \emph{locally invisible operators}, which are operators whose action onto the ground state cannot be detected by looking at local regions~\cite{Haah16}. In the same way, $\calE/\calN$ can also be constructed from the ground state, and thus $\calI(A)_\Omega$ takes the same value for two Hamiltonians if both Hamiltonians have $|\Omega\>$ as a ground state and satisfy all assumptions. In this sense, we can argue that $\calI(A)_\Omega$ is a quantity associated to states. We emphasize that strictly speaking $\calI(A)_\Omega$ is a function of $\rho_{A_+}$, not $\rho_A$.

\subsection{Invariance under constant-depth local circuits}
In this section, we show that $\calI(A)_\Omega$ is invariant under any constant-depth local circuit. 
We begin with restating Haah's results on the stability of $\calA/\calN_A$, which will be shown to be useful in the proof.
\begin{prop}\label{lemma:Haahstability}
(\cite[Theorem 4.1]{Haah16}) Suppose a state $|\Omega\>$ on a plane of size $>L$ admits a LCPC Hamiltonian of interaction length $w$ satisfying all our assumptions and let $|\Omega\>$ be a ground state of $H$. Let $\calA/\calN_A$ be the logical algebra constructed from $H$, such that $A$ has radius $r_{ann}$ and thickness $t$. Denote $\tilde{\calA}/\tilde{\calN}_{A}$ the logical algebra on the same annulus but constructed from $W^\dagger HW$ for any constant-depth local circuit $W$ of range $r<t$. Then, whenever $1200w<60t<r_{ann}<L$, there exists an isomorphism such that 
\begin{equation}\label{eq:Haahstability}
\calA/\calN_A\cong\tilde{\calA}/\tilde{\calN}_{A}\,.
\end{equation}
\end{prop}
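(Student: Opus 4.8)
The plan is to use conjugation by $W$ as a dictionary between the theory of $H$ and the theory of $H':=W^\dagger H W$, and then absorb the support spreading caused by the circuit using the uniform stable logical algebra condition. First I would record that $H'$ is again an LCPC Hamiltonian satisfying all our assumptions. Writing $h_j':=W^\dagger h_j W$, each $h_j'$ is a projector and Hermitian (unitarity), the family $\{h_j'\}$ is mutually commuting (conjugation preserves commutators), and $\supp(h_j')\subset\supp(h_j)^{(r)}$, so the interaction length grows only to $w'\lesssim w+2r$. The ground states of $H'$ are exactly $W^\dagger|\psi\>$ for ground states $|\psi\>$ of $H$, and LTQO for $H'$ follows from LTQO for $H$ with $L^*$ decreased by $O(r)$, since $W^\dagger\Pi_{X_+}W$ is the ground-space projector of $H'$ over the fattened region. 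The hypotheses $1200w<60t<r_{ann}<L$ and $r<t$ guarantee that all annuli produced below keep thickness in the admissible window $[10w',\cdot/10]$ and stay inside the plane.

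Next I would set up the automorphism $\Phi:=\mathrm{Ad}_{W^\dagger}$, $\Phi(O)=W^\dagger O W$, of $\calB(\calH)$, which carries $H\mapsto H'$, $h_j\mapsto h_j'$, and $\Pi_{C_+}\mapsto W^\dagger\Pi_{C_+}W$, the ground-space projector of $H'$ over the $r$-fattening of $C_+$, for every annulus $C$. Because $\Phi$ spreads supports by at most the range $r$ and intertwines the commutation constraints, $[O,h_j]=0\Leftrightarrow[\Phi(O),h_j']=0$, it maps the logical algebra $\calA_C$ of $H$ on $C$ onto the logical algebra $\tilde{\calA}_{C^{(r)}}$ of $H'$ on the fattened annulus $C^{(r)}$, and it carries $\calN_C$ onto $\tilde{\calN}_{C^{(r)}}$ because it preserves the relevant ground-space projector. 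Hence $\Phi$ descends to an isomorphism of quotient algebras $\calA_C/\calN_C\cong\tilde{\calA}_{C^{(r)}}/\tilde{\calN}_{C^{(r)}}$, while $\Phi^{-1}$ relates the theory of $H'$ back to that of $H$ at the cost of one further $r$-fattening.

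With this dictionary in hand, I would invoke the uniform stable logical algebra condition~\eqref{def:stability} for both Hamiltonians. For $H$ it holds by hypothesis; for $H'$ it is inherited through $\Phi$, because any abstract isomorphism $\calA_C/\calN_C\cong\calA_{C'}/\calN_{C'}$ between admissible annuli for $H$ transports through the automorphism to $\tilde{\calA}_{C^{(r)}}/\tilde{\calN}_{C^{(r)}}\cong\tilde{\calA}_{(C')^{(r)}}/\tilde{\calN}_{(C')^{(r)}}$ for $H'$ (here I only need the abstract isomorphism; Proposition~\ref{prop} additionally upgrades it to the inclusion-induced map if desired). Now I chain the isomorphisms along the nested annuli $A\subset A^{(r)}\subset A^{(2r)}$:
\begin{equation}
\tilde{\calA}/\tilde{\calN}_A\;\cong\;\tilde{\calA}/\tilde{\calN}_{A^{(r)}}\;\cong\;\calA/\calN_{A^{(2r)}}\;\cong\;\calA/\calN_A\,,
\end{equation}
where the outer two isomorphisms are instances of uniform stability (for $H'$ and for $H$ respectively) and the middle one is $\Phi^{-1}$. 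Composing yields the desired $\calA/\calN_A\cong\tilde{\calA}/\tilde{\calN}_A$.

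The main obstacle I expect is the bookkeeping that keeps LTQO and the admissible size window intact at every link of the chain. One must check that the range-$r$ support spreading under $\Phi$ is genuinely controlled, that $\Phi$ really sends $\calN$ into $\tilde{\calN}$—which rests on $W^\dagger\Pi_{C_+}W$ being exactly the ground-space projector of $H'$ over the fattened region, and in particular on reconciling $(C^{(r)})_+$ computed with the new interaction length $w'$ against $C_+^{(r)}$—and that LTQO survives conjugation with only a bounded reduction of $L^*$. Since the entire quotient-algebra formalism, and the applicability of~\eqref{def:stability} to each annulus appearing above, hinges on LTQO holding for $H'$, this verification is the technically delicate heart of the argument.
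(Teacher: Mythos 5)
First, note that the paper does not prove this proposition itself: it is quoted verbatim as Theorem~4.1 of Ref.~\cite{Haah16}, so your attempt is really a reconstruction of Haah's argument. Your skeleton --- use $\mathrm{Ad}_{W^\dagger}$ as a dictionary between the two Hamiltonians, accept the range-$r$ support spreading, and absorb it by comparing nested annuli via the stable logical algebra condition --- is indeed the right skeleton, and your ``into'' direction is fine: an element of $\calA_C$ commutes with \emph{every} $h_j$ (either $\supp(h_j)\subset C_+$ or $\supp(h_j)\cap C=\emptyset$), so $\Phi(\calA_C)\subset\tilde{\calA}_{C^{(r)}}$. The problem is the next sentence. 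The claim that $\Phi$ maps $\calA_C$ \emph{onto} $\tilde{\calA}_{C^{(r)}}$ is false at the level of the algebras (an element of $\tilde{\calA}_{C^{(r)}}$ supported near the outer rim of $C^{(r)}$ has preimage supported in $C^{(2r)}$, not in $C$), and at the level of quotients the statement ``hence $\Phi$ descends to an isomorphism $\calA_C/\calN_C\cong\tilde{\calA}_{C^{(r)}}/\tilde{\calN}_{C^{(r)}}$'' is precisely the content of the theorem; neither injectivity nor surjectivity of the descended map comes for free. Injectivity requires showing $\Phi(O)\tilde{\Pi}_{(C^{(r)})_+}=0\Rightarrow O\Pi_{C_+}=0$, and since $\Phi^{-1}(\tilde{\Pi}_{(C^{(r)})_+})$ is a \emph{sub}projector of $\Pi_{C_+}$ (it imposes more constraints), this needs the same factorization/splitting argument as in the proof of Proposition~\ref{prop}, which you do not invoke. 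Surjectivity is the genuinely hard part and is nowhere argued.

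The second gap is a circularity: you derive the uniform stable logical algebra condition for $H'$ by ``transporting'' it through $\Phi$, but that transport uses exactly the isomorphisms $\calA_C/\calN_C\cong\tilde{\calA}_{C^{(r)}}/\tilde{\calN}_{C^{(r)}}$ that you are trying to establish. The standard repair (and, in essence, Haah's actual route) is a sandwich argument that uses stability only for $H$: build the chain of induced homomorphisms
\begin{equation}
\calA_{A_1}/\calN_{A_1}\xrightarrow{\;f_1\;}\tilde{\calA}_{A_2}/\tilde{\calN}_{A_2}\xrightarrow{\;f_2\;}\calA_{A_3}/\calN_{A_3}
\end{equation}
with $A_1\subset A_2\subset A_3$ suitably fattened, $f_1$ induced by $\Phi$ and $f_2$ by $\Phi^{-1}$, and observe that $f_2\circ f_1$ is the inclusion-induced map, which is an isomorphism by Proposition~\ref{prop}. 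This gives $f_1$ injective and $f_2$ surjective; injectivity of $f_2$ then follows from the Proposition~\ref{prop}-style splitting argument, whence $f_2$ and therefore $f_1$ are isomorphisms. Only \emph{after} this can you legitimately transport stability to $H'$ and contract $\tilde{\calA}_{A_2}/\tilde{\calN}_{A_2}$ back to the original annulus $A$ to conclude $\calA/\calN_A\cong\tilde{\calA}/\tilde{\calN}_A$. Your bookkeeping worries at the end (reconciling $(C^{(r)})'_+$ with $C_+^{(r)}$, LTQO for $H'$) are real but routine; the missing surjectivity argument and the circular use of stability for $H'$ are the substantive holes.
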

Note that we choose the constants in the theorem in the same way as in Ref.~\cite{Haah16}, and the precise values themselves are not essential.  
As a simple consequence, the logical algebra of the model in the topologically trivial phase (including Bravyi's counterexample exhibiting nontrivial spurious topological entanglement entropy) is always trivial~\cite{Haah16}. 
This fact implies the following corollary.
\bcor Under the same assumptions as in Proposition~\ref{lemma:Haahstability}, $\calI(A)_\Omega=0$ if the system is in the topologically trivial phase.
\ecor
Therefore, $\calI(A)_\Omega$ can be used as an indicator of topologically ordered phases (if the Hamiltonian satisfies all assumptions). 
Our main theorem in this paper is that $\calI(A)_\Omega$ is not only a witness of the existence of a nontrivial topological order, but also an invariant of gapped phases. 
\bthm \label{thm:invariance}
Under the same assumption as in Proposition~\ref{lemma:Haahstability}, it holds that
\begin{equation}\label{eq:thminvariance}
\calI(A_t)_\Omega=\calI(A_{t-r})_{W\Omega W^\dagger}\,.
\end{equation}
Moreover, if $\calI(A)_\Omega$ is uniform, 
\begin{equation}
\calI(A_t)_\Omega=\calI(A_t)_{W\Omega W^\dagger}\,
\end{equation}
for any $1200w+r<60t<r_{ann}$. Hence $\calI(A_t)_{W\Omega W^\dagger}$ is also uniform. 
\ethm
\begin{proof} The second part of the proof is easily derived from Definition~\ref{def:uniform}. 
We show Eq.~\eqref{eq:thminvariance} by using the formula in Theorem~\ref{thm:InvGNSdim}. 
Let us denote $\tilde h_j=W h_j W^\dagger$, which is an interaction term of the new Hamiltonian after the transformation. 
We define the annuli $A$ as $A_t$ and $A'$ as $A_{t-r}$. Then, $\calE$ is mapped to  
\begin{align}
\tilde{\calE}&:=\left\{O\in \calB(\calH_L)\left|[O,{\tilde h}_j]=0  \quad {\rm if }\;\supp(h_j)\subset A_+ \right.\right\}\\
&\;=\left\{O\in \calB(\calH_L)\left|[O,{\tilde h}_j]=0  \quad {\rm if }\;\supp({\tilde h}_j)\subset A'_{+} \right.\right\}\,,
\end{align}
where $A'_+:=\bigcup_{\supp({\tilde h}_j)\cap A'\neq\emptyset}\supp({\tilde h}_j)=\bigcup_{\supp(h_j)\cap A\neq\emptyset}\supp({\tilde h}_j)$. It is easy to check that $\tilde{\calE}=W\calE W^\dagger$ and $\calH_{W\Omega W^\dagger}\cong\calH_\Omega$.  It also holds that $\calN\cong\tilde{\calN}$ for $\tilde{\calN}:=\{O\in\tilde{\calE}| O\Pi_{A'_+}=0\}$. 
Consider a logical algebra $\tilde{\calA}/\tilde{\calN}_{A'}\subset\tilde{\calE}/\tilde{\calN}_{A'}$ associated to $A'$. By the stable logical algebra condition~\eqref{def:stability}, $\calA/\calN_A$ is isomorphic to the logical algebra of $|\Omega\>$ on $A'$, which is isomorphic to $\tilde{\calA}/\tilde{\calN}_{A'}$ from Proposition~\ref{lemma:Haahstability}. Hence we have $\calA/\calN_A\cong\tilde{\calA}/\tilde{\calN}_{A'}$. 
This isomorphism implies $\tilde{\calA}/\tilde{\calN}_{A'}$ has the same superselection structure and corresponding projectors $P'_a$ as $\calA/\calN_A$. 
Hence, we have $P'_a\calH_{W\Omega W^\dagger}\cong P_a\calH_\Omega$ for every $a$, especially $a=1$. Therefore the dimensions of these isomorphic Hilbert spaces are the same. This completes the proof by Theorem~\ref{thm:InvGNSdim}.
\end{proof}
A key point of the proof is that $\calI(A)_\Omega$ only depends on the ratio of dimensions of the GNS representations (Theorem~\ref{thm:InvGNSdim}).
This is one of the reasons why we choose $\tau_A$ as the reference state. One can use other reference states/measures, while the invariance under constant-depth circuit is not guaranteed in general.  
For instance, Refs.~\cite{PhysRevB.97.144106,Shi18} propose to use the entropy difference as the measure and the maximum entropy state in $\Sigma(A)$ as the reference state.
However, the invariance of such a quantity is not clear. 
Moreover, our choice of the reference state implies that $\calI(A)_\Omega$ is equivalent to the topological entanglement entropy, at least for quantum double models.  

A simple but non-trivial corollary of Theorem~\ref{thm:invariance} is that $\calI(A)_\Omega$ is a universal quantity of the topologically ordered phase of the toric code model ($\mathbb{Z}_2$-topological order).
	\bcor For any ground state in $\mathbb{Z}_2$-topological order, 
\begin{equation}
\calI(A)_\Omega=2. 
\end{equation}
\ecor
This is because the toric code model is known to satisfy all the assumptions~\cite{MichalakisZ13} and the uniform property.

\section{Relation to topological entanglement entropy}\label{sec:TEE}
Ground states of gapped local Hamiltonians are believed to obey an area law: the von Neumann entropy $S(\rho):=-\Tr\rho\log\rho$ of the reduced state of a ground state for a region $A$ scales as
\begin{equation}\label{eq:arealaw}
S(\rho_A)=\alpha|\partial A|-n_A\gamma+o(1)\,,
\end{equation}
where $\alpha$ is a constant depends on the Hamiltonian, $\gamma$ (or -$\gamma$) is called the topological entanglement entropy and $n_A$ is the number of disconnected  boundaries of $A$~\footnote{More generally, there are additional constant terms which depends on the shape of the corners of the region.}. The $o(1)$ term comprises correction terms vanishing in the limit $|A|\to \infty$.
The area law~\eqref{eq:arealaw} can be verified analytically in certain exactly solvable models such as the quantum double model or the Levin-Wen models~\cite{PhysRevA.71.022315,Flammia09,LevinW06}. 
It is also verified numerically in other gapped models~\cite{Zhang11,Isakov11,Jiang12}. 

Assuming an area law as above holds, one way to obtain the topological entanglement entropy is by taking a suitable linear combination of entropies of subregions. 
For an annulus, consider a tripartition as in Fig.~\ref{fig:TEE1} and define the conditional mutual information for the partition:
\begin{equation}
I(X:Z|Y)_\rho:=S(XY)_\rho+S(YZ)_\rho-S(Y)_\rho-S(XYZ)_\rho\,,
\end{equation}
where $S(A)_\rho:=S(\rho_A)$. By inserting the area law~\eqref{eq:arealaw}, the boundary terms cancel out and
\begin{equation}
I(X:Z|Y)_\rho=2\gamma+o(1)\,.
\end{equation}
Moreover, when the area law~\eqref{eq:arealaw} is exactly saturated and the $o(1)$ term vanishes, it is equivalent to (\emph{i}) the relative entropy distance from the set of all local Gibbs state, and (\emph{ii}) the asymptotically optimal rate of certain secret sharing  protocol~\cite{KatoFM16}.  
An annulus is not the only type of region we can choose: one could for example use a tripartite disc or a more complicated region to extract $\gamma$, as long as taking suitable combinations of the entanglement entropies cancel out the area terms in the area law.

\begin{figure}[htbp] 
\begin{center}
\includegraphics[width=5.5cm]{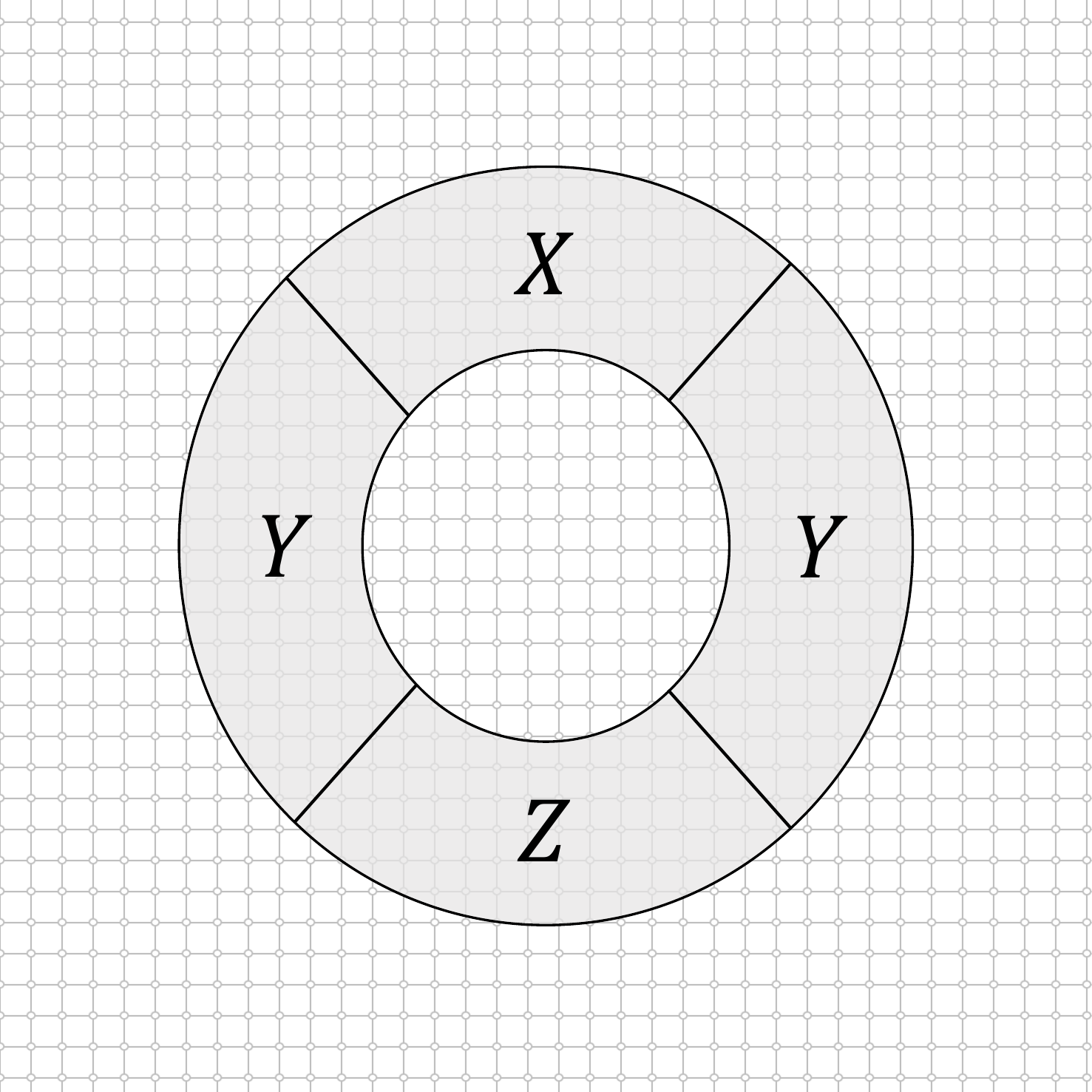}
\end{center}
\vspace{-4mm}
\caption{A tripartition of an annulus for the calculation of the topological entanglement entropy. In this case, the topological entanglement entropy is equivalent to the conditional mutual information $I(X:Z|Y)_\rho$, under certain assumptions. }
\label{fig:TEE1}
\end{figure}

The topological entanglement entropy is argued to be a universal constant, namely, in that it only depends on the type of the quantum phase. 
In fact, for certain models it has been shown that 
\begin{equation}\label{eq:TEEQD}
\gamma=\log\calD,
\end{equation}
where $\calD^2=\sum_{a\in\calL}d_a^2$ is called the total quantum dimension, which is determined by the quantum dimensions $d_a\geq1$ of the anyons emerging in the phase~\cite{KitaevP06,LevinW06} (note that the quantity $\log\calD$ itself is also connected to a secret sharing protocol in another setting~\cite{FiedlerNO17}).   
Hence there are three different ways of obtaining $\gamma$ -- as a universal term in an area law, as a conditional mutual information, and as the logarithm of the total quantum dimension -- that coincide, for example, under the assumption of the area law~\eqref{eq:arealaw}~\cite{BKK19}.
Due to these equivalence relations, not only the subleading term of the area law, but also the conditional mutual information and $\log\calD$ are sometimes called the topological entanglement entropy, depending on the literature. 
Hence one could conjecture that an area law with the subleading term $\gamma$ is indicative of topological order. 

The issue turns out to be more subtle, however.
For example, Bravyi showed that there exists a gapped 2D ground state constructed by a constant-depth local circuit such that the area law has a non-zero constant term for a particular choice of a disc or annulus, while $\log\calD=0$~\cite{BravyiCEX,PhysRevB.94.075151}. 
The constant term (sometimes called ``spurious topological entanglement entropy''~\cite{PhysRevB.94.075151,Williamson18}) also makes the conditional mutual information a nontrivial constant. 
Thus, the subleading term of the area law or the conditional mutual information do not always yield $\log\calD$ for general gapped 2D ground states, and one has to impose additional requirements. 
To the best of our knowledge, these have not been spelled out exactly in the literature.
One issue is that it is often difficult to prove that an area law holds with a universal subleading term.
Hence if one wants to study numerically, it is important to probe the area law for enough distinct regions, to verify that $\gamma$ indeed is universal.
In addition, since the quantity should be topological in nature, it should be invariant under smooth deformations of the boundary.
Bravyi's counterexample does not fulfill this property.

In contrast, we have defined another entropic quantity which has been shown to be an invariant of gapped phases. 
As discussed in Example~\ref{ex:toric1/4}, it takes the same value as the topological entanglement entropy for the toric code. 
Furthermore, we can show that the equivalence also holds for the quantum double model $D(G)$, which is a generalization of the toric code including models with non-abelian anyons: 
\bthm \label{thm:qdouble} For a ground state $|\Omega\>$ of the quantum double model $D(G)$, 
\begin{equation}
\calI(A)_\Omega=\log\calD^2\,
\end{equation}
for any sufficiently large annulus $A$.
\ethm
The proof is in Appendix~\ref{appendix:QD}. 
Unfortunately, as far as we are aware there is no proof yet that the non-abelian quantum double models satisfy the assumption on the stable logical algebra condition, but we believe this to hold. 
Once this has been proven, $\calI(A)_\Omega$ is guaranteed to be $\log\calD$ in any quantum double phase, since it is uniform. 
We emphasize that the uniform property suggests the stable logical algebra condition, since $\calI(A)_\Omega$ provably changes its value if $\calA/\calN_A$ differs depending on the size of the region.

Following these observations, it is natural to expect $\calI(A)_\Omega=\log\calD^2$ holds for more general models. 
Unfortunately, this is not the case; $\calI(A)_\Omega$ is always the logarithm of a rational number for finite $A$, while $\calD^2$ is in general not a rational number, as for example in the case of the double Fibonacci model~\cite{LevinW05}. 
Therefore, we need to generalize the definition to obtain the equivalence for general LCPC models. 
One crucial difference between the quantum double model and the Levin-Wen model is the local degrees of excitations. Excitations are described by ribbon operators in both models, but only those in the (non-abelian) quantum double model have a description of internal degrees of freedom.
More precisely, in non-abelian quantum double models one has to consider ``multiplets'' of independent ribbon operators transforming according to the same charge. 

For models like the double Fibonacci model we might need to consider certain a {\it asymptotic setting}, such as in Refs.~\cite{KatoFM14,KatoFM16,FiedlerNO17}. It could be true that $\calI(A)_\Omega$ is not uniform in these models, and that we recover the relation  $\calI(A)_\Omega=\log\calD^2$ only asymptotically as $A$ grows larger. 
Another possible extension is considering multiple copies of ground states as often considered in quantum Shannon theory. 
Since quantum dimensions represent an asymptotic ratio of the growth of the dimension  of the fusion space, it is reasonable to expect we can obtain an irrational number in a certain asymptotic limit of multiple copies. 
See also example in~\cite{FiedlerNO17} for Fibonacci chain, or Appendix~\ref{app:fibonacci} for a related approach.

It is also natural to expect that there is a quantitative relation between $\calI(A)_\Omega$ and the conditional mutual information. 
Indeed, it has been shown that a similar quantity called the {\it irreducible correlation}~\cite{PhysRevLett.101.180505} equals the conditional mutual information for exactly solvable models (including quantum double models and  Levin-Wen models)~\cite{KatoFM16}. 
The irreducible correlation (of order 3) $C^{(3)}(\rho_{XYZ})$ of a tripartite state $\rho_{XYZ}$ is defined as 
\begin{equation}
C^{(3)}(\rho_{XYZ}):=S(\rho_{XYZ}\|{\widetilde\rho}_{XYZ})\,,
\end{equation}
where ${\widetilde \rho}_{XYZ}$ is the maximum entropy state defined by
\begin{equation}
{\widetilde \rho}_{XYZ}:=\operatornamewithlimits{argmax}_{\sigma_{XYZ}\in R_2}\, S(\sigma_{XYZ}) 
\end{equation}
with $R_2:=\{\sigma_{XYZ}|\sigma_R=\rho_R \,,\; R=XY,YZ,ZX\}$. 
Hence $R_2$ is the sets of all tripartite states that agree with $\rho_{XYZ}$ when tracing out one of the parts.
This set is convex and the maximum entropy state is unique. 
\begin{prop}~\cite{KatoFM16} For a ground state satisfying an exact area law: $S(\rho_A)=\alpha|\partial A|-n_A\gamma$, it holds that 
\begin{equation}
C^{(3)}(\rho_{XYZ})=I(X:Z|Y)_\Omega
\end{equation}
for any tripartition of an annular region such that $Y$ separates $X$ from $Z$. 
\end{prop}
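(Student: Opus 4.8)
The plan is to prove the identity in two movements: first reduce $C^{(3)}$ to a difference of von Neumann entropies, and then evaluate that difference using the exact area law together with strong subadditivity. Throughout I assume the relevant marginals are full rank (otherwise one passes to a limit).

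First I would establish the general identity $C^{(3)}(\rho_{XYZ}) = S(\widetilde\rho_{XYZ}) - S(\rho_{XYZ})$, valid for any $\rho$. Since $\widetilde\rho$ maximizes the entropy subject to fixing the marginals on $XY$, $YZ$ and $ZX$, the Gibbs variational principle gives it the exponential-family form $\widetilde\rho = Z^{-1}\exp(\lambda_{XY}+\lambda_{YZ}+\lambda_{ZX})$, where each Lagrange multiplier $\lambda_R$ is supported on the corresponding two-body region. Because $\rho$ and $\widetilde\rho$ share all these marginals, $\Tr(\rho\,\log\widetilde\rho)=\Tr(\widetilde\rho\,\log\widetilde\rho)=-S(\widetilde\rho)$, and expanding $S(\rho\|\widetilde\rho)=-S(\rho)-\Tr(\rho\log\widetilde\rho)$ yields $C^{(3)}=S(\widetilde\rho)-S(\rho)$. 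The problem is thus reduced to showing $S(\widetilde\rho_{XYZ})=S(XY)_\rho+S(YZ)_\rho-S(Y)_\rho$, after which $C^{(3)}=S(XY)_\rho+S(YZ)_\rho-S(Y)_\rho-S(XYZ)_\rho=I(X:Z|Y)_\rho$ follows immediately.

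The upper bound is the easy half. For every $\sigma\in R_2$ we have $\sigma_{XY}=\rho_{XY}$, $\sigma_{YZ}=\rho_{YZ}$ and $\sigma_Y=\rho_Y$, so strong subadditivity applied to $\sigma$ gives $S(\sigma_{XYZ})\le S(\sigma_{XY})+S(\sigma_{YZ})-S(\sigma_Y)=S(XY)_\rho+S(YZ)_\rho-S(Y)_\rho$; maximizing over $R_2$ bounds $S(\widetilde\rho)$ from above by the right-hand side, i.e. $C^{(3)}\le I(X:Z|Y)_\rho$. For the matching lower bound I would exhibit an explicit state $\sigma^\star\in R_2$ that saturates strong subadditivity. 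The natural candidate is the canonical (Petz) Markov extension along $X\!-\!Y\!-\!Z$, namely $\sigma^\star=\mathcal R_{Y\to YZ}(\rho_{XY})$ with $\mathcal R_{Y\to YZ}(\cdot)=\rho_{YZ}^{1/2}\bigl(\rho_Y^{-1/2}(\cdot)\rho_Y^{-1/2}\bigr)\rho_{YZ}^{1/2}$. By construction $\sigma^\star$ is a quantum Markov chain, so $I(X:Z|Y)_{\sigma^\star}=0$ and it saturates strong subadditivity with $S(\sigma^\star)=S(XY)_\rho+S(YZ)_\rho-S(Y)_\rho$; moreover $\sigma^\star_{XY}=\rho_{XY}$ and $\sigma^\star_{YZ}=\rho_{YZ}$, because the Petz map preserves the input marginal and recovers $\rho_{YZ}$ from $\rho_Y$.

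The hard part---and the only place the exactness of the area law is genuinely used---is verifying the remaining marginal constraint $\sigma^\star_{XZ}=\rho_{XZ}$, so that $\sigma^\star$ actually lies in $R_2$. Since $Y$ separates $X$ from $Z$, decay of correlations (sharpened by the exact area law) gives $\rho_{XZ}=\rho_X\otimes\rho_Z$, and as $\sigma^\star_X=\rho_X$, $\sigma^\star_Z=\rho_Z$ the task is to show $I(X:Z)_{\sigma^\star}=0$, i.e. that the Markov recovery does not generate spurious $X$--$Z$ correlation. This is not automatic for a general conditionally-correlated state, so here I would invoke the exact area law to establish that $\rho$ is a quantum Markov network whose only $X$--$Z$ correlations are mediated through (both components of) $Y$ in exactly the way reproduced by $\mathcal R_{Y\to YZ}$; concretely one shows that the boundary (area) terms cancel so precisely that the $XZ$ marginal of the recovered state coincides with $\rho_X\otimes\rho_Z$. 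Granting this, $\sigma^\star\in R_2$ forces $S(\widetilde\rho)\ge S(\sigma^\star)=S(XY)_\rho+S(YZ)_\rho-S(Y)_\rho$, which together with the upper bound closes the argument and yields $C^{(3)}(\rho_{XYZ})=I(X:Z|Y)_\rho$. I expect this last marginal-matching step to be the crux: for a merely approximate area law the recovery incurs an error, and the equality would degrade to an approximate one controlled by the deviation from saturation.
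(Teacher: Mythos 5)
The paper does not actually prove this proposition --- it is imported verbatim from Ref.~\cite{KatoFM16} with no argument reproduced --- so the comparison below is against the strategy of that cited proof. Your first two moves are correct and standard: the Pythagorean identity $C^{(3)}(\rho_{XYZ})=S(\widetilde\rho_{XYZ})-S(\rho_{XYZ})$ for a max-entropy state over a linear family (the same fact the paper invokes via \cite{Weis10}, with the usual caveat about rank-deficient states, which is not cosmetic here since these ground states are highly rank-deficient), and the strong-subadditivity upper bound $C^{(3)}\leq I(X:Z|Y)_\rho$. The genuine gap is in the lower bound, and it occurs \emph{earlier} than the step you flag as the crux. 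The state $\sigma^\star=(\mathrm{id}_X\otimes\mathcal{R}_{Y\to YZ})(\rho_{XY})$ is not guaranteed to satisfy $\sigma^\star_{XY}=\rho_{XY}$: the Petz map only guarantees $\mathcal{R}_{Y\to YZ}(\rho_Y)=\rho_{YZ}$, and the composite channel $\Tr_Z\circ\mathcal{R}_{Y\to YZ}$ acting on $Y$ is not the identity in general, so conjugating under $\mathrm{id}_X$ deforms the $XY$ marginal. For the same reason $\sigma^\star$ is not ``by construction'' a quantum Markov chain: a state of the form $(\mathrm{id}_X\otimes\Lambda_{Y\to YZ})(\omega_{XY})$ has $I(X:Z|Y)=0$ only when $\omega_{XY}$ coincides with the output state's \emph{own} $XY$ marginal, which is precisely what fails. (For instance, the copying channel $|y\rangle\mapsto|y\rangle|y\rangle$ applied to a maximally entangled $\rho_{XY}$ yields a GHZ state with $I(X:Z|Y)=1$.) More structurally: if your construction always produced a Markov chain with the correct $XY$ and $YZ$ marginals, then every compatible pair of overlapping marginals would admit an SSA-saturating extension, which is known to be false for quantum states. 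So the existence of an element of $R_2$ saturating strong subadditivity --- which is the entire content of the proposition --- cannot be obtained from a generic recovery-map construction; it is exactly where the exact area law must do all the work, not only in the $XZ$-marginal check you isolate at the end.

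The proof in \cite{KatoFM16} closes this gap by a different route: it uses the exact area law (no $o(1)$ correction) to show that all conditional mutual informations associated with topologically trivial subdivisions of $X$, $Y$, $Z$ vanish identically, so that the ground state is a quantum Markov chain on a suitable finer partition, and then \emph{explicitly constructs} the saturating element of $R_2$ from that Markov structure --- for the fixed-point models it is the incoherent mixture over the superselection sectors of the annulus, essentially the state $\tau_A$ of the present paper --- verifying its two-body marginals and computing its entropy directly. If you want to rescue your version, you must prove, rather than posit, that the exact area law forces the recovered state to reproduce all three two-body marginals and to have vanishing conditional mutual information; by the time you have done so you will have reconstructed the Markov-network argument of the cited reference.
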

Bravyi's counter example satisfies the condition of this theorem. Therefore, $C^{(3)}(\rho_{XYZ})$ is not an invariant of gapped phases and neither is $I(X:Z|Y)_\Omega$. More precisely, Bravyi gives an example of a state where these quantities are non-zero, but which nevertheless is not topologically ordered. 
The information convex $\Sigma(A)$ is a subset of $R_2$ under an appropriate partition of $A$. Indeed, it is a strict subset in the case of Bravyi's counter example, in which $C^{(3)}(\rho_{XYZ})=I(X:Y|Z)_\Omega$ takes a non-trivial value while $\calI(A)_\Omega=0$.  
Therefore these two equivalent quantities quantify not only ``topological'' contributions but can also contain ``non-topological'' contributions. We expect that $\calI(A)_\Omega$ (or a suitable generalization of it) captures the topological part and it provides a lower bound of these quantities:  
\begin{conj}\label{conj} Suppose $|\Omega\>$ is a ground state of a Hamiltonian satisfying all assumptions. 
For any tripartition $XYZ$ of $A$ such that $Y$ separates $X$ from $Z$ as depicted in Fig.~\ref{fig:TEE1}, it holds that
\begin{equation}
I(X:Z|Y)_\Omega\geq \calI(A)_\Omega\,.
\end{equation}
\end{conj}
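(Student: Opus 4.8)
The plan is to reduce the conditional mutual information to a single entropy evaluated over the information convex, and then compare it with $\calI(A)_\Omega$ using the dimension formula of Theorem~\ref{thm:InvGNSdim}. Fix the tripartition of Fig.~\ref{fig:TEE1}, writing $Y=Y_1\cup Y_2$ for the two pieces separating $X$ from $Z$. The first observation is that each of $XY$, $YZ$, and $Y$ is \emph{non-wrapping}: deleting the single arc $Z$ (resp. $X$) from the annulus leaves a simply connected ``thick arc'', and $Y\subset XY$. Hence $(XY)_+\subset A_+$ and $(YZ)_+\subset A_+$, so every $|\psi\>\in\calH_\Omega=\calE|\Omega\>$ lies in the ground subspace of $H_{(XY)_+}$ and of $H_{(YZ)_+}$. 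By the LTQO$'$ condition~\eqref{eq:ltqo22}, all states in $\Sigma(A)$ therefore share the same marginals on $XY$, on $YZ$, and (taking a further partial trace) on $Y$. Consequently, for every $\sigma\in\calS(\calH_\Omega)$,
\begin{equation}
I(X:Z|Y)_\sigma=c-S(\sigma_A),\qquad c:=S(\rho_{XY})+S(\rho_{YZ})-S(\rho_Y),
\end{equation}
where $c$ is constant over $\Sigma(A)$ and $\rho=|\Omega\>\<\Omega|$.

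Next I would apply this identity to the completely mixed state $\tau$ on $\calH_\Omega$. Strong subadditivity gives $I(X:Z|Y)_\tau\geq 0$, i.e. $c\geq S(\tau_A)$. Since $\rho_A=\rho^1_A$ is the vacuum reduced state (Lemma~\ref{lemma:vacRDM}), subtracting the two evaluations of the displayed identity yields
\begin{equation}
I(X:Z|Y)_\Omega=I(X:Z|Y)_\tau+\bigl(S(\tau_A)-S(\rho^1_A)\bigr)\geq S(\tau_A)-S(\rho^1_A).
\end{equation}
Thus the conjecture follows once I establish $S(\tau_A)-S(\rho^1_A)\geq \calI(A)_\Omega=-\log p_1$.

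By Theorem~\ref{thm:convex} and Lemma~\ref{lemma:vacRDM}, the information convex is the simplex $\Sigma(A)=\{\bigoplus_a q_a\rho_A^a\}$ and $\tau_A=\bigoplus_a p_a\rho_A^a$ with $p_a=d_\Omega^a/d_\Omega$. Expanding the entropy of a block-diagonal state,
\begin{equation}
S(\tau_A)-S(\rho^1_A)=H(\{p_a\})+\sum_a p_a\bigl(S(\rho_A^a)-S(\rho^1_A)\bigr),
\end{equation}
and a short computation shows that the desired inequality $S(\tau_A)-S(\rho^1_A)\geq-\log p_1$ is equivalent to the $p$-averaged \emph{sector entropy--dimension bound} $\sum_a p_a\bigl(S(\rho_A^a)-S(\rho^1_A)\bigr)\geq\sum_a p_a\log(d_\Omega^a/d^1_\Omega)$, with equality exactly when $S(\rho_A^a)-S(\rho^1_A)=\log(d_\Omega^a/d^1_\Omega)$ for each sector.

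Establishing this last bound is the crux, and I expect it to be the main obstacle: it asserts that the extra entropy carried by the reduced state of the sector-$a$ vector is at least the logarithm of the relative GNS dimension (equivalently $2\log d_a$ in terms of the anyonic quantum dimension $d_a$). For the quantum double models this holds with equality, which is consistent with Theorem~\ref{thm:qdouble} and forces $I(X:Z|Y)_\Omega=\calI(A)_\Omega=\log\calD^2$ there; for general LCPC models satisfying our assumptions I do not see how to control $S(\rho_A^a)$ from the algebraic data alone. This is where additional input would be required, for instance an explicit recovery (disentangling) map relating the sector states $\rho_A^a$ to $\rho^1_A$ across the two boundary components, or a strengthening of the stable logical algebra condition that pins down these reduced entropies.
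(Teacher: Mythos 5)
You have not proved the statement, and you say so yourself: the ``sector entropy--dimension bound'' $\sum_a p_a\bigl(S(\rho_A^a)-S(\rho^1_A)\bigr)\geq\sum_a p_a\log(d_\Omega^a/d^1_\Omega)$ is left entirely open, and it is precisely the hard part. To be fair to you, the statement is labelled a \emph{Conjecture} in the paper, and the paper does not prove it either. What the paper does prove is exactly the weaker bound you derive in your first two steps, namely $I(X:Z|Y)_\Omega\geq S(\tau_A)-S(\rho_A)$ (its Proposition following the conjecture), obtained by the same mechanism: LTQO$'$ forces all states in $\Sigma(A)$ to share marginals on the topologically trivial subregions $XY$, $YZ$, $Y$, and strong subadditivity applied to $\tau$ does the rest. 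Your derivation of that part is correct and essentially identical to the paper's, modulo one point you should make explicit: LTQO$'$ is stated for discs, and $Y$ has two connected components, so you need the paper's additional remark that the reduced state on $Y$ factorizes over its connected components.

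Where you and the paper part ways is only in how the residual gap is packaged. The paper observes that the conjecture would follow from $S(\tau_A)-S(\rho_A)=S(\rho_A\|\tau_A)$, i.e.\ from $\tau_A$ being the maximum-entropy state of a convex set containing $\rho_A$ cut out by linear constraints (a Pythagorean-type identity), and notes this is verified only for quantum double models and is not known to be stable under constant-depth circuits. Your reformulation as a per-sector entropy inequality $S(\rho_A^a)-S(\rho^1_A)\geq\log(d_\Omega^a/d^1_\Omega)$ is algebraically equivalent (expand $S(\tau_A)$ over the block decomposition $\tau_A=\bigoplus_a p_a\rho_A^a$ and use $-\log p_a=\log(d_\Omega/d_\Omega^a)$), and it is arguably a more concrete target --- it isolates what must be shown about each charge sector rather than about the maximizer of the entropy over $\Sigma(A)$. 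But neither formulation is established from the paper's assumptions (LCPC, LTQO, uniform stable logical algebra), and nothing in your write-up supplies the missing input. So the proposal should be read as a correct rederivation of the paper's partial result plus an equivalent restatement of the open problem, not as a proof of the conjecture.
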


A similar bound is easy to check for the entropy difference instead of the relative entropy:
\begin{prop}
We have the following lower bound 
\begin{equation}
I(X:Z|Y)_\Omega\geq S(\tau_A)-S(\rho_A)\,,
\end{equation}
for for any tripartition $XYZ$ of $A$ such that $Y$ separates $X$ from $Z$ as depicted in Fig.~\ref{fig:TEE1}.
\end{prop}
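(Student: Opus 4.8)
The plan is to exploit the defining property of the information convex $\Sigma(A)$ from equation~\eqref{def:infoconv}: any two of its elements have identical marginals on every disc-like subregion of $A$, as follows from LTQO'~\eqref{eq:ltqo22}. Both relevant states qualify, since $\rho_A=\Tr_{A^c}|\Omega\>\<\Omega|$ and $\tau_A$ are reduced states of density matrices supported on $\calH_\Omega$ (namely $|\Omega\>\<\Omega|$ and $\tau\propto\Pi_{A_+}$). First I would record the geometry forced by the hypothesis that $Y$ separates $X$ from $Z$. Since $X\sqcup Y\sqcup Z=A$ exhausts the annulus and, as depicted in Fig.~\ref{fig:TEE1}, the separating region $Y$ cuts the ring into the parts containing $X$ and $Z$, the complements $XY=A\setminus Z$ and $YZ=A\setminus X$ are each obtained by deleting a region joining the inner and outer boundary, hence are simply connected (disc-like). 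Lying inside the disc $b(R)$ with $R\leq L^*$, they fall under the scope of the equal-marginals property, whereas only the full region $XYZ=A$ retains the non-contractible annular topology.

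With the geometry settled, the main step is to transfer the entropies of the protected regions from $\rho$ to $\tau$. The equal-marginals property gives $\rho_{XY}=\tau_{XY}$ and $\rho_{YZ}=\tau_{YZ}$; tracing out $X$ (or $Z$) then yields $\rho_Y=\tau_Y$ as well. Hence $S(XY)_\Omega=S(XY)_\tau$, $S(YZ)_\Omega=S(YZ)_\tau$ and $S(Y)_\Omega=S(Y)_\tau$, while the only term outside the reach of this argument is the full-annulus entropy, $S(XYZ)_\Omega=S(\rho_A)$, to be compared with $S(XYZ)_\tau=S(\tau_A)$. Substituting into the definition of the conditional mutual information and reassembling the $\tau$-terms (by adding and subtracting $S(\tau_A)$) into $I(X:Z|Y)_\tau$ gives
\begin{align}
I(X:Z|Y)_\Omega &= S(XY)_\tau + S(YZ)_\tau - S(Y)_\tau - S(\rho_A)\\
&= I(X:Z|Y)_\tau + S(\tau_A) - S(\rho_A)\,.
\end{align}
Finally, strong subadditivity of the von Neumann entropy states precisely that $I(X:Z|Y)_\tau\geq 0$, which immediately yields the claimed bound $I(X:Z|Y)_\Omega\geq S(\tau_A)-S(\rho_A)$. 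Note that, unlike the discussion surrounding the area law, this argument is exact and assumes no $o(1)$ corrections.

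I expect the only genuinely delicate point to be the topological claim of the first paragraph: one must check that for the admissible tripartition the composite regions $XY$ and $YZ$ are honestly contractible and small enough (within a disc of size $\leq L^*$) that LTQO'---and hence the equal-marginals property of $\Sigma(A)$---applies to them, while $A$ itself does not. This is the step that encodes where the topological information actually resides. Once it is pinned down, the remainder is the one-line manipulation above together with the non-negativity of the conditional mutual information.
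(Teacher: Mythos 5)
Your proof is correct and takes essentially the same route as the paper: both arguments use LTQO' to identify the marginals of $\rho_A$ and $\tau_A$ on the topologically trivial regions $XY$, $YZ$ and $Y$, and then invoke strong subadditivity for $\tau_A$ (your step $I(X:Z|Y)_\tau\geq 0$ is exactly the paper's inequality $S(\tau_A)\leq S(\tau_{XY})+S(\tau_{YZ})-S(\tau_Y)$). The only cosmetic difference is that you obtain $\rho_Y=\tau_Y$ by tracing down from $XY$, whereas the paper remarks separately that LTQO' still applies to the disconnected region $Y$ because its reduced state factorizes over components.
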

\begin{proof} 
By LTQO', any $|\phi\>\in\calH_\Omega$ is indistinguishable from $|\Omega\>$ for any subregion of $XYZ$, e.g., $XY$ which has trivial topology. 
Hence, $\tau_A$ has the same local marginals as of $\rho_A$. From the strong subadditivity for system $A=XYZ$, we have
\begin{align}
S(\tau_A)-S(\rho_A)&\leq S(\tau_{XY})+S(\tau_{YZ})-S(\tau_Y)-S(\rho_A)\\
&=S(\rho_{XY})+S(\rho_{YZ})-S(\rho_Y)-S(\rho_{XYZ})\\
&=I(X:Z|Y)_\Omega\,.
\end{align}
Note that we can use LTQO' for e.g., $Y$ in Fig.~\ref{fig:TEE1} which has multiple connected components, since the reduced state is a product of that of connected regions.
\end{proof}
Therefore, the conjecture holds if the relative entropy difference is equal to the entropy difference, i.e.,
\begin{equation}
S(\tau_A)-S(\rho_A)=S(\rho_A\|\tau_A)
\end{equation}
holds. This condition is known to be satisfied when $\tau_A$ is the maximum entropy state of a convex set containing $\rho_A$ which is defined by linear constraints~\cite{Weis10}. 
In our case, the convex set is the information convex $\Sigma(A)$. 
Note that the entropy difference between the maximum entropy state in $\Sigma(A)$ and $\rho_A$ has been studied in Ref.~\cite{Shi18}. 
While $\tau_A$ coincides with the maximum entropy state in quantum double models, it is still unclear that the equivalence is stable under constant-depth local circuits.

\section{Discussion}
In this paper, we have introduced an entropic quantity $\calI(A)_\Omega$ of 2D gapped phase described by LCPC Hamiltonians. $\calI(A)_\Omega$ is defined based on the operator algebras of logical operators defined for annulus, and it is invariant under constant-depth local circuits.  
We have also shown that $\calI(A)_\Omega$ is equivalent to the logarithm of the ratio of the corresponding superselection sectors defined via the GNS Hilbert space constructed by the algebras.   
We have demonstrated that $\calI(A)_\Omega$ matches $\log\calD^2$ for the toric code model, or more generally the quantum double models $D(G)$, including models with non-abelian anyons. 

Several questions still remain. Especially, it is desirable to extend the framework so that the equality with $\log\calD^2$ holds for models with irrational quantum dimension, like the double Fibonacci model. To obtain an irrational number, we might need to consider the superselection sectors in some asymptotic setting, since it represents the asymptotic growth ratio of the dimension of certain Hilbert space in the modular tensor category description. Another important direction is proving Conjecture~\ref{conj}. Again, it might be true only in a certain asymptotic scenario. Once the conjecture will be shown, we have a decomposition of the conditional mutual information into ``topological'' contribution and ``non-topological'' contribution. While the known fixed-point models of non-chiral topologically ordered phases are described by LCPC Hamiltonian, it would be desirable to extend our framework to general frustration-free Hamiltonians.  Indeed, an extension of the information convex for frustration-free Hamiltonians is discussed in Ref.~\cite{Shi18}. However, it is unclear if the corresponding logical operators form a proper $C^*$-algebra. 

Superselection sectors for anyon models are also considered in the thermodynamic limit (for infinitely large spin systems or in algebraic quantum field theory). 
In these theories,  factors of von Neumann algebras, which are subalgebras containing the identity as the center, play an important role. 
In contrast, we are considering finite-dimensional algebras and we do not have factors. 
As discussed in Appendix~\ref{sec:thermodynamic}, the relative entropy, the total quantum dimension and the so-called Jones index are mutually connected (see also Eq.~(7) of Ref.~\cite{FiedlerNO17}). 
Connecting our theory of finite-dimensional framework to these infinite-dimensional framework is desirable to obtain the most general understanding of the origin of the topological entanglement entropy.

\begin{acknowledgments}
KK is thankful to Bowen Shi for helpful discussions and sharing his notes on calculation of the information convex. KK acknowledges funding provided by the Institute for Quantum Information and Matter, an NSF Physics Frontiers Center (NSF Grant PHY-1733907) and JSPS KAKENHI Grant Number JP16J05374.
PN has received funding from the European Union’s Horizon 2020 research and innovation program under the Marie Sklodowska-Curie grant agreement No 657004 and the European Research Council (ERC) Consolidator Grant GAPS (No. 648913). 
\end{acknowledgments}

\appendix

\section{Logical algebra and the GNS construction}\label{app:opalg}
In this appendix we discuss some of the more technical properties of the logical algebras.
In particular, we show that for the \emph{stable logical algebra condition} it is enough to check if the inclusion map induces an isomorphism of the logical algebras related to an inclusion of cones.
We also outline the basics of the \emph{GNS construction}, which gives a canonical way to obtain a Hilbert space and an representation given a state on an abstract $C^*$-algebra.
In the applications that we have in mind this gives rise to the appearance of different superselection sectors.

\subsection{Stable logical algebra condition}
In Section~\ref{sec:logicalalgebra} we introduced \emph{stable logical algebra condition}.
This says that the logical algebras for different (large enough) annuli are isomorphic.
This is true in particular if we have an inclusion $A_1 \subset A_2$ of annuli.
This induces in inclusion of the corresponding algebras of observables.
What is not so clear, however, is that this embedding in fact induces an isomorphism of the corresponding logical algebras.
The following proposition shows that this is nevertheless the case, hence it is enough to check the stable logical algebra condition for this particular map.\\

\noindent{\bf Proposition~\ref{prop}} {\it
Let $A_1 \subset A_2$ be two annuli and assume the uniform stable logical algebra condition~\eqref{def:stability}.
Then the identity map provides a natural embedding $\iota : \calA_1 \to \calA_2$, which induces an isomorphism $\calA_1/\calN_{A_1} \to \calA_2/\calN_{A_2}$ of the quotient algebras, where we used the notation of Section~\ref{sec:logicalalgebra}.
}
\begin{proof} 
Note that $\calA, \calE$ and $\calN$ all depend on the choice of annulus.
We will write $\calA_i, \calE_i$ and $\calN_i$ for the corresponding algebras and ideals, and define $\calN_{A_i} \equiv \calN_i \cap \calA_i$.
The corresponding equivalence classes  $\calA_i/\calN_{A_i}$ are written $[A]_i$.

First note that if $A \in \calA_1$ it follows that $[A, h_j] = 0$ for all $h_j$ with $\supp(h_j) \subset A_{2,+}$, either by locality or because $A \in \calA_1$.
Hence there is a linear map $\iota_{A_1 A_2} : \calA_1 \to \calA_2$ given by the natural inclusion.
We claim that this induces an inclusion of the quotient algebras $\calA_1/\calN_{A_1} \to \calA_2/\calN_{A_2}$.
Note that it is enough to show that $\calN_{A_1} \subset \calN_{A_2}$.
If $N \in \calN_{A_1}$, then by definition $\supp(N) \subset A_1$ and $N \Pi_{A_{1,+}} = 0$.
But by locality $[N, h_j] = 0$ if $\supp(h_j) \subset A_1^c$.
Because by assumption all $h_j$ commute and the model is frustration free, we have
\begin{equation}
	\label{eq:prodprojection}
	N \Pi_{A_{2,+}} = N \left( \prod_{\supp(h_j) \subset (A_2 \setminus A_{1,+})_+} h_j \right) \Pi_{A_{1,+}} = \left( \prod_{\supp(h_j) \subset (A_2 \setminus A_{1,+})_+} h_j \right) N \Pi_{A_{1,+}} = 0,
\end{equation}
and hence $N \in \calN_{A_2}$.

We now claim that this map on the quotient algebras is injective.
To this end, let $A,B \in \calA_1$ and suppose that $[A]_2 = [B]_2$.
Then $A-B = N$ for some $N \in \calN_2$.
Note that the left-hand side is supported on the annulus $A_1$, and hence by locality, commutes with any operator supported on the complement of $A_1$.
But this implies that $\supp(N) \subset A_1$, and it remains to be shown that $N \Pi_{A_{1,+}} = 0$.

To reach a contradiction, suppose that $N \Pi_{A_{1,+}} \neq 0$.
As in equation~\eqref{eq:prodprojection}, we can write $N \Pi_{A_{2,+}}$ as a product of $N \Pi_{A_{1,+}}$ and a projection which we will write as $\Pi_{(A_2\setminus A_{1,+})_+}$.
Because $\supp(N) \subset A_1$ and the terms $h_j$ in the Hamiltonian mutually commute, it follows that $N \Pi_{A_{1,+}}$ and $\Pi_{(A_2\setminus A_{1,+})_+}$.
Note however that because there may be terms $h_j$ outside of $A_1$ with $\supp(h_j) \cap A_{1,+} \neq \emptyset$, hence the two operators do not have disjoint support.
Nevertheless, one can actually factor the Hilbert space such that these operators act on different tensor factors, by the commuting property and because our algebras are finite dimensional.
Indeed, this follows from Theorem~1 of~\cite{ScholzWerner08}.
From this tensor product decomposition we see that $N\Pi_{A_{2,+}}\neq0$ if $N\Pi_{A_{1,+}}\neq0$, and hence we conclude by contradiction that $N \Pi_{A_{1,+}} = 0$ and $N \in \calN_{A_1}$.

This leads to the conclusion that the inclusion map $\iota$ induces an inclusion on the quotient algebras.
Because the algebras $\calA_i/\calN_{A_{i,+}}$ are finite dimensional and isomorphic by assumption, this map must be surjective as well.
This completes the proof.
\end{proof}

\subsection{Relation to GNS construction}
In Section~\ref{sec:setting} we defined the Hilbert space $\mathcal{H}_\Omega$, capturing the different superselection sectors.
This construction can be understood naturally in terms of representations of operator algebras.
The essential idea behind the construction is that the algebra of observables which do not change the total charge inside the annulus, which we called $\mathcal{C}$ before, is represented on different ways on the Hilbert space $\mathcal{H}_\Omega$, corresponding to the different sectors.
Here give a brief introduction to the main aspects needed to understand this connection.

In an operator algebraic approach it is often convenient to consider abstract algebras without any reference to a Hilbert space.
A state in that setting is then a positive linear functional $\omega$ on the algebra, normalized such that $\omega(I) = 1$.
For finite dimensional algebras this is equivalent to $\omega(A) = \Tr(\rho A)$ for some density matrix $\rho$ with unit trace, but for infinite systems not all states are of this form.
On the other hand, the Hilbert space picture is very useful in quantum mechanics.
Hence it is useful to go from the abstract picture back to the Hilbert space picture in a canonical way.
The \emph{Gel'fand-Naimark-Segal} (GNS) construction provides such a method.
It yields a representation of a $C^*$-algebra $\calM$ as bounded operators on some Hilbert space, in such a way that $\omega$ is represented by a vector in this Hilbert space.
In a sense it can be understood as a form of \emph{purification}, although if the representation is not irreducible, the state is not pure (seen as a state of $\calM$).
It is a standard tool in operator algebra, and can be found in most textbooks on the subject, see for example~\cite{TakesakiI,Naaijkens17}.

For simplicity we only consider the case that $\mathcal{M}$ is a finite dimensional, unital algebra.
It follows that $\mathcal{M}$ is of the form $\bigoplus_k M_{n_k}(\mathbb{C})$ for a (unique, up to permutation) finite sequence of integers $n_k$.
Let $\omega$ be a state on $\mathcal{M}$, that is $\omega(A) = \Tr(\rho A)$ for some positive operator $\rho \in \mathcal{M}$ of unit trace, and all $A \in \mathcal{M}$.
The goal is to define a new Hilbert space $\mathcal{H}_\omega$ and a representation $\pi_\omega : \mathcal{M} \to \mathcal{B}(\mathcal{H}_\omega)$ such that there is a vector $|\Omega\> \in \mathcal{H}_\omega$ with $\omega(A) = \< \Omega | \pi_\omega(A) | \Omega \>$ for all $A \in \calM$.
In other words, in the new representation the state is represented by a vector.
Moreover, this vector is \emph{cyclic}, in the sense that $\pi_\omega(\calA)|\Omega\> = \mathcal{H}_\omega$.

To define the Hilbert space, first we define
\begin{equation}
	\calJ := \{ A \in \calM : \omega(A^\dagger A) = 0 \}.
\end{equation}
It follows from the Cauchy-Schwarz inequality of positive linear functionals that $\calJ$ is a linear space.
In fact, it can be shown that $\calJ$ is a left ideal of $\calM$, in the sense that $AJ \in \calJ$ for all $A \in \calM$ and $J \in \calJ$.
The Hilbert space $\mathcal{H}_\omega$ is then defined to be the quotient (as a vector space) $\mathcal{H}_\omega := \cal{M}/\mathcal{J}$.
We will write $|[A]\>$ for the equivalence class of a representative $A \in \calM$.
The definition of $\mathcal{H}_\omega$ is complete by defining an inner product, by setting $\< [A] | [B] \> := \omega(A^\dagger B)$.
By the remark above this is well-defined.
The inner product is also non-degenerate precisely because we divide out the ideal $\calJ$, which correspond to vectors of length zero.

The representation of $\calM$ can be defined by its action on the vectors of $\calH$: $\pi_\omega(A) |[B]\> := |[AB]\>$.
Again, this is well-defined because $\calJ$ is a left ideal.
It is also straightforward to check that $\pi_\omega$ is linear, $\pi_\omega(AB) = \pi_\omega(A) \pi_\omega(B)$ and $\pi_\omega(A^\dagger) = \pi_\omega(A)^\dagger$.
Hence $\pi_\omega$ is a representation.
Finally, $|\Omega\> := |[I]\>$ has the properties claimed.

Before we discuss an example, we first mention two more properties of the GNS construction.
Firstly, it is unique up to unitary equivalence: if $(\pi_\omega', \mathcal{H}_\omega', \Omega')$ is another triple, then there is a unitary $U: \calH_\omega' \to \calH_\omega$ such that $\pi_\omega(A) = U \pi_\omega'(A) U^\dagger$ and $U |\Omega'\> = |\Omega\>$.
Secondly, $\pi_\omega$ is an irreducible representation if and only if $\omega$ is a pure state, in the sense that it cannot be written as a convex combination of two distinct states.
In the present setting, this is equivalent to $\pi_\omega(\calA) \cong M_k(\mathbb{C})$ for some integer $k$.

We now apply this construction to the algebras defined in Section~\ref{sec:setting}.
In particular, let $|\Omega\rangle$ be a ground state of the Hamiltonian.
This induces a state on $\calE$ by $\omega(O):=\<\Omega|O|\Omega\>$. 
We denote the corresponding GNS triplet $(\pi^\Omega, \calH_\Omega, |[I]\>)$. 
The GNS Hilbert space has a physical interpretation established by the following isomorphism.

\blm \label{lemma:GNS=spanE}
There is an isomorphism of Hilbert spaces such that 
\begin{equation}\label{eq:eqGNS1}
\calH_\Omega\cong \calE|\Omega\>=\left\{|\psi\>\in\calH\,\left|\;  h_j|\psi\>=|\psi\> \quad {\rm if }\;\supp(h_j)\subset A_+ \right.\right\}\,.
\end{equation}
Moreover, $\pi^\Omega$ is an irreducible representation of $\calE$.
\elm
Note: this essentially follows from the uniqueness of the GNS representation (up to unitary equivalence), but we give an explicit proof for the benefit of the reader.
\begin{proof}
 By construction, $\calH_\Omega:=\calE/\calJ=\pi^\Omega(\calE)|[I]\>$, where $\calJ$ is the left ideal as defined above.  
We first show that the linear map $\iota:\calH_\Omega\to \calE|\Omega\>$ defined by $\iota:|[O]\>\mapsto O|\Omega\>$ 
is an isomorphism. 
The map is well-defined: if $[O_1] = [O_2]$, then $O_1 = O_2 + J$, with $J \in \calJ$.
But $\omega(J^\dagger J) = \< J \Omega | J \Omega \> = 0$, and hence $J |\Omega\> = 0$.
Since $\iota$ is defined for all $O\in \calE$, it is surjective. To see the map is also injective, it is enough to check that the equivalence 
\begin{equation}
O_1|\Omega\>=O_2|\Omega\> \Leftrightarrow (O_1-O_2)|\Omega\>=0
\end{equation}
implies that $[O_1] = [O_2]$.
From $(O_1-O_2)|\Omega\> = 0$ it follows that $\omega((O_1-O_2)^\dagger(O_1-O_2)) = 0$, and hence $O_1-O_2 \in \calJ$. 
Finally, $\<[O_1]|[O_2]\> = \omega(O_1^\dagger O_2) = \<\Omega|O_1^\dagger O_2 |\Omega\>$, hence the inner product is preserved by $\iota$.

The second equality holds because we have $|\psi\>\<\Omega|\in \calE$ for  any $|\psi\>$ such that $H_{A_+}|\psi\>=|\psi\>$, and thus $(|\psi\>\<\Omega|)|\Omega\>=|\psi\>$. This implies 
\begin{equation}
\left\{|\psi\>\in\calH\,\left|\;  h_j|\psi\>=|\psi\> \quad {\rm if }\;\supp(h_j)\subset A_+ \right.\right\}\subset\calE|\Omega\>
\end{equation} 
(the other inclusion is easy to check).  Operators like $|\psi\>\<\Omega|$ and their conjugates span the full-matrix algebra on $\calE|\Omega\>$.
We also note that $\iota( \pi^\Omega(O_1)|[O_2]\> ) = \iota( |[O_1 O_2] \> ) = O_1 O_2 |\Omega\> = O_1 \iota(|[O_2]\>)$. Hence $\iota$ is compatible with the representation $\pi^\Omega$.
This together with the equivalence relation~\eqref{eq:eqGNS1} show that any vector in $\calE|\Omega\>$ is cyclic for $\pi^\Omega$, and it follows that is an irreducible representation of $\calE$ on $\calH_\Omega$. 
\end{proof}

We conclude with an observation on $\pi^\Omega$ which turns out to be useful.
\begin{lem}\label{lem:nkernel}
The representation $\pi^\Omega$ can be restricted to $\calE/\calN$.
\end{lem}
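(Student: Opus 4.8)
The plan is to show that $\calN$ is precisely the kernel of $\pi^\Omega$, from which it follows immediately that $\pi^\Omega$ descends to a well-defined representation of the quotient $\calE/\calN$. The general principle I would invoke is that a representation factors through $\calE/\calN$ exactly when $\calN$ is contained in its kernel; here I expect to establish the stronger equality $\calN = \Ker\pi^\Omega$, which additionally shows that the induced representation of $\calE/\calN$ is faithful.

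First I would unwind the action of $\pi^\Omega$ using the explicit isomorphism $\iota$ from Lemma~\ref{lemma:GNS=spanE}. Under the identification $\calH_\Omega \cong \calE|\Omega\>$ given by $\iota:|[B]\>\mapsto B|\Omega\>$, the representation becomes left multiplication, since $\iota(\pi^\Omega(O)|[B]\>) = OB|\Omega\>$. Hence $\pi^\Omega(O) = 0$ if and only if $O$ annihilates every vector of the form $B|\Omega\>$ with $B\in\calE$, i.e.\ if and only if $O|\psi\> = 0$ for all $|\psi\>\in\calE|\Omega\>$.

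Next I would identify $\calE|\Omega\>$ with the range of $\Pi_{A_+}$. By Lemma~\ref{lemma:GNS=spanE}, $\calE|\Omega\>$ equals the ground subspace $\{|\psi\>: h_j|\psi\>=|\psi\> \text{ whenever }\supp(h_j)\subset A_+\}$, which is by construction the range of the projector $\Pi_{A_+}$. Therefore the condition $O|\psi\>=0$ for all $|\psi\>\in\calE|\Omega\>$ is equivalent to $O\Pi_{A_+}=0$, which is exactly the defining condition~\eqref{def:algN} for membership in $\calN$. Chaining the two equivalences yields $\Ker\pi^\Omega = \{O\in\calE: O\Pi_{A_+}=0\} = \calN$, so $\pi^\Omega$ factors through $\calE/\calN$ as claimed.

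I do not anticipate a serious obstacle: the result is essentially a bookkeeping consequence of the explicit GNS data supplied by Lemma~\ref{lemma:GNS=spanE}. The only point requiring a moment's care is the translation between ``vanishing on the cyclic subspace $\calE|\Omega\>$'' and ``vanishing against the projector $\Pi_{A_+}$'', which rests on the identification of $\calE|\Omega\>$ with the range of $\Pi_{A_+}$ established earlier. As a consistency check, the equality $\calN = \Ker\pi^\Omega$ also accounts for why the induced representation of $\calE/\calN$ is faithful and irreducible on $\calH_\Omega$, matching the earlier claim that $\calE/\calN$ is isomorphic to a full matrix algebra.
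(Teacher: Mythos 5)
Your proof is correct and takes essentially the same route as the paper: both unwind $\pi^\Omega$ through the isomorphism of Lemma~\ref{lemma:GNS=spanE} and reduce the kernel condition to the statement that annihilating $\calE|\Omega\>=\operatorname{ran}\Pi_{A_+}$ is equivalent to $O\Pi_{A_+}=0$, i.e.\ to membership in $\calN$. The paper records only the inclusion $\calN\subseteq\Ker\pi^\Omega$, which already suffices for the quotient representation to be well defined, while you additionally observe the reverse inclusion and hence faithfulness --- a harmless strengthening of the same argument.
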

\begin{proof}
We show that $\calN$ is in the kernel of $\pi^\Omega$.
Indeed, by definition of the GNS triplet, we have
\begin{equation}
A\in \Ker \pi^\Omega \Leftrightarrow \pi^\Omega(A)|[O]\>=|[AO]\>=0, \;\forall O\in \calE\,.
\end{equation}
From the equivalence~\eqref{eq:eqGNS1}, the second condition is equivalent to
\begin{equation}
AO|\Omega\>=AO\Pi_{A_+}|\Omega\>=A\Pi_{A_+}O|\Omega\>=0,\; \forall O\in \calE\,.
\end{equation}
Therefore, if $A\in\calN$, then $A\in \Ker \pi^{\Omega}$. This inclusion implies $\pi^\Omega([A]_\calN):=\pi^\Omega(A)$ for $[A]_\calN\in \calE/\calN$ is a well-defined representation of $\calE/\calN$. 
\end{proof}
This is in fact the GNS representation of $\calE/\calN$ obtained by regarding $\omega([A]):=\omega(A)$ as a state on $\calE/\calN$. 

\section{Calculation of the Invariant for Quantum Double Models}\label{appendix:QD}
In this appendix, we will show that 
\begin{equation}
\calI_\Omega(A)=2\log\calD\,
\end{equation}
for $D(G)$, in which we have $\calD=|G|$. We will start from the definition of $D(G)$, and then reveal all extremal points of the information convex in the next subsection. 
We calculate $\calI(A)_\Omega$ in the last subsection. 

\subsection{Quantum Double Model \texorpdfstring{$D(G)$}{D(G)}}
The quantum double model $D(G)$, defined for any finite group $G$, is a generalization of the toric code model (which corresponds to $G=\mathbb{Z}_2$)~\cite{Kitaev03}. 
We recall the main definitions here.
The model is defined on a directed graph, where a Hilbert space ${\mathbb C}[G]=\Span\{|g\>|g\in G\}$ is associated to each edge as in Fig.~\ref{fig:QDlat}. 
For simplicity we assume a square graph with the same orientation as in the figure, but the model can be defined for more general directed graphs.
The left (right) multiplication operator is denoted by $L_g^+:=\sum_{h\in G}|gh\>\<h|$ $(L^-_g:=\sum_{h\in G}|hg^{-1}\>\<h|)$. 
Also, we denote projectors on to a group element by $T_g^+:=|g\>\<g|$ $(T_g^-:=|g^{-1}\>\<g^{-1}|)$. 
In a similar way to as in the toric code, the Hamiltonian is defined by vertex operators $A_v$ and plaquette operators $B_p$, which are defined as
\begin{align}
A_v:=\frac{1}{|G|}\sum_{g\in G}A_v^g, \quad B_p:=\sum_{h_1h_2h_3h_4=e}T_{h_1}^+T_{h_2}^+T_{h_3}^-T_{h_4}^-\,,
\end{align}
where $A_v^g=L_g^-L_g^+L_g^+L_g^-$, where the first operator $L_g^-$ acts on the site at the left side of $v$ and the rest in clockwise order (Fig.~\ref{fig:QDlat}). 
The $\pm$ signs of the multiplication operators are determined by whether the site is on an incoming edge ($+$) or on an outgoing edge ($-$). 
The direction of the edges can be changed by the corresponding local unitary, which maps $|g\>\mapsto |g^{-1}\>$.

Excitations of $D(G)$ are labeled by the irreducible representations of a Hopf algebra called the quantum double, first introduced by Drinfel'd~\cite{Drinfeld87}. They are specified by pairs $(R,C)$, where $C$ is a conjugacy class of $G$ and $R$ is an irreducible representation of the centralizer group of $C$~\cite{DijkgraafPR91}. We denote the set of all conjugacy class of $G$ by $(G)_{cj}$. For each conjugacy class $C\in (G)_{cj}$, we fix a representative $r_C$ and denote its centralizer group by $E(C):=\{g\,|\,gr_Cg^{-1}=r_C\}$. The elements of the conjugacy class are in one-to-one correspondence with the cosets $G/E(C)$. The quantum dimension of the charge $(R,C)$ is given by $d_{(R,C)}=n_R|C|$, where $n_R$ is the dimension of $R$. From the representation theory of finite groups, we always have  $\sum_{(R,C)}n^2_R|C|^2=|G|^2$. 

While excitations in the toric code model $D(\mathbb Z_2)$ are created by string operators or dual string operators, excitations in general $D(G)$ models are created by ribbon operators $\{\calF^{(h,g)}_\rho\}$, where $h,g\in G$ and $\rho$ is a ``ribbon'', a combination of a neighboring string and dual string. See e.g., Refs.~\cite{Kitaev03,BombinMD08} for more details.

\begin{figure}[htbp] 
\begin{center}
\includegraphics[width=6.0cm]{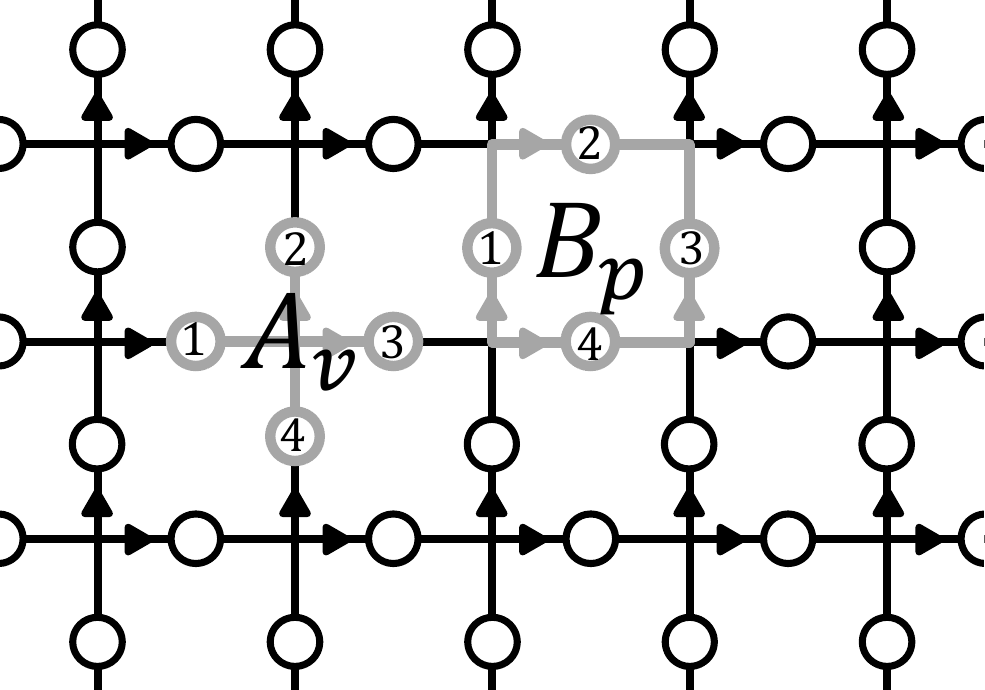}
\end{center}
\vspace{-4mm}
\caption{The quantum double model defined on a directed square lattice. Each $A_v$ acts on 4 sites around vertex $v$ and $B_p$ acts on 4 sites around plaquette $p$.  }
\label{fig:QDlat}
\end{figure}

\subsection{Calculation of \texorpdfstring{$\Sigma(A)$}{Sigma(A)}}
As already mentioned, the structure of information convex $\Sigma(A)$ for $D(G)$ has been derived in Ref.~\cite{Shi18}. In this section we explicitly calculate $\Sigma(A)$ for concreteness. Some of the techniques in the calculation will be used in the calculation of $\calI(A)_\Omega$. 
For simplicity we only consider the thinnest rectangular annulus $A$ as in Fig.~\ref{fig:QDthin}, but a similar argument can be applied to a general annulus~\cite{Shi18,BShipriv}. See the end of Appendix~\ref{appendix:QD}  for more details. 

Let us consider the quantum double model $D(G)$ defined on a square lattice embedded on a sphere. As shown in Lemma~\ref{lemma:GNS=spanE}, the GNS Hilbert space $\calH_\Omega$ for a ground state $|\Omega\>$ is equivalent to the ground subspace of  $H_{A_+}$:
\begin{equation}
\calH_\Omega\cong\left\{|\phi\>\in\calH \left| \Pi_{A_+}|\phi\>=|\phi\> \right.\right\}.
\end{equation}

We label the basis elements of sites at the inner boundary by $h_1,h_2,...,h_{n+4}$ and sites at the outer boundary by $H_1,...,H_N$, where $h_i,H_j\in G$, in such a way that the direction at the boundaries are aligned as depicted in Fig.~\ref{fig:QDthin}.  
We especially choose one site in the bulk of $A$ and label it by $t$. Other sites in $A$ are labeled by $g_1,...,g_m$. In this notation, a basis of $\calH_{A}$ is written as 
\begin{equation}\label{eq:labelA}
|\{h_i\},\{H_j\},\{g_k\},t\>_A\,,
\end{equation}
where $\{h_i\}=\{h_1,\dots,h_{n+4}\}$ and so on. The annulus $A$ contains $n+2N+4$ spins, the support of $N-4$ plaquette operators and $4$ vertex operators at the inner corners. 

When we restrict to the ground subspace of all $B_p$ such that $\supp(B_p)\subset A$, every $g_k$ is uniquely determined by $\{h_i\}, \{H_j\}$ and $t$.  For instance, $g_1=h_1tH_1^{-1}$ and $g_2=h_2h_1tH_1^{-1}H_2^{-1}$. For this reason, we will omit $\{g_i\}$ from the notation Eq.~\eqref{eq:labelA} in the following. The products $h:=h_1h_2 \cdots h_{n+4}$ and $H:=H_1H_2\cdots H_N$ are also restricted by the product of $B_p$ so that $h=tHt^{-1}$, which implies that $h$ and $H$ are in the same conjugacy class. For $C\in(G)_{cj}$, there exists a set of group elements $\{q_i\}_{i=1}^{|C|}$ such that   $h=q_ir_Cq_i^{-1}$, $H=q_jr_Cq_j^{-1}$ and $t=q_i{\bar t}q^{-1}_j$ for every $h,H\in C$ and  ${\bar t}\in E(C)$.

\begin{figure}[htbp] 
\begin{center}
\includegraphics[width=6.0cm]{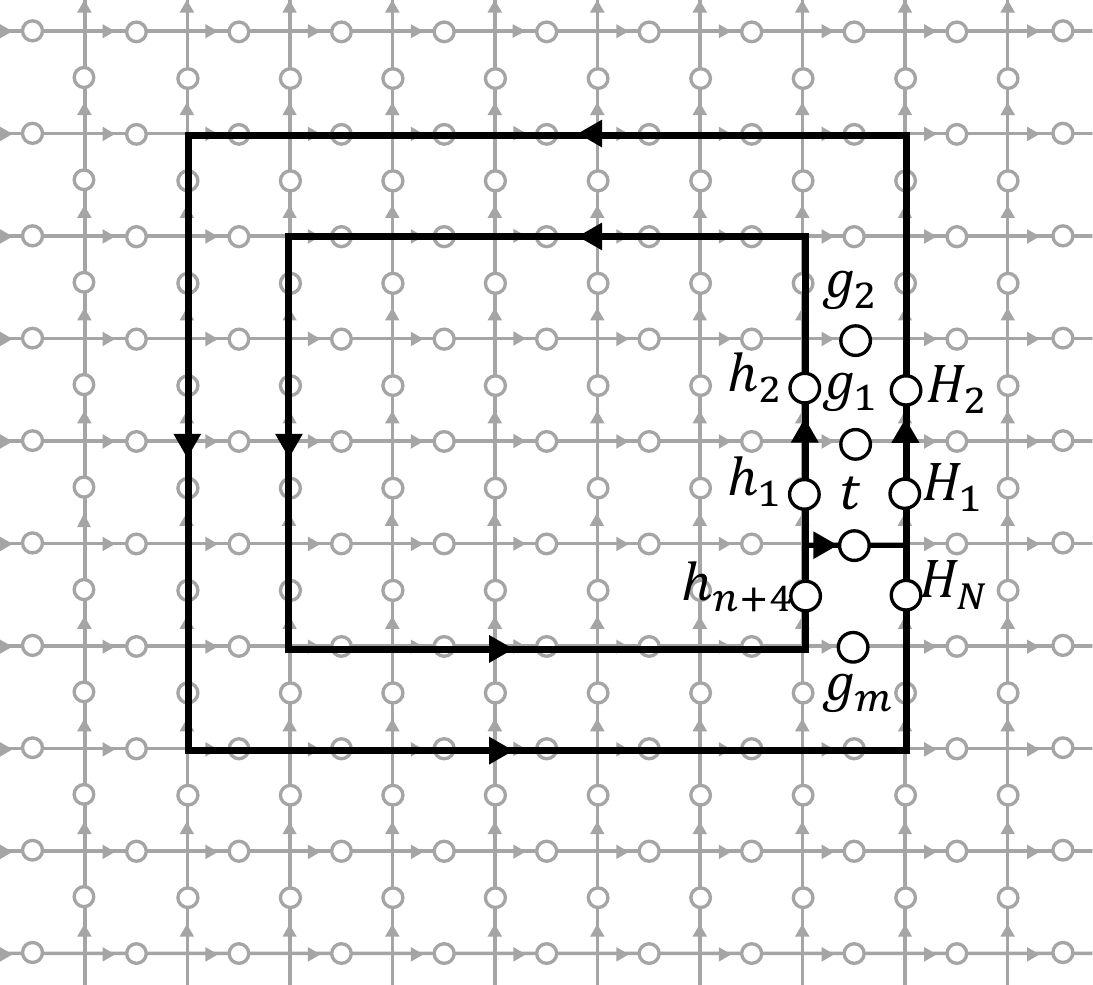}
\end{center}
\vspace{-4mm}
\caption{An example of the thinnest rectangular region $A$ (surrounded by thick lines). We choose the directions of both boundaries the same. We label the sites at the inner boundary by $h_i$ and those at the outer boundary by $H_j$. We choose one site inside the region and label it by $t$.}
\label{fig:QDthin}
\end{figure}

We then consider four vertex operators at the inner corners of $A$. 
Let us denote the vertices along the inner boundary by $v_1,v_2,\dots,v_{n+4}$ with counter clockwise order from above $h_1$. 
Suppose $h_i$ and $h_{i+1}$ denotes the inner boundary sites at the upper right corner. When we apply $A_{v_i}^g$, they are mapped to $h_ig^{-1}$ and $gh_{i+1}$, and therefore the product $h_ih_{i+1}$ is preserved under the action. 
Indeed, by properly choosing $g\in G$, $A_{v_i}^g$ can maps $\{h_i,h_{i+1}\}$ to any other pair $\{h'_i,h'_{i+1}\}$ satisfying $h_ih_{i+1}=h'_{i}h'_{i+1}$. 
Therefore, a linear combination of vectors $|\{h_i\},\{H_j\},t\>_A$ is stabilized by $A_{v_i}$ if and only if all two configurations $\{h_i\},\{h'_i\}$ satisfying $h_{i}h_{i+1}=h'_{i}h'_{i+1}$ appear in an equal weight.  We introduce such a state by
\begin{equation}
|\{h_i\},\{H_j\},t\>_A\equiv \frac{1}{\sqrt{|G|}}\sum_{{\tilde h}_i{\tilde h}_{i+1}=h_i}|\{{\tilde h}_i\},\{H_j\},t\>_A\,,
\end{equation}
where ${\tilde h}_k=h_k$ for $k<i$, ${\tilde h}_k=h_{k-1}$ for $k>i+1$ and $|\{h_i\}|=n+3$.
By definition, $A_{v_i}|\{h_i\},\{H_j\},t\>=|\{h_i\},\{H_j\},t\>$ for every $\{h_i\}$. We repeat the same procedure for all other corners and define a new label set $\{h_i\}=\{h_1,h_2,\dots,h_n\}$ which has $n$ independent elements. Hence $\{|\{h_i\},\{H_j\},t\>_A\}$ spans $(n+N)$-dimensional space.

Next, we explicitly calculate possible states in the information convex $\Sigma(A)$~\eqref{def:infoconv}. Without loss of generality, we only care about $A_+$ and denote $B:=A_+\backslash A$. This is because we have $\calH_\Omega\cong\calK_{AB}\ot{\calH}_{(A_+)^c}$, where 
\begin{equation}
\calK_{AB}:=\left\{\left.|\phi\>_{A_+}\in\calH_{AB}\,\right|\, H_{A_+}|\phi\>_{A_+}=|\phi\>_{A_+}\right\}\,.
\end{equation}
We decompose vertex operators supported on both $A$ and $B$ as $A_v^g={\tilde A}_v^g\ot {\bar A}^g_v$, so that ${\tilde A}_v^g$ $({\bar A}_v^g)$ only acts on $A$ ($B$). 
 If $|\psi\>\in\calH_A\ot\calH_B$ satisfies $A_v|\psi\>=|\psi\>$, it holds $A_v^g|\psi\>=A_v^gA_v|\psi\>=A_v|\psi\>=|\psi\>$.  From the unitarity of $A_v^g$, we have
\begin{equation}
A_v|\psi\>=|\psi\>\Rightarrow {\tilde A}_v^g\psi_A{\tilde A}_v^{g^{-1}}=\psi_A \,,\;\forall g\in G.
\end{equation}
Here we used $(L_\pm^g)^\dagger=L_\pm^{g^{-1}}$. 
Therefore, states in $\Sigma(A)$ should be invariant under all operators in the form: ${\tilde A}_{v_1}^{l_1}\ot {\tilde A}_{v_2}^{l_2}\ot\cdots {\tilde A}_{v_{n-1}}^{l_{n-1}}$. 
$\{h_i\}$ is mapped by these unitaries to $(h_1l_1^{-1}),(l_1h_2l_2^{-1}),...,(l_{n-1}h_n)$ and thus $h=h_1...h_n$ is unchanged. There are $|G|^{n-1}$ patterns of the choice of $\{h_i\}$ for fixed $h$, and every two patterns are mapped each other by these products of ${\tilde A}_v^g$. The same argument holds for the outer boundary with vertices $V_1,V_2,...,V_N$. Hence, if $h=tHt^{-1}$, the equal weight mixture
\begin{equation}\label{eq:eqspohHt}
\sum_{h_1\cdots h_n=h}\sum_{H_1 \cdots H_N=H}|\{h_i\},\{H_j\},t\>\<\{h_i\},\{H_j\},t|_A
\end{equation}
is invariant under all vertex operators except on $v_n$ and $V_N$.

To see the action of the remaining vertex operators clearer, we introduce the group Fourier basis:
\begin{equation}
|R;a,b\>:=\sqrt{\frac{n_R}{|G|}} \sum_{g\in G}R_{ab}(g)|g\>\,,
\end{equation}
where $R_{ab}(g)$ is the $(a,b)$ matrix element of the irreducible representation $R$ of $G$ with dimension $n_R$. 
We define a new basis by
\begin{equation}\label{eq:RCbasis}
\left|(R,C);u,v;\{h_k\}_{q_i},\{H_l\}_{q_j}\right\>_A:=\sum_{t\in E(C)}\sqrt{\frac{n_R}{|E(C)|}}R_{ab}(t)|\{h_k\}_{q_i},\{H_l\}_{q_j},q_itq_j^{-1}\>_A\,
\end{equation}
for $C\in(G)_{cj}$, $R\in(E(C))_{ir}$, $u=(q_i,q_j)$ and $v=(a,b)$. Here, $\{h_k\}_{q_i}$ denotes a set $\{h_1,...,h_n\}$ satisfying $h_1...h_n=q_ir_Cq_i^{-1}$ and $\{H_l\}_{q_j}=\{H_1,...,H_N\}$ with $H_1...H_N=q_jr_Cq_j^{-1}$. 
${\tilde A}_{v_n}^{g}\ot {\tilde A}_{V_N}^{g'}$ maps $h$, $H$ and $t$ to $ghg^{-1}$, $g'Hg'^{-1}$ and $gtg'^{-1}$, respectively. 
In other words, there are $t_1,t_2\in E(C)$ such that $gq_i=q_{i'}t_1$ and $g'q_j=q_{j'}t_2$, and 
\begin{align}
{\tilde A}_{v_n}^{g}\ot {\tilde A}_{V_N}^{g'}\left|(R,C);u,v;\{h_k\}_{q_i},\{H_l\}_{q_j}\right\>_A&=\sum_{t\in E(C)}\sqrt{\frac{n_R}{|E(C)|}}R_{ab}(t)|\{h_k\}_{q_{i'}},\{H_l\}_{q_{j'}},q_{i'}t_1tt_2^{-1}q_{j'}^{-1}\>_A\\
&=\sum_{t\in E(C)}\sqrt{\frac{n_R}{|E(C)|}}R_{ab}(t_1^{-1}tt_2)|\{h_k\}_{q_{i'}},\{H_l\}_{q_{j'}},q_{i'}tq_{j'}^{-1}\>_A\\
&=\sum_{c,d}R_{ac}(t_1^{-1})R_{db}(t_2)\sum_{t\in E(C)}\sqrt{\frac{n_R}{|E(C)|}}R_{cd}(t)|\{h_k\}_{q_{i'}},\{H_l\}_{q_{j'}},q_{i'}tq_{j'}^{-1}\>_A\\
&=\sum_{v'}U_{vv'}\left|(R,C);u',v';\{h_k\}_{q_{i'}},\{H_l\}_{q_{j'}}\right\>\,,%
\end{align}
where $u'=(q_{i'},q_{j'})$, $v'=(c,d)$ and $U_{vv'}=R_{ac}(t_1^{-1})R_{db}(t_2)$. Therefore, each ${\tilde A}_{v_n}^{g}\ot {\tilde A}_{V_N}^{g'}$ is a unitary operation on the space spanned by indices $u,v$.  

By combining the above argument with Eq.~\eqref{eq:eqspohHt}, we conclude that
\begin{equation}
\sigma_A^{(R,C)}:=\frac{1}{|G|^{n+N-2}d^2_{(R,C)}}\Pi_A(R,C)\,,
\end{equation}
where 
\begin{equation}
\Pi_A(R,C):=\sum{u,v}\sum_{\{h_k\}_{q_i}}\sum_{\{H_l\}_{q_j}}\left|(R,C);u,v;\{h_k\}_{q_i},\{H_l\}_{q_j}\right\>\left\<(R,C);u,v;\{h_k\}_{q_i},\{H_l\}_{q_j}\right|_A\,,
\end{equation}
is invariant under every ${\tilde A}_{v_i}^g$ and ${\tilde A}_{V_i}^g$. 
General $\sigma_A\in \Sigma(A)$ is written as a convex combination:
\begin{equation}\label{eq:infconvDG}
\sigma_A=\bigoplus_{a=(R,C)}p_{a}\sigma^a_A\,,
\end{equation}
which is consistent to Theorem~\ref{thm:convex}. 

\subsection{Calculation of \texorpdfstring{$\calI(A)_\Omega$}{I(A)Omega} for a thin annulus}
We are now ready to calculate an orthonormal basis of $\calK_{AB}$. We consider $A_+$, the support of all interaction terms nontrivially acting on spins in $A$. 
We decompose $B=B_{in}\cup B_{out}$ so that $B_{in}=A_+\cap D_{in}$ ($B_{out}=A_+\cap D_{out}$) contains spins around inner (outer) boundaries of $A$  (Fig.~\ref{fig:QDthick}). We choose the directions of the boundaries of $A_+$ in the same way as we did for $A$. Let us fix labels $\{h_i\}$ and $\{H_i\}$ in $A$. Spins in $B_{in}$, with the inner boundary of $A$, form another annulus. 
By requiring to be a $+1$ eigenstate of all $B_p$ acting on $B_{in}$, we can label states in $\calH_{B_{in}}$ by
\begin{equation}
|\{s_i\}, \{h_i\},t_1\>_{B_{in}}\,, 
\end{equation}
where $\{s_i\}$ corresponds to the inner boundary of $A_+$ (thick black circle in Fig.~\ref{fig:QDthick}), $t_1$ is the spin next to $t$. Other spins in $B_{in}$ are specified by fixing $\{h_i\}$ by the same reason we did for $\{g_i\}$ in $A$. Note that $\{h_i\}$ is not included in $B_{in}$, but needed to uniquely specify a vector in $\calH_{B_{in}}$. We repeat the same argument on $B_{out}$ to denote a vector  by 
\begin{equation}
|\{o_i\}, \{H_i\},t_2\>_{B_{out}}\,,
\end{equation}
where $\{o_i\}$ labels the outer boundary of $A_+$ after removing the corner effects by further requiring to be a $+1$ eigenstate of all $A_v$ on $B_{out}$. 
The total charges of $\{s_i\}$ and $\{o_i\}$ are constrained by $s=t_1ht_1^{-1}$ and $H=t_2ot_2^{-1}$, where $s=s_1...s_{n'}$ and $o=o_1...o_{N'}$. 
$h, H\in C$ implies that $s$ and $o$ are also in the same conjugacy class $C$. 
By using these notation, we can define $\left|(R,C), u,v,\{s_i\}_{q_i}\{h_i\}_{p_i}\right\>_{B_{in}}$ and $\left|(R,C),u,v,\{H_i\}_{p_j}\{o_i\}_{q_j}\right\>_{B_{out}}$ in the same way as in Eq.~\eqref{eq:RCbasis}.

A basis of $\calK_{AB}$ is given by 
\begin{align}
\left|(R,C);{u},{v};\{s_k\}_{q_i},\{o_k\}_{q_j}\right\>_{AB}:=&\sqrt{\frac{1}{|G|^{n+N-2}}}\sum_{p_i,p_j,c,d}\sum_{\{h_i\}_{p_i}}\sum_{\{H_i\}_{p_j}}
\left|(R,C), (q_i,p_i),(a,c),\{s_i\}_{q_i}\{h_i\}_{p_i}\right\>_{B_{in}}\\
&\ot\left|(R,C), (p_i,p_j),(c,d),\{h_i\}_{p_i}\{H_i\}_{p_j}\right\>_A\ot\left|(R,C), (p_j,q_j),(d,b),\{H_i\}_{p_j}\{o_i\}_{q_j}\right\>_{B_{out}}\,,
\end{align}
where $u=(q_i,q_j)$ and $v=(a,b)$. By construction, any basis element is a superposition of $+1$ eigenstates of all $B_p$ acting on $A_+$. 
It is also easy to check for $A_v^g$ by using the fact that they are combinations of permutations of terms in the superposition, and unitary rotations on $(c,d)$. 
There are $d_{(R,C)}^2$ choices of $(u,v)$ for fixed $(R,C)$, and there are $|G|^{n'+N'-2}$ choices of  $\{s_i\}_{q_i}$ and $\{o_i\}_{q_j}$ for fixed $u$ and $v$, and thus in total $\dim\calK_{AB}=|G|^{n'+N'}$. 
The vacuum sector of $D(G)$ corresponding to the label $(R,C)=(id, \{e\})$ has dimension $|G|^{n'+N'-2}$, since $d_{(id,\{e\})}=1$. 
Therefore 
\begin{equation}
\calI(A)_\Omega=\log\frac{d_\Omega^1}{d_{\Omega}}=-\log(|G|^{n'+N'-2}/|G|^{n'+N'})=\log|G|^2,
\end{equation}
which completes the proof.

\begin{figure}[htbp] 
\begin{center}
\includegraphics[width=6.0cm]{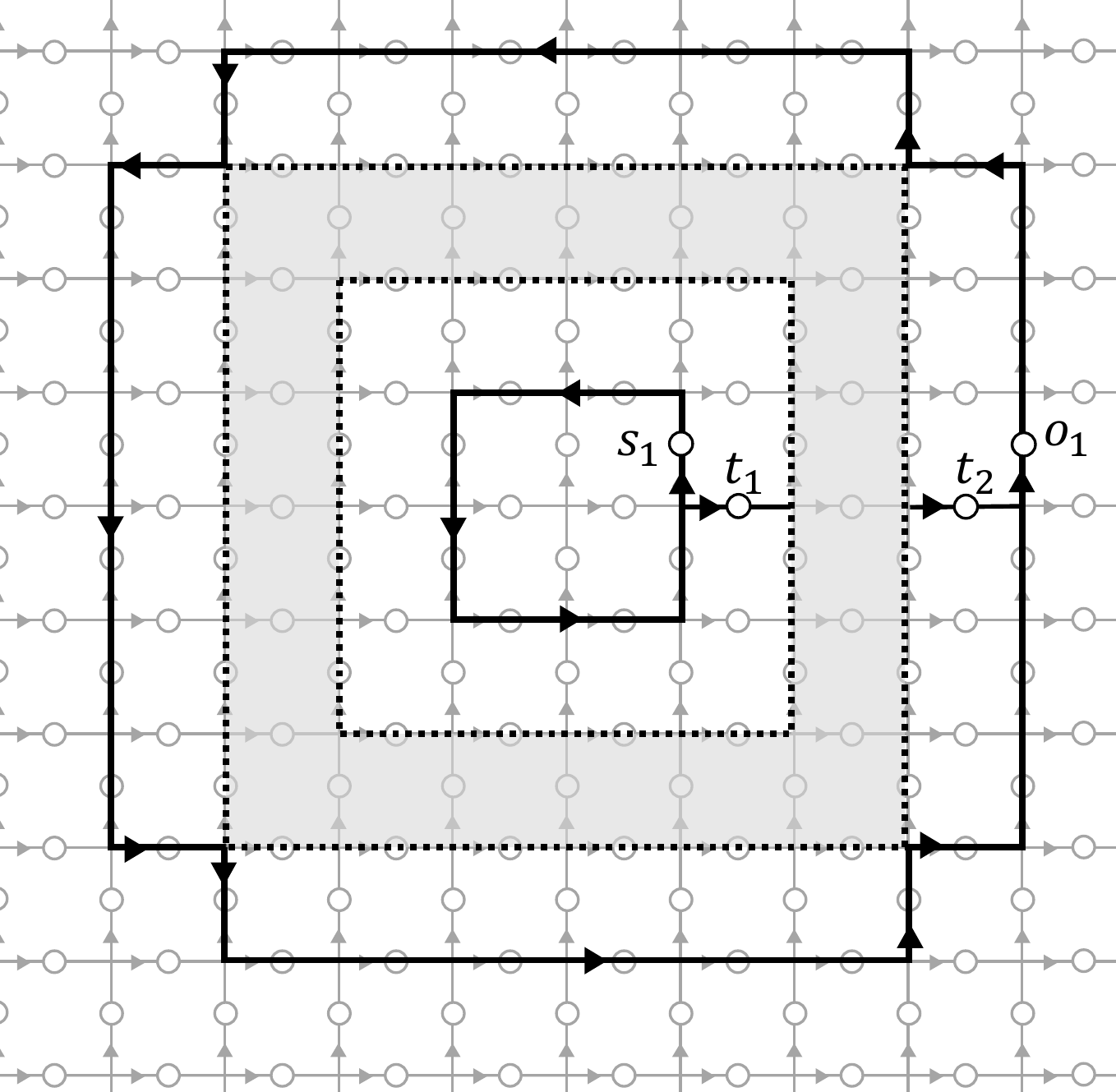}
\end{center}
\vspace{-4mm}
\caption{Region $A_+$ (surrounded by thick lines) includes all supports of interaction terms overlapping with $A$ (dotted region). We can label the basis of the ground subspace of $H_{A_+}$ as in a similar way as we did for $A$.}
\label{fig:QDthick}
\end{figure}

\begin{remark} The argument can be straightforwardly generalized to annuli containing only smooth boundaries. 
The boundaries of general annulus are mixtures of smooth and rough boundaries.  
One can still label the basis by using  $\{h_i\}$ and $\{H_j\}$ which corresponds to the (coarse-grained) boundaries. 
To do so, first choose the largest subregion of $A$ formed by only squares, which we denote by $A_-$. 
We can label the $+1$ eigenstates of $B_p$ and $A_v$ within $A_-$ by $|\{h_i\},\{H_j\},t\>_{A_-}$ as in the same way as in the case of the thinnest annulus (here, $t=t_1...t_k$ labels the total charge of the sites along a line crossing the annulus). All other degrees of freedom are specified by the restrictions. Then, we consider a product basis spanned by  $|\{h_i\},\{H_j\},t\>_{A_-}\ot|f_1...f_m\>_{A\backslash A_-}$, where $|f_1...f_m\>$ is a vector in the space of $A\backslash A_-$.  
Each $|f_k\>$ is in a support of some $A_v$ on $A$. Suppose $A_v$ acts on $|h_1\>, |h_2\>, |f_1\>$ and a site on $A_-$ which is already fixed and will be omitted.   
By applying $A_v^g$, the labels change to $h_1g^{-1}, gh_2$ and $gf_1$. Therefore, $h_1h_2$ and $h_1f_1$ are preserved under the action of $A_v$.  
Define a new basis by taking the equal weight superposition  over all $h_1,h_2,f_1$ satisfying $h_1h_2=h_1'$ and $h_1f_1= f_1'$. 
We denote the new basis by $|\{h_i\},\{H_j\},t\>_A$ after redefining $h_1\equiv h_1'$ and $h_2\equiv f_1'$. Each element of this new basis is a $+1$ eigenstate of $A_v$. 
By repeating this argument, we can specify a basis of the ground subspace by $|\{h_i\},\{H_j\},t\>_A$. 
\end{remark}

\section{Fibonacci anyons}\label{app:fibonacci}
The way the invariant in Eq.~\eqref{eq:InvGNSdim} is defined makes clear that it is always the logarithm of a rational number.
However, it is known that there are anyon models with \emph{irrational} quantum dimension.
Perhaps the best known example are the Fibonacci anyons~\cite{Preskill}, which is similar to the Yang-Lee model.
This raises the obvious question: can the invariant we define capture such phases?
More precisely, we are interested in models where the \emph{square} of the quantum dimensions is not an integer (compare with Theorem~\ref{thm:qdouble}).
Models for which the square of the quantum dimensions are integers are called \emph{weakly integral}, and include quantum double models coming from groups (including the so-called \emph{twisted} quantum double models)~\cite{CuiHW2015}.
Unfortunately we do not have direct answer to this.
We will however here consider a slightly different setting, where we can define a similar quantity, and will comment on how it relates to the definition of $\mathcal{I}(A)_\Omega$.

Consider the Fibonacci anyon model, which has only one non-trivial anyon $\tau$, with fusion rule $\tau \otimes \tau = \iota \oplus \tau$.
Note that in particular the model is non-abelian and that $\tau$ is self-dual.
Now consider a chain of $n$ $\tau$-anyons, whose combined charge is trivial, in the sense that the $n$ anyons together fuse to the trivial sector. 
We will group the anyons in two blocks, $n = n_A + n_B$, where we picture the anyons on a line, with the $n_A$ leftmost anyons belonging to group $A$, and $n_B$ $\tau$-anyons on the right of that.
A basis for the state space of such a configuration is most conveniently described by \emph{fusion trees}, which in the category theory picture correspond to morphisms in $\operatorname{Hom}(\tau^{\otimes n}, \iota)$ in the language of tensor categories~\footnote{Equivalently one could look at the space $\operatorname{Hom}(\iota, \tau^{\otimes n})$, which could be interpreted as all the ways to create $\tau$ anyons out of the vacuum.}.
Readers who are not familiar with the category theoretical picture can consult the book by Wang~\cite{Wang}.
If $F(k)$ is the $k$-th Fibonacci number, it is easy to deduce by induction that $\dim \operatorname{Hom}(\tau^{\otimes n}, \iota) = F(n-1)$.
Similarly, for fusions to $\tau$, we have $\dim \operatorname{Hom}(\tau^{\otimes n}, \tau) = F(n)$.
This explains the name of the model.

Let $\mathcal{H}_n$ be the total Hilbert space of the system, which has dimension $F(n-1)$ as we have seen.
Operators on $\mathcal{H}_n$ can be represented using the graphical language of tensor categories.
In particular, a basis can be obtained by taking fusion trees, and pasting them together with a ``flipped'' fusion tree.
This gives a morphism in $\operatorname{End}(\tau^{\otimes n}) \equiv \operatorname{Hom}(\tau^{\otimes n}, \tau^{\otimes n})$.
Note that since the total charge of the system is trivial, we only have to consider those diagrams that go through a single $\iota$ line.
The algebra of all such operators, which can be identified with $\operatorname{End}(\tau^{\otimes n})$, will be denoted by $\mathcal{E}$.

We can now define projections onto the total charge in the regions $A$ and $B$, respectively.
We will call them $P^A_\iota$ and $P^A_\tau$, and similarly for $B$.
Again these can be represented by gluing together fusion trees with their (horizontally) flipped version, where one has to consider all trees that fuse to $\iota$ and $\tau$ respectively, and straight lines (i.e., identity morphisms) on the $B$ part.
Note that $P^A_\iota P_\tau^B = 0$, since a $\iota$ and a $\tau$ cannot fuse to the vacuum.
However, the projections $P_\iota \equiv P^A_\iota P^B \iota$ and $P_\tau \equiv P^A_\tau P^B_\tau$ are non-trivial, mutually commuting, and sum up to the identity.

We say that an operation is \emph{local} with respect to the bipartition $AB$ if it does not change the total charge in either the $A$ or $B$ region (note that it is not possible to change the charge in only one region, since all anyons together have to fuse to the vacuum, hence the total charge in $A$ and $B$ must be either both $\iota$, or both $\tau$). 
This leads to the algebra
\[
	\mathcal{C} \equiv \{P^A_\iota P^B_\iota, P^A_\tau P^B_\tau\}' \cap \mathcal{E} = P_\iota \mathcal{E} P_\iota \oplus P_\tau \mathcal{E} P_\tau \equiv \mathcal{C}_\iota \oplus \mathcal{C}_\tau.
\]
Note the similarity with equation~\eqref{eq:logicalC}: again we have a decomposition into superselection sectors.

\begin{remark}
We do not need to divide out the ideal $\cal{N}$, or use the idea of the annulus, since we already are working on the level of charges here.
This is a key difference with the approach we used above, where a key step is to identify the states which a certain total charge within the annulus. 
\end{remark}

First consider the algebra $\mathcal{C}_\iota$.
This algebra is generated by diagrams in $\operatorname{End}(\tau^{\otimes n})$ which only act non-trivially on the first $n_A$ anyons, and similar diagrams acting only on $B$.
The total subspace of states that have charge $\iota$ in both regions $A$ and $B$ has dimension $F(n_A-1) F(n_B-1)$.
It follows that $\calC_\iota \cong M_{F(n_A-1)F(n_B-1)}(\mathbb{C})$.
Similarly, the dimension of the space fusing to $\tau \otimes \tau$ charges fusing to $\iota$ is $F(n_A) F(n_B)$, and $\calC_\tau \cong M_{F(n_A) F(n_B)}$.
As a consistency check, note that
\[
	F(n_A-1) F(n_B -1) + F(n_A) F(n_B) = F(n_A+n_B-1),
\]
so the dimensions match up.
This equation can be verified by using repeatedly that $F(k-1)F(l-1) + F(k)F(l) = F(k-2)F(l) + F(k-1)F(l+1)$.

It turns out that we can recover the quantum dimensions by comparing the size of the algebra for each sector with the size of the algebra of the trivial sector.
More precisely, let $\sigma$ be a tracial state on $\calE$. Note that since $\calE$ is irreducible, this is unique and coincides with the usual trace of a matrix algebra.
Note that the sectors are obtained by cutting down $\calE$ with projections $P_\iota$ and $P_\tau$.
Hence we can look at the ratios $\sigma(P_k)/\sigma(P_\iota)$ to compare the sizes of the different algebras.
Since the quantum dimension for the $\tau$-anyon is not rational, it is particularly interesting to consider the limit where both $n_A$ and $n_B$ go to infinity.
For the $\tau$-sector, this yields
\[
	\lim_{n_A, n_B \to \infty} \frac{\tau(P^{AB}_\tau)}{\tau(P^{AB}_\iota)} = \lim_{n_A, n_B \to \infty} \frac{F(n_A) F(n_B)}{F(n_A-1) F(n_B-1)} = \phi^2,
\]
where $\phi$ is the golden ratio.
Here we used that $\lim_{n \to \infty} \frac{F(n+1)}{F(n)} = \phi$.
Similarly, $\sigma(I)/\sigma(P^{AB}_\iota)$ tends to $1 + \varphi^2$, the total quantum dimension of the theory.
Note that this is very similar to Theorem~\ref{thm:InvGNSdim}, up to a logarithm.
However, here the step of identifying the correct central projections (or, equivalently, the correct subspaces) is much more direct, since we directly work with the ``internal'' fusion trees.
Since we work directly with the charges, the question of stability under perturbations does not directly apply here.
This is because we do not talk about how to get the anyon structure (as a modular tensor category) from the underlying physical model.
That is precisely what is non-trivial in showing that quantity is stable.

This simple example illustrates a pathway to obtain irrational quantum dimensions.
Since we are discussing finite dimensional systems, it is clear that some limit procedure has to be involved.
Translating the Fibonacci example back to our original setting suggests that one has to consider limits where the inside (and outside) of the annulus can contain more and more excitations.
Hence one either has to take a limit of growing system size and growing annuli, or keep the annulus fixed and increase the number of sites inside the annulus.
The latter essentially means that one has to rescale the distance between the sites inside the annulus, hence both ways are essentially the same.
This makes the stability argument more subtle, since one needs a ``stable logical algebra'' condition for whole sequence of anyon configurations.
Finally, note that in the Fibonacci example there is only one non-trivial superselection sector.
Hence in the general case, one has to consider all possible ways that anyons in region $A$ (not necessarily of the same type!) can fuse to one of the charges, and to the conjugate charge for the remaining anyons in region $B$.
We leave this analysis open for future work.

\section{Comparison to thermodynamic limit}\label{sec:thermodynamic}
The different superselection sectors appear in the analysis in Section~\ref{sec:setting} as different blocks in the decomposition of $\calE/\calN$. 
This decomposition is key in identifying the different types of anyons the system has.
The analysis is partly motivated by the study of superselection sectors in the thermodynamic limit, where an operator algebraic approach is used.
There one can study the anyons (and all their properties such as braiding) directly in the thermodynamic limit.
It is therefore instructive to compare these different frameworks.
A brief introduction to this approach can be found in Ref.~\cite{FiedlerNO17}.

The main idea behind this approach is that superselection sectors can be identified with (equivalence classes of) irreducible representations of the algebra of quasi-local observables, which is generated by all observables that act only on finitely many (but otherwise arbitrarily large) number of sites.
If two representations $\pi_1$ and $\pi_2$ are inequivalent, it can be shown that there is no unitary $U$ that maps a vector state $|\psi\>$ in one representation to a vector state in the other representation (in a way that is compatible with both representations).
Physically this can be understood as the impossibility to transform a state in one representation to a state in the other one with a local operation.
Or in the language of anyons: with local operations one cannot change the \emph{total} charge of the system.
Hence one recovers the notion of superselection sectors we have used earlier.

It is possible to give a more direction connection between the thermodynamic limit approach and the one we take here, at least in the case of abelian quantum double models.
In the thermodynamic limit of such models, one can calculate what is called the \emph{Jones index} of a certain inclusion of operator algebras, and show that it is equal to the total quantum dimension of the theory.
In Ref.~\cite{Naaijkens18} it is shown that this can in fact be obtained using a limiting procedure of finite dimensional algebras $\calR_i \subset \widehat{\calR}_i$, or more precisely, from an optimization of certain relative entropies related to these finite dimensional systems.
The finite dimensional algebras $\calR_i$ are generated by the operators supported on two disjoint patches separated by a large enough distance.
Note that this is quite similar to the annulus picture above, with the distinction that the ''annulus'' encloses \emph{two} regions.
This geometric configuration is more convenient once one wants to take the limit where the two isolated regions grow to infinity.
The algebra $\calR_i$ can then be understood as the algebra of all operations \emph{on the two patches} that leave the charge in both regions invariant.
It does however \emph{not} decompose into different sectors, but rather corresponds to the $\calC_1$ component from Eq.~\eqref{eq:logicalC}.

The algebra $\widehat{\calR}_i$ is then generated by $\calR_i$, and operators which create a pair of excitations in each of the patches (without creating any excitations outside of the two patches).
Hence this algebra can be compared to $\calE$ of  Eq.~\eqref{def:algE}.
For the toric code, it can be generated by $\calR_i$, together with a string operator of each type connecting the two regions.
This algebra can then be decomposed into four sectors, corresponding to the different anyon types in the toric code.
Moreover, $\calR_i$ embeds into $\widehat{\calR}_i$ as $A \mapsto A \oplus A \oplus A \oplus A$.
We expect a similar structure to be true at least for any abelian model.

The advantage is that $\widehat{\calR}$ (and their finite dimensional approximations $\widehat{\calR}_i$) in the thermodynamic limit can be defined in a purely algebraic way, without any direct reference to the Hamiltonian, as the algebra of all operators that commute with all local observables localized outside of the two cones.
The algebra \emph{does} however depend on the Hamiltonian in a subtle way: to define it, one first has to represent the algebra of observables of the system on a Hilbert space. This is done by taking a ground state of the system (which can be defined abstractly), and looking at the corresponding GNS representation. This algebra $\widehat{\calR}$ depends very much on this representation.
It should be noted however that it is not necessary to make any assumptions on the Hamiltonian, in particular we don't have to restrict to LCPC Hamiltonians.

Since we are particularly interested in \emph{phases} of matter (which are further explained in Section~\ref{sec:entropic}), it is important to understand what happens after perturbing the Hamiltonian.
If the perturbation is small enough that it does not close the gap, the ground state of the perturbed model is in the same phase.
Hence, one would expect that the superselection structure of the perturbed model is the same.
It can be shown that under natural assumptions, this is indeed the case, and the perturbed model has for example the same sectors and the anyons have the same braiding properties~\cite{ChaNN18}.
It is however less clear how certain von Neumann algebraic aspects behave under perturbations.
In particular, this is true for the Jones index of the inclusion $\calR \subset \widehat{\calR}$ which we mentioned above.
For certain unperturbed models (such as the toric code), this can be calculated explicitly and be shown to be equal to the the total quantum dimension of the theory.
However, even though we know that the superselection structure is invariant under perturbations, we presently have no control over the Jones index.
Hence our understanding is far from satisfactory at the moment.
This is even more so the case for the topological entanglement entropy, of which we do not have a satisfactory understanding in the infinite sytem size setting.
Indeed, this is one of the motivations behind the present work.

%


\begin{thebibliography}{50}%
\makeatletter
\providecommand \@ifxundefined [1]{%
 \@ifx{#1\undefined}
}%
\providecommand \@ifnum [1]{%
 \ifnum #1\expandafter \@firstoftwo
 \else \expandafter \@secondoftwo
 \fi
}%
\providecommand \@ifx [1]{%
 \ifx #1\expandafter \@firstoftwo
 \else \expandafter \@secondoftwo
 \fi
}%
\providecommand \natexlab [1]{#1}%
\providecommand \enquote  [1]{``#1''}%
\providecommand \bibnamefont  [1]{#1}%
\providecommand \bibfnamefont [1]{#1}%
\providecommand \citenamefont [1]{#1}%
\providecommand \href@noop [0]{\@secondoftwo}%
\providecommand \href [0]{\begingroup \@sanitize@url \@href}%
\providecommand \@href[1]{\@@startlink{#1}\@@href}%
\providecommand \@@href[1]{\endgroup#1\@@endlink}%
\providecommand \@sanitize@url [0]{\catcode `\\12\catcode `\$12\catcode
  `\&12\catcode `\#12\catcode `\^12\catcode `\_12\catcode `\%12\relax}%
\providecommand \@@startlink[1]{}%
\providecommand \@@endlink[0]{}%
\providecommand \url  [0]{\begingroup\@sanitize@url \@url }%
\providecommand \@url [1]{\endgroup\@href {#1}{\urlprefix }}%
\providecommand \urlprefix  [0]{URL }%
\providecommand \Eprint [0]{\href }%
\providecommand \doibase [0]{http://dx.doi.org/}%
\providecommand \selectlanguage [0]{\@gobble}%
\providecommand \bibinfo  [0]{\@secondoftwo}%
\providecommand \bibfield  [0]{\@secondoftwo}%
\providecommand \translation [1]{[#1]}%
\providecommand \BibitemOpen [0]{}%
\providecommand \bibitemStop [0]{}%
\providecommand \bibitemNoStop [0]{.\EOS\space}%
\providecommand \EOS [0]{\spacefactor3000\relax}%
\providecommand \BibitemShut  [1]{\csname bibitem#1\endcsname}%
\let\auto@bib@innerbib\@empty
\bibitem [{\citenamefont {Chen}\ \emph {et~al.}(2010)\citenamefont {Chen},
  \citenamefont {Gu},\ and\ \citenamefont {Wen}}]{ChenGW10}%
  \BibitemOpen
  \bibfield  {author} {\bibinfo {author} {\bibfnamefont {Xie}\ \bibnamefont
  {Chen}}, \bibinfo {author} {\bibfnamefont {Zheng-Cheng}\ \bibnamefont {Gu}},
  \ and\ \bibinfo {author} {\bibfnamefont {Xiao-Gang}\ \bibnamefont {Wen}},\
  }\bibfield  {title} {\enquote {\bibinfo {title} {Local unitary
  transformation, long-range quantum entanglement, wave function
  renormalization, and topological order},}\ }\href {\doibase
  10.1103/PhysRevB.82.155138} {\bibfield  {journal} {\bibinfo  {journal} {Phys.
  Rev. B}\ }\textbf {\bibinfo {volume} {82}},\ \bibinfo {pages} {155138}
  (\bibinfo {year} {2010})},\ \Eprint {http://arxiv.org/abs/1004.3835}
  {arXiv:1004.3835 [cond-mat]} \BibitemShut {NoStop}%
\bibitem [{\citenamefont {Wen}(1989)}]{Wen89}%
  \BibitemOpen
  \bibfield  {author} {\bibinfo {author} {\bibfnamefont {X.~G.}\ \bibnamefont
  {Wen}},\ }\bibfield  {title} {\enquote {\bibinfo {title} {Vacuum degeneracy
  of chiral spin states in compactified space},}\ }\href {\doibase
  10.1103/PhysRevB.40.7387} {\bibfield  {journal} {\bibinfo  {journal} {Phys.
  Rev. B}\ }\textbf {\bibinfo {volume} {40}},\ \bibinfo {pages} {7387--7390}
  (\bibinfo {year} {1989})}\BibitemShut {NoStop}%
\bibitem [{\citenamefont {Kitaev}(2003)}]{Kitaev03}%
  \BibitemOpen
  \bibfield  {author} {\bibinfo {author} {\bibfnamefont {Alexei}\ \bibnamefont
  {Kitaev}},\ }\bibfield  {title} {\enquote {\bibinfo {title} {Fault-tolerant
  quantum computation by anyons},}\ }\href {\doibase
  10.1016/S0003-4916(02)00018-0} {\bibfield  {journal} {\bibinfo  {journal}
  {Ann. Physics}\ }\textbf {\bibinfo {volume} {303}},\ \bibinfo {pages} {2--30}
  (\bibinfo {year} {2003})},\ \Eprint {http://arxiv.org/abs/quant-ph/9707021}
  {arXiv:quant-ph/9707021 [quant-ph]} \BibitemShut {NoStop}%
\bibitem [{\citenamefont {Michael H.~Freedman}\ and\ \citenamefont
  {Wang}(2003)}]{Freedman03}%
  \BibitemOpen
  \bibfield  {author} {\bibinfo {author} {\bibfnamefont {Michael J.~Larsen}\
  \bibnamefont {Michael H.~Freedman}, \bibfnamefont {Alexei~Kitaev}}\ and\
  \bibinfo {author} {\bibfnamefont {Zhenghan}\ \bibnamefont {Wang}},\
  }\bibfield  {title} {\enquote {\bibinfo {title} {Topological quantum
  computation},}\ }\href {\doibase
  https://doi.org/10.1090/S0273-0979-02-00964-3} {\bibfield  {journal}
  {\bibinfo  {journal} {Bull. Amer. Math. Soc.}\ }\textbf {\bibinfo {volume}
  {40}},\ \bibinfo {pages} {31--38} (\bibinfo {year} {2003})}\BibitemShut
  {NoStop}%
\bibitem [{\citenamefont {Hastings}\ and\ \citenamefont
  {Koma}(2006)}]{HastingsK06}%
  \BibitemOpen
  \bibfield  {author} {\bibinfo {author} {\bibfnamefont {Matthew~B.}\
  \bibnamefont {Hastings}}\ and\ \bibinfo {author} {\bibfnamefont {Tohru}\
  \bibnamefont {Koma}},\ }\bibfield  {title} {\enquote {\bibinfo {title}
  {Spectral gap and exponential decay of correlations},}\ }\href {\doibase
  10.1007/s00220-006-0030-4} {\bibfield  {journal} {\bibinfo  {journal}
  {Commun. Math. Phys.}\ }\textbf {\bibinfo {volume} {265}},\ \bibinfo {pages}
  {781--804} (\bibinfo {year} {2006})}\BibitemShut {NoStop}%
\bibitem [{\citenamefont {Nachtergaele}\ and\ \citenamefont
  {Sims}(2006)}]{NachtergaeleS06}%
  \BibitemOpen
  \bibfield  {author} {\bibinfo {author} {\bibfnamefont {Bruno}\ \bibnamefont
  {Nachtergaele}}\ and\ \bibinfo {author} {\bibfnamefont {Robert}\ \bibnamefont
  {Sims}},\ }\bibfield  {title} {\enquote {\bibinfo {title} {Lieb-{R}obinson
  bounds and the exponential clustering theorem},}\ }\href {\doibase
  10.1007/s00220-006-1556-1} {\bibfield  {journal} {\bibinfo  {journal}
  {Commun. Math. Phys.}\ }\textbf {\bibinfo {volume} {265}},\ \bibinfo {pages}
  {119--130} (\bibinfo {year} {2006})}\BibitemShut {NoStop}%
\bibitem [{\citenamefont {Levin}\ and\ \citenamefont {Wen}(2006)}]{LevinW06}%
  \BibitemOpen
  \bibfield  {author} {\bibinfo {author} {\bibfnamefont {Michael}\ \bibnamefont
  {Levin}}\ and\ \bibinfo {author} {\bibfnamefont {Xiao-Gang}\ \bibnamefont
  {Wen}},\ }\bibfield  {title} {\enquote {\bibinfo {title} {Detecting
  topological order in a ground state wave function},}\ }\href {\doibase
  10.1103/PhysRevLett.96.110405} {\bibfield  {journal} {\bibinfo  {journal}
  {Phys. Rev. Lett.}\ }\textbf {\bibinfo {volume} {96}},\ \bibinfo {pages}
  {110405} (\bibinfo {year} {2006})},\ \Eprint
  {http://arxiv.org/abs/cond-mat/0510613} {arXiv:cond-mat/0510613 [cond-mat]}
  \BibitemShut {NoStop}%
\bibitem [{\citenamefont {Hastings}\ and\ \citenamefont
  {Wen}(2005)}]{PhysRevB.72.045141}%
  \BibitemOpen
  \bibfield  {author} {\bibinfo {author} {\bibfnamefont {M.~B.}\ \bibnamefont
  {Hastings}}\ and\ \bibinfo {author} {\bibfnamefont {Xiao-Gang}\ \bibnamefont
  {Wen}},\ }\bibfield  {title} {\enquote {\bibinfo {title} {Quasiadiabatic
  continuation of quantum states: The stability of topological ground-state
  degeneracy and emergent gauge invariance},}\ }\href {\doibase
  10.1103/PhysRevB.72.045141} {\bibfield  {journal} {\bibinfo  {journal} {Phys.
  Rev. B}\ }\textbf {\bibinfo {volume} {72}},\ \bibinfo {pages} {045141}
  (\bibinfo {year} {2005})}\BibitemShut {NoStop}%
\bibitem [{\citenamefont {Zhang}\ \emph {et~al.}(2011)\citenamefont {Zhang},
  \citenamefont {Grover},\ and\ \citenamefont {Vishwanath}}]{Zhang11}%
  \BibitemOpen
  \bibfield  {author} {\bibinfo {author} {\bibfnamefont {Yi}~\bibnamefont
  {Zhang}}, \bibinfo {author} {\bibfnamefont {Tarun}\ \bibnamefont {Grover}}, \
  and\ \bibinfo {author} {\bibfnamefont {Ashvin}\ \bibnamefont {Vishwanath}},\
  }\bibfield  {title} {\enquote {\bibinfo {title} {Topological entanglement
  entropy of ${\mathbb{z}}_{2}$ spin liquids and lattice laughlin states},}\
  }\href {\doibase 10.1103/PhysRevB.84.075128} {\bibfield  {journal} {\bibinfo
  {journal} {Phys. Rev. B}\ }\textbf {\bibinfo {volume} {84}},\ \bibinfo
  {pages} {075128} (\bibinfo {year} {2011})}\BibitemShut {NoStop}%
\bibitem [{\citenamefont {{Isakov}}\ \emph {et~al.}(2011)\citenamefont
  {{Isakov}}, \citenamefont {{Hastings}},\ and\ \citenamefont
  {{Melko}}}]{Isakov11}%
  \BibitemOpen
  \bibfield  {author} {\bibinfo {author} {\bibfnamefont {S.~V.}\ \bibnamefont
  {{Isakov}}}, \bibinfo {author} {\bibfnamefont {M.~B.}\ \bibnamefont
  {{Hastings}}}, \ and\ \bibinfo {author} {\bibfnamefont {R.~G.}\ \bibnamefont
  {{Melko}}},\ }\bibfield  {title} {\enquote {\bibinfo {title} {{Topological
  entanglement entropy of a Bose-Hubbard spin liquid}},}\ }\href {\doibase
  10.1038/nphys2036} {\bibfield  {journal} {\bibinfo  {journal} {Nature
  Physics}\ }\textbf {\bibinfo {volume} {7}},\ \bibinfo {pages} {772--775}
  (\bibinfo {year} {2011})},\ \Eprint {http://arxiv.org/abs/1102.1721}
  {arXiv:1102.1721 [cond-mat.str-el]} \BibitemShut {NoStop}%
\bibitem [{\citenamefont {{Jiang}}\ \emph {et~al.}(2012)\citenamefont
  {{Jiang}}, \citenamefont {{Wang}},\ and\ \citenamefont
  {{Balents}}}]{Jiang12}%
  \BibitemOpen
  \bibfield  {author} {\bibinfo {author} {\bibfnamefont {H.-C.}\ \bibnamefont
  {{Jiang}}}, \bibinfo {author} {\bibfnamefont {Z.}~\bibnamefont {{Wang}}}, \
  and\ \bibinfo {author} {\bibfnamefont {L.}~\bibnamefont {{Balents}}},\
  }\bibfield  {title} {\enquote {\bibinfo {title} {{Identifying topological
  order by entanglement entropy}},}\ }\href {\doibase 10.1038/nphys2465}
  {\bibfield  {journal} {\bibinfo  {journal} {Nature Physics}\ }\textbf
  {\bibinfo {volume} {8}},\ \bibinfo {pages} {902--905} (\bibinfo {year}
  {2012})},\ \Eprint {http://arxiv.org/abs/1205.4289} {arXiv:1205.4289
  [cond-mat.str-el]} \BibitemShut {NoStop}%
\bibitem [{\citenamefont {Kitaev}\ and\ \citenamefont
  {Preskill}(2006)}]{KitaevP06}%
  \BibitemOpen
  \bibfield  {author} {\bibinfo {author} {\bibfnamefont {Alexei}\ \bibnamefont
  {Kitaev}}\ and\ \bibinfo {author} {\bibfnamefont {John}\ \bibnamefont
  {Preskill}},\ }\bibfield  {title} {\enquote {\bibinfo {title} {Topological
  entanglement entropy},}\ }\href {\doibase 10.1103/PhysRevLett.96.110404}
  {\bibfield  {journal} {\bibinfo  {journal} {Phys. Rev. Lett.}\ }\textbf
  {\bibinfo {volume} {96}},\ \bibinfo {pages} {110404} (\bibinfo {year}
  {2006})},\ \Eprint {http://arxiv.org/abs/hep-th/0510092}
  {arXiv:hep-th/0510092 [hep-th]} \BibitemShut {NoStop}%
\bibitem [{\citenamefont {Hamma}\ \emph {et~al.}(2005)\citenamefont {Hamma},
  \citenamefont {Ionicioiu},\ and\ \citenamefont
  {Zanardi}}]{PhysRevA.71.022315}%
  \BibitemOpen
  \bibfield  {author} {\bibinfo {author} {\bibfnamefont {Alioscia}\
  \bibnamefont {Hamma}}, \bibinfo {author} {\bibfnamefont {Radu}\ \bibnamefont
  {Ionicioiu}}, \ and\ \bibinfo {author} {\bibfnamefont {Paolo}\ \bibnamefont
  {Zanardi}},\ }\bibfield  {title} {\enquote {\bibinfo {title} {Bipartite
  entanglement and entropic boundary law in lattice spin systems},}\ }\href
  {\doibase 10.1103/PhysRevA.71.022315} {\bibfield  {journal} {\bibinfo
  {journal} {Phys. Rev. A}\ }\textbf {\bibinfo {volume} {71}},\ \bibinfo
  {pages} {022315} (\bibinfo {year} {2005})}\BibitemShut {NoStop}%
\bibitem [{\citenamefont {Zou}\ and\ \citenamefont
  {Haah}(2016)}]{PhysRevB.94.075151}%
  \BibitemOpen
  \bibfield  {author} {\bibinfo {author} {\bibfnamefont {L.}~\bibnamefont
  {Zou}}\ and\ \bibinfo {author} {\bibfnamefont {J.}~\bibnamefont {Haah}},\
  }\bibfield  {title} {\enquote {\bibinfo {title} {Spurious long-range
  entanglement and replica correlation length},}\ }\href {\doibase
  10.1103/PhysRevB.94.075151} {\bibfield  {journal} {\bibinfo  {journal} {Phys.
  Rev. B}\ }\textbf {\bibinfo {volume} {94}},\ \bibinfo {pages} {075151}
  (\bibinfo {year} {2016})}\BibitemShut {NoStop}%
\bibitem [{\citenamefont {{Williamson}}\ \emph {et~al.}(2019)\citenamefont
  {{Williamson}}, \citenamefont {{Dua}},\ and\ \citenamefont
  {{Cheng}}}]{Williamson18}%
  \BibitemOpen
  \bibfield  {author} {\bibinfo {author} {\bibfnamefont {D.~J.}\ \bibnamefont
  {{Williamson}}}, \bibinfo {author} {\bibfnamefont {A.}~\bibnamefont {{Dua}}},
  \ and\ \bibinfo {author} {\bibfnamefont {M.}~\bibnamefont {{Cheng}}},\
  }\bibfield  {title} {\enquote {\bibinfo {title} {{Spurious topological
  entanglement entropy from subsystem symmetries}},}\ }\href {\doibase
  10.1103/PhysRevLett.122.140506} {\bibfield  {journal} {\bibinfo  {journal}
  {Phys. Rev. Lett.}\ }\textbf {\bibinfo {volume} {122}},\ \bibinfo {pages}
  {140506} (\bibinfo {year} {2019})},\ \Eprint
  {http://arxiv.org/abs/1808.05221} {arXiv:1808.05221 [quant-ph]} \BibitemShut
  {NoStop}%
\bibitem [{\citenamefont {Kennedy}\ and\ \citenamefont
  {Tasaki}(1992)}]{kennedy1992}%
  \BibitemOpen
  \bibfield  {author} {\bibinfo {author} {\bibfnamefont {Tom}\ \bibnamefont
  {Kennedy}}\ and\ \bibinfo {author} {\bibfnamefont {Hal}\ \bibnamefont
  {Tasaki}},\ }\bibfield  {title} {\enquote {\bibinfo {title} {Hidden symmetry
  breaking and the haldane phase in $s=1$ quantum spin chains},}\ }\href
  {https://projecteuclid.org:443/euclid.cmp/1104250747} {\bibfield  {journal}
  {\bibinfo  {journal} {Comm. Math. Phys.}\ }\textbf {\bibinfo {volume}
  {147}},\ \bibinfo {pages} {431--484} (\bibinfo {year} {1992})}\BibitemShut
  {NoStop}%
\bibitem [{\citenamefont {Haah}(2016)}]{Haah16}%
  \BibitemOpen
  \bibfield  {author} {\bibinfo {author} {\bibfnamefont {Jeongwan}\
  \bibnamefont {Haah}},\ }\bibfield  {title} {\enquote {\bibinfo {title} {An
  invariant of topologically ordered states under local unitary
  transformations},}\ }\href {\doibase 10.1007/s00220-016-2594-y} {\bibfield
  {journal} {\bibinfo  {journal} {Commun. Math. Phys.}\ }\textbf {\bibinfo
  {volume} {342}},\ \bibinfo {pages} {771--801} (\bibinfo {year} {2016})},\
  \Eprint {http://arxiv.org/abs/1407.2926} {arXiv:1407.2926 [quant-ph]}
  \BibitemShut {NoStop}%
\bibitem [{\citenamefont {Verlinde}(1988)}]{Verlinde88}%
  \BibitemOpen
  \bibfield  {author} {\bibinfo {author} {\bibfnamefont {Erik}\ \bibnamefont
  {Verlinde}},\ }\bibfield  {title} {\enquote {\bibinfo {title} {Fusion rules
  and modular transformations in {$2$}{D} conformal field theory},}\ }\href
  {\doibase 10.1016/0550-3213(88)90603-7} {\bibfield  {journal} {\bibinfo
  {journal} {Nuclear Phys. B}\ }\textbf {\bibinfo {volume} {300}},\ \bibinfo
  {pages} {360--376} (\bibinfo {year} {1988})}\BibitemShut {NoStop}%
\bibitem [{\citenamefont {Wang}(2010)}]{Wang}%
  \BibitemOpen
  \bibfield  {author} {\bibinfo {author} {\bibfnamefont {Zhenghan}\
  \bibnamefont {Wang}},\ }\href@noop {} {\emph {\bibinfo {title} {Topological
  Quantum Computation}}},\ \bibinfo {series} {CBMS Regional Conference Series
  in Mathematics}, Vol.\ \bibinfo {volume} {112}\ (\bibinfo  {publisher}
  {Published for the Conference Board of the Mathematical Sciences, Washington,
  DC},\ \bibinfo {year} {2010})\ pp.\ \bibinfo {pages} {xiii+115}\BibitemShut
  {NoStop}%
\bibitem [{\citenamefont {Levin}\ and\ \citenamefont {Wen}(2005)}]{LevinW05}%
  \BibitemOpen
  \bibfield  {author} {\bibinfo {author} {\bibfnamefont {Michael~A.}\
  \bibnamefont {Levin}}\ and\ \bibinfo {author} {\bibfnamefont {Xiao-Gang}\
  \bibnamefont {Wen}},\ }\bibfield  {title} {\enquote {\bibinfo {title}
  {String-net condensation: A physical mechanism for topological phases},}\
  }\href {\doibase 10.1103/PhysRevB.71.045110} {\bibfield  {journal} {\bibinfo
  {journal} {Phys. Rev. B}\ }\textbf {\bibinfo {volume} {71}},\ \bibinfo
  {pages} {045110} (\bibinfo {year} {2005})},\ \Eprint
  {http://arxiv.org/abs/cond-mat/0404617} {arXiv:cond-mat/0404617 [cond-mat]}
  \BibitemShut {NoStop}%
\bibitem [{\citenamefont {Michalakis}\ and\ \citenamefont
  {Zwolak}(2013)}]{MichalakisZ13}%
  \BibitemOpen
  \bibfield  {author} {\bibinfo {author} {\bibfnamefont {Spyridon}\
  \bibnamefont {Michalakis}}\ and\ \bibinfo {author} {\bibfnamefont
  {Justyna~P.}\ \bibnamefont {Zwolak}},\ }\bibfield  {title} {\enquote
  {\bibinfo {title} {Stability of frustration-free {H}amiltonians},}\ }\href
  {\doibase 10.1007/s00220-013-1762-6} {\bibfield  {journal} {\bibinfo
  {journal} {Commun. Math. Phys.}\ }\textbf {\bibinfo {volume} {322}},\
  \bibinfo {pages} {277--302} (\bibinfo {year} {2013})},\ \Eprint
  {http://arxiv.org/abs/1109.1588} {arXiv:1109.1588 [quant-ph]} \BibitemShut
  {NoStop}%
\bibitem [{\citenamefont {Bravyi}\ \emph {et~al.}(2010)\citenamefont {Bravyi},
  \citenamefont {Hastings},\ and\ \citenamefont {Michalakis}}]{BravyiHM10}%
  \BibitemOpen
  \bibfield  {author} {\bibinfo {author} {\bibfnamefont {Sergey}\ \bibnamefont
  {Bravyi}}, \bibinfo {author} {\bibfnamefont {Matthew~B.}\ \bibnamefont
  {Hastings}}, \ and\ \bibinfo {author} {\bibfnamefont {Spyridon}\ \bibnamefont
  {Michalakis}},\ }\bibfield  {title} {\enquote {\bibinfo {title} {Topological
  quantum order: Stability under local perturbations},}\ }\href {\doibase
  10.1063/1.3490195} {\bibfield  {journal} {\bibinfo  {journal} {J. Math.
  Phys.}\ }\textbf {\bibinfo {volume} {51}},\ \bibinfo {pages} {093512}
  (\bibinfo {year} {2010})},\ \Eprint {http://arxiv.org/abs/1001.0344}
  {arXiv:1001.0344 [quant-ph]} \BibitemShut {NoStop}%
\bibitem [{\citenamefont {{Bravyi}}\ and\ \citenamefont
  {{Hastings}}(2011)}]{BravyiH11}%
  \BibitemOpen
  \bibfield  {author} {\bibinfo {author} {\bibfnamefont {S.}~\bibnamefont
  {{Bravyi}}}\ and\ \bibinfo {author} {\bibfnamefont {M.~B.}\ \bibnamefont
  {{Hastings}}},\ }\bibfield  {title} {\enquote {\bibinfo {title} {A short
  proof of stability of topological order under local perturbations},}\
  }\href@noop {} {\bibfield  {journal} {\bibinfo  {journal} {Comm. Math.
  Phys.}\ }\textbf {\bibinfo {volume} {307}},\ \bibinfo {pages} {609--627}
  (\bibinfo {year} {2011})},\ \Eprint {http://arxiv.org/abs/1001.4363}
  {arXiv:1001.4363 [math-ph]} \BibitemShut {NoStop}%
\bibitem [{\citenamefont {Takesaki}(2002)}]{TakesakiI}%
  \BibitemOpen
  \bibfield  {author} {\bibinfo {author} {\bibfnamefont {M.}~\bibnamefont
  {Takesaki}},\ }\href@noop {} {\emph {\bibinfo {title} {Theory of operator
  algebras. {I}}}},\ \bibinfo {series} {Encyclopaedia of Mathematical
  Sciences}, Vol.\ \bibinfo {volume} {124}\ (\bibinfo  {publisher}
  {Springer-Verlag},\ \bibinfo {address} {Berlin},\ \bibinfo {year} {2002})\
  pp.\ \bibinfo {pages} {xx+415}\BibitemShut {NoStop}%
\bibitem [{\citenamefont {{Bachmann}}\ \emph {et~al.}(2012)\citenamefont
  {{Bachmann}}, \citenamefont {{Michalakis}}, \citenamefont {{Nachtergaele}},\
  and\ \citenamefont {{Sims}}}]{BachmannMNS12}%
  \BibitemOpen
  \bibfield  {author} {\bibinfo {author} {\bibfnamefont {S.}~\bibnamefont
  {{Bachmann}}}, \bibinfo {author} {\bibfnamefont {S.}~\bibnamefont
  {{Michalakis}}}, \bibinfo {author} {\bibfnamefont {B.}~\bibnamefont
  {{Nachtergaele}}}, \ and\ \bibinfo {author} {\bibfnamefont {R.}~\bibnamefont
  {{Sims}}},\ }\bibfield  {title} {\enquote {\bibinfo {title} {{Automorphic
  Equivalence within Gapped Phases of Quantum Lattice Systems}},}\ }\href
  {\doibase 10.1007/s00220-011-1380-0} {\bibfield  {journal} {\bibinfo
  {journal} {Commun. Math. Phys.}\ }\textbf {\bibinfo {volume} {309}},\
  \bibinfo {pages} {835--871} (\bibinfo {year} {2012})},\ \Eprint
  {http://arxiv.org/abs/1102.0842} {arXiv:1102.0842 [math-ph]} \BibitemShut
  {NoStop}%
\bibitem [{\citenamefont {{Shi}}\ and\ \citenamefont {{Lu}}(2019)}]{Shi18}%
  \BibitemOpen
  \bibfield  {author} {\bibinfo {author} {\bibfnamefont {B.}~\bibnamefont
  {{Shi}}}\ and\ \bibinfo {author} {\bibfnamefont {Y.-M.}\ \bibnamefont
  {{Lu}}},\ }\bibfield  {title} {\enquote {\bibinfo {title} {{Characterizing
  topological orders by the information convex}},}\ }\href {\doibase
  10.1103/PhysRevB.99.035112} {\bibfield  {journal} {\bibinfo  {journal} {Phys.
  Rev. B}\ }\textbf {\bibinfo {volume} {99}},\ \bibinfo {pages} {035112}
  (\bibinfo {year} {2019})},\ \Eprint {http://arxiv.org/abs/1801.01519}
  {arXiv:1801.01519 [cond-mat.str-el]} \BibitemShut {NoStop}%
\bibitem [{\citenamefont {Almheiri}\ \emph {et~al.}(2015)\citenamefont
  {Almheiri}, \citenamefont {Dong},\ and\ \citenamefont
  {Harlow}}]{Almheiri2015}%
  \BibitemOpen
  \bibfield  {author} {\bibinfo {author} {\bibfnamefont {Ahmed}\ \bibnamefont
  {Almheiri}}, \bibinfo {author} {\bibfnamefont {Xi}~\bibnamefont {Dong}}, \
  and\ \bibinfo {author} {\bibfnamefont {Daniel}\ \bibnamefont {Harlow}},\
  }\bibfield  {title} {\enquote {\bibinfo {title} {Bulk locality and quantum
  error correction in {AdS/CFT}},}\ }\href {\doibase 10.1007/JHEP04(2015)163}
  {\bibfield  {journal} {\bibinfo  {journal} {Journal of High Energy Physics}\
  }\textbf {\bibinfo {volume} {2015}},\ \bibinfo {pages} {163} (\bibinfo {year}
  {2015})}\BibitemShut {NoStop}%
\bibitem [{\citenamefont {Datta}(2009)}]{4957651}%
  \BibitemOpen
  \bibfield  {author} {\bibinfo {author} {\bibfnamefont {N.}~\bibnamefont
  {Datta}},\ }\bibfield  {title} {\enquote {\bibinfo {title} {Min- and
  max-relative entropies and a new entanglement monotone},}\ }\href {\doibase
  10.1109/TIT.2009.2018325} {\bibfield  {journal} {\bibinfo  {journal} {IEEE
  Transactions on Information Theory}\ }\textbf {\bibinfo {volume} {55}},\
  \bibinfo {pages} {2816--2826} (\bibinfo {year} {2009})}\BibitemShut {NoStop}%
\bibitem [{\citenamefont {Jones}\ and\ \citenamefont
  {Sunder}(1997)}]{JonesSunder}%
  \BibitemOpen
  \bibfield  {author} {\bibinfo {author} {\bibfnamefont {V.}~\bibnamefont
  {Jones}}\ and\ \bibinfo {author} {\bibfnamefont {V.~S.}\ \bibnamefont
  {Sunder}},\ }\href {\doibase 10.1017/CBO9780511566219} {\emph {\bibinfo
  {title} {Introduction to subfactors}}},\ \bibinfo {series} {London
  Mathematical Society Lecture Note Series}, Vol.\ \bibinfo {volume} {234}\
  (\bibinfo  {publisher} {Cambridge University Press},\ \bibinfo {address}
  {Cambridge},\ \bibinfo {year} {1997})\ pp.\ \bibinfo {pages}
  {xii+162}\BibitemShut {NoStop}%
\bibitem [{\citenamefont {Shi}\ and\ \citenamefont
  {Lu}(2018)}]{PhysRevB.97.144106}%
  \BibitemOpen
  \bibfield  {author} {\bibinfo {author} {\bibfnamefont {Bowen}\ \bibnamefont
  {Shi}}\ and\ \bibinfo {author} {\bibfnamefont {Yuan-Ming}\ \bibnamefont
  {Lu}},\ }\bibfield  {title} {\enquote {\bibinfo {title} {Deciphering the
  nonlocal entanglement entropy of fracton topological orders},}\ }\href
  {\doibase 10.1103/PhysRevB.97.144106} {\bibfield  {journal} {\bibinfo
  {journal} {Phys. Rev. B}\ }\textbf {\bibinfo {volume} {97}},\ \bibinfo
  {pages} {144106} (\bibinfo {year} {2018})}\BibitemShut {NoStop}%
\bibitem [{Note1()}]{Note1}%
  \BibitemOpen
  \bibinfo {note} {More generally, there are additional constant terms which
  depends on the shape of the corners of the region.}\BibitemShut {Stop}%
\bibitem [{\citenamefont {Flammia}\ \emph {et~al.}(2009)\citenamefont
  {Flammia}, \citenamefont {Hamma}, \citenamefont {Hughes},\ and\ \citenamefont
  {Wen}}]{Flammia09}%
  \BibitemOpen
  \bibfield  {author} {\bibinfo {author} {\bibfnamefont {Steven~T.}\
  \bibnamefont {Flammia}}, \bibinfo {author} {\bibfnamefont {Alioscia}\
  \bibnamefont {Hamma}}, \bibinfo {author} {\bibfnamefont {Taylor~L.}\
  \bibnamefont {Hughes}}, \ and\ \bibinfo {author} {\bibfnamefont {Xiao-Gang}\
  \bibnamefont {Wen}},\ }\bibfield  {title} {\enquote {\bibinfo {title}
  {Topological entanglement {R}\'enyi entropy and reduced density matrix
  structure},}\ }\href {\doibase 10.1103/PhysRevLett.103.261601} {\bibfield
  {journal} {\bibinfo  {journal} {Phys. Rev. Lett.}\ }\textbf {\bibinfo
  {volume} {103}},\ \bibinfo {pages} {261601} (\bibinfo {year}
  {2009})}\BibitemShut {NoStop}%
\bibitem [{\citenamefont {Kato}\ \emph {et~al.}(2016)\citenamefont {Kato},
  \citenamefont {Furrer},\ and\ \citenamefont {Murao}}]{KatoFM16}%
  \BibitemOpen
  \bibfield  {author} {\bibinfo {author} {\bibfnamefont {Kohtaro}\ \bibnamefont
  {Kato}}, \bibinfo {author} {\bibfnamefont {Fabian}\ \bibnamefont {Furrer}}, \
  and\ \bibinfo {author} {\bibfnamefont {Mio}\ \bibnamefont {Murao}},\
  }\bibfield  {title} {\enquote {\bibinfo {title} {Information-theoretical
  analysis of topological entanglement entropy and multipartite
  correlations},}\ }\href {\doibase 10.1103/PhysRevA.93.022317} {\bibfield
  {journal} {\bibinfo  {journal} {Phys. Rev. A}\ }\textbf {\bibinfo {volume}
  {93}},\ \bibinfo {pages} {022317} (\bibinfo {year} {2016})},\ \Eprint
  {http://arxiv.org/abs/1505.01917} {arXiv:1505.01917 [quant-ph]} \BibitemShut
  {NoStop}%
\bibitem [{\citenamefont {Fiedler}\ \emph {et~al.}(2017)\citenamefont
  {Fiedler}, \citenamefont {Naaijkens},\ and\ \citenamefont
  {Osborne}}]{FiedlerNO17}%
  \BibitemOpen
  \bibfield  {author} {\bibinfo {author} {\bibfnamefont {Leander}\ \bibnamefont
  {Fiedler}}, \bibinfo {author} {\bibfnamefont {Pieter}\ \bibnamefont
  {Naaijkens}}, \ and\ \bibinfo {author} {\bibfnamefont {Tobias~J.}\
  \bibnamefont {Osborne}},\ }\bibfield  {title} {\enquote {\bibinfo {title}
  {Jones index, secret sharing and total quantum dimension},}\ }\href {\doibase
  10.1088/1367-2630/aa5c0c} {\bibfield  {journal} {\bibinfo  {journal} {New J.
  Phys.}\ }\textbf {\bibinfo {volume} {19}},\ \bibinfo {pages} {023039}
  (\bibinfo {year} {2017})},\ \Eprint {http://arxiv.org/abs/1608.02618}
  {arXiv:1608.02618 [quant-ph]} \BibitemShut {NoStop}%
\bibitem [{\citenamefont {Bowen~Shi}\ and\ \citenamefont {Kim}(2019)}]{BKK19}%
  \BibitemOpen
  \bibfield  {author} {\bibinfo {author} {\bibfnamefont {Kohtaro~Kato}\
  \bibnamefont {Bowen~Shi}}\ and\ \bibinfo {author} {\bibfnamefont {Isaac~H.}\
  \bibnamefont {Kim}},\ }\bibfield  {title} {\enquote {\bibinfo {title} {Fusion
  rules from entanglement},}\ }\href@noop {} {\bibfield  {journal} {\bibinfo
  {journal} {arXiv preprint}\ } (\bibinfo {year} {2019})}\BibitemShut {NoStop}%
\bibitem [{\citenamefont {Bravyi}(2008)}]{BravyiCEX}%
  \BibitemOpen
  \bibfield  {author} {\bibinfo {author} {\bibfnamefont {S.}~\bibnamefont
  {Bravyi}},\ }\href@noop {} {} (\bibinfo {year} {2008}),\ \bibinfo {note}
  {unpublished}\BibitemShut {NoStop}%
\bibitem [{\citenamefont {Kato}\ \emph {et~al.}(2014)\citenamefont {Kato},
  \citenamefont {Furrer},\ and\ \citenamefont {Murao}}]{KatoFM14}%
  \BibitemOpen
  \bibfield  {author} {\bibinfo {author} {\bibfnamefont {Kohtaro}\ \bibnamefont
  {Kato}}, \bibinfo {author} {\bibfnamefont {Fabian}\ \bibnamefont {Furrer}}, \
  and\ \bibinfo {author} {\bibfnamefont {Mio}\ \bibnamefont {Murao}},\
  }\bibfield  {title} {\enquote {\bibinfo {title} {Information-theoretical
  formulation of anyonic entanglement},}\ }\href {\doibase
  10.1103/PhysRevA.90.062325} {\bibfield  {journal} {\bibinfo  {journal} {Phys.
  Rev. A}\ }\textbf {\bibinfo {volume} {90}},\ \bibinfo {pages} {062325}
  (\bibinfo {year} {2014})},\ \Eprint {http://arxiv.org/abs/1310.4140}
  {arXiv:1310.4140 [quant-ph]} \BibitemShut {NoStop}%
\bibitem [{\citenamefont {{Zhou}}(2008)}]{PhysRevLett.101.180505}%
  \BibitemOpen
  \bibfield  {author} {\bibinfo {author} {\bibfnamefont {D.~L.}\ \bibnamefont
  {{Zhou}}},\ }\bibfield  {title} {\enquote {\bibinfo {title} {Irreducible
  multiparty correlations in quantum states without maximal rank},}\ }\href
  {\doibase 10.1103/PhysRevLett.101.180505} {\bibfield  {journal} {\bibinfo
  {journal} {Phys. Rev. Lett.}\ }\textbf {\bibinfo {volume} {101}},\ \bibinfo
  {pages} {180505} (\bibinfo {year} {2008})}\BibitemShut {NoStop}%
\bibitem [{\citenamefont {{Weis}}(2014)}]{Weis10}%
  \BibitemOpen
  \bibfield  {author} {\bibinfo {author} {\bibfnamefont {S.}~\bibnamefont
  {{Weis}}},\ }\bibfield  {title} {\enquote {\bibinfo {title} {{Information
  topologies on non-commutative state spaces}},}\ }\href@noop {} {\bibfield
  {journal} {\bibinfo  {journal} {J. Conv. Anal.}\ }\textbf {\bibinfo {volume}
  {21}},\ \bibinfo {pages} {339--399} (\bibinfo {year} {2014})}\BibitemShut
  {NoStop}%
\bibitem [{\citenamefont {Scholz}\ and\ \citenamefont
  {Werner}(2008)}]{ScholzWerner08}%
  \BibitemOpen
  \bibfield  {author} {\bibinfo {author} {\bibfnamefont {V.~B.}\ \bibnamefont
  {Scholz}}\ and\ \bibinfo {author} {\bibfnamefont {R.~F.}\ \bibnamefont
  {Werner}},\ }\href@noop {} {\enquote {\bibinfo {title} {Tsirelson's
  problem},}\ } (\bibinfo {year} {2008}),\ \bibinfo {note} {preprint},\ \Eprint
  {http://arxiv.org/abs/0812.4305} {arXiv:0812.4305} \BibitemShut {NoStop}%
\bibitem [{\citenamefont {Naaijkens}(2017)}]{Naaijkens17}%
  \BibitemOpen
  \bibfield  {author} {\bibinfo {author} {\bibfnamefont {Pieter}\ \bibnamefont
  {Naaijkens}},\ }\href {\doibase 10.1007/978-3-319-51458-1} {\emph {\bibinfo
  {title} {Quantum spin systems on infinite lattices: a concise
  introduction}}},\ \bibinfo {series} {Lecture Notes in Physics}, Vol.\
  \bibinfo {volume} {933}\ (\bibinfo  {publisher} {Springer, Cham},\ \bibinfo
  {year} {2017})\ pp.\ \bibinfo {pages} {xi+177},\ \Eprint
  {http://arxiv.org/abs/1311.2717} {arXiv:1311.2717 [math-ph]} \BibitemShut
  {NoStop}%
\bibitem [{\citenamefont {Drinfel'd}(1987)}]{Drinfeld87}%
  \BibitemOpen
  \bibfield  {author} {\bibinfo {author} {\bibfnamefont {V.~G.}\ \bibnamefont
  {Drinfel'd}},\ }\bibfield  {title} {\enquote {\bibinfo {title} {Quantum
  groups},}\ }in\ \href@noop {} {\emph {\bibinfo {booktitle} {Proceedings of
  the {I}nternational {C}ongress of {M}athematicians, {V}ol. 1, 2 ({B}erkeley,
  {C}alif., 1986)}}}\ (\bibinfo  {publisher} {Amer. Math. Soc.},\ \bibinfo
  {address} {Providence, RI},\ \bibinfo {year} {1987})\ pp.\ \bibinfo {pages}
  {798--820}\BibitemShut {NoStop}%
\bibitem [{\citenamefont {Dijkgraaf}\ \emph {et~al.}(1991)\citenamefont
  {Dijkgraaf}, \citenamefont {Pasquier},\ and\ \citenamefont
  {Roche}}]{DijkgraafPR91}%
  \BibitemOpen
  \bibfield  {author} {\bibinfo {author} {\bibfnamefont {R.}~\bibnamefont
  {Dijkgraaf}}, \bibinfo {author} {\bibfnamefont {V.}~\bibnamefont {Pasquier}},
  \ and\ \bibinfo {author} {\bibfnamefont {P.}~\bibnamefont {Roche}},\
  }\bibfield  {title} {\enquote {\bibinfo {title} {Quasi {H}opf algebras, group
  cohomology and orbifold models},}\ }\href {\doibase
  10.1016/0920-5632(91)90123-V} {\bibfield  {journal} {\bibinfo  {journal}
  {Nuclear Phys. B Proc. Suppl.}\ }\textbf {\bibinfo {volume} {18B}},\ \bibinfo
  {pages} {60--72} (\bibinfo {year} {1991})},\ \bibinfo {note} {recent advances
  in field theory (Annecy-le-Vieux, 1990)}\BibitemShut {NoStop}%
\bibitem [{\citenamefont {Bombin}\ and\ \citenamefont
  {Martin-Delgado}(2008)}]{BombinMD08}%
  \BibitemOpen
  \bibfield  {author} {\bibinfo {author} {\bibfnamefont {H.}~\bibnamefont
  {Bombin}}\ and\ \bibinfo {author} {\bibfnamefont {M.~A.}\ \bibnamefont
  {Martin-Delgado}},\ }\bibfield  {title} {\enquote {\bibinfo {title} {Family
  of non-{A}belian {K}itaev models on a lattice: {T}opological condensation and
  confinement},}\ }\href {\doibase 10.1103/PhysRevB.78.115421} {\bibfield
  {journal} {\bibinfo  {journal} {Phys. Rev. B}\ }\textbf {\bibinfo {volume}
  {78}},\ \bibinfo {pages} {115421} (\bibinfo {year} {2008})},\ \Eprint
  {http://arxiv.org/abs/0712.0190} {arXiv:0712.0190 [cond-mat.str-el]}
  \BibitemShut {NoStop}%
\bibitem [{\citenamefont {Shi}(2018)}]{BShipriv}%
  \BibitemOpen
  \bibfield  {author} {\bibinfo {author} {\bibfnamefont {B.}~\bibnamefont
  {Shi}},\ }\href@noop {} {\bibfield  {journal} {\bibinfo  {journal} {private
  communication}\ } (\bibinfo {year} {2018})}\BibitemShut {NoStop}%
\bibitem [{\citenamefont {Preskill}(2004)}]{Preskill}%
  \BibitemOpen
  \bibfield  {author} {\bibinfo {author} {\bibfnamefont {John}\ \bibnamefont
  {Preskill}},\ }\href@noop {} {\enquote {\bibinfo {title} {Topological quantum
  computation},}\ } (\bibinfo {year} {2004}),\ \bibinfo {note} {chapter 9 of
  lecture notes available at
  \url{http://www.theory.caltech.edu/people/preskill/ph229/}}\BibitemShut
  {NoStop}%
\bibitem [{\citenamefont {Cui}\ \emph {et~al.}(2015)\citenamefont {Cui},
  \citenamefont {Hong},\ and\ \citenamefont {Wang}}]{CuiHW2015}%
  \BibitemOpen
  \bibfield  {author} {\bibinfo {author} {\bibfnamefont {Shawn~X.}\
  \bibnamefont {Cui}}, \bibinfo {author} {\bibfnamefont {Seung-Moon}\
  \bibnamefont {Hong}}, \ and\ \bibinfo {author} {\bibfnamefont {Zhenghan}\
  \bibnamefont {Wang}},\ }\bibfield  {title} {\enquote {\bibinfo {title}
  {Universal quantum computation with weakly integral anyons},}\ }\href
  {\doibase 10.1007/s11128-015-1016-y} {\bibfield  {journal} {\bibinfo
  {journal} {Quantum Information Processing}\ }\textbf {\bibinfo {volume}
  {14}},\ \bibinfo {pages} {2687--2727} (\bibinfo {year} {2015})}\BibitemShut
  {NoStop}%
\bibitem [{Note2()}]{Note2}%
  \BibitemOpen
  \bibinfo {note} {Equivalently one could look at the space $\protect
  \operatorname {Hom}(\iota , \tau ^{\otimes n})$, which could be interpreted
  as all the ways to create $\tau $ anyons out of the vacuum.}\BibitemShut
  {Stop}%
\bibitem [{\citenamefont {Naaijkens}(2018)}]{Naaijkens18}%
  \BibitemOpen
  \bibfield  {author} {\bibinfo {author} {\bibfnamefont {Pieter}\ \bibnamefont
  {Naaijkens}},\ }\bibfield  {title} {\enquote {\bibinfo {title} {Subfactors
  and quantum information theory},}\ }in\ \href {\doibase
  10.1090/conm/717/14453} {\emph {\bibinfo {booktitle} {Mathematical problems
  in quantum physics}}},\ \bibinfo {series} {Contemp. Math.}, Vol.\ \bibinfo
  {volume} {717}\ (\bibinfo  {publisher} {Amer. Math. Soc., Providence, RI},\
  \bibinfo {year} {2018})\ pp.\ \bibinfo {pages} {257--279},\ \Eprint
  {http://arxiv.org/abs/1704.05562} {arXiv:1704.05562 [math-ph]} \BibitemShut
  {NoStop}%
\bibitem [{\citenamefont {Cha}\ \emph {et~al.}(2019)\citenamefont {Cha},
  \citenamefont {Naaijkens},\ and\ \citenamefont {Nachtergaele}}]{ChaNN18}%
  \BibitemOpen
  \bibfield  {author} {\bibinfo {author} {\bibfnamefont {Matthew}\ \bibnamefont
  {Cha}}, \bibinfo {author} {\bibfnamefont {Pieter}\ \bibnamefont {Naaijkens}},
  \ and\ \bibinfo {author} {\bibfnamefont {Bruno}\ \bibnamefont
  {Nachtergaele}},\ }\bibfield  {title} {\enquote {\bibinfo {title} {On the
  stability of charges in infinite quantum spin systems},}\ }\href@noop {}
  {\bibfield  {journal} {\bibinfo  {journal} {Commun. Math. Phys.}\ } (\bibinfo
  {year} {2019})},\ \bibinfo {note} {in press. Preprint arXiv:1804.03203},\
  \Eprint {http://arxiv.org/abs/1804.03203} {arXiv:1804.03203 [math-ph]}
  \BibitemShut {NoStop}%
\end{thebibliography}
\end{document}